\newcommand{\diag}{\text{diag}}
\newcommand{\trace}{\text{Tr}\,}
\newcommand{\R}{\mathbb{R}}
\newcommand{\C}{\mathbb{C}}
\newcommand{\Id}{\text{Id}}
\newcommand{\vertiii}[1]{{\left\vert\kern-0.25ex\left\vert\kern-0.25ex\left\vert #1 
		\right\vert\kern-0.25ex\right\vert\kern-0.25ex\right\vert}}
\newcommand{\Bnorm}[1]{{\Vert #1 \Vert_{B}}}
\newcommand{\Binorm}[1]{{\Vert #1 \Vert_{B_i}}}
\newcommand{\Binormbig}[1]{{\Big\Vert #1 \Big\Vert_{B_i}}}
\let\saveqed\qed
\renewcommand\qed{%
	\ifmmode\displaymath@qed
	\else\saveqed
	\fi}
\numberwithin{equation}{section}
\newtheorem{theorem}{Theorem}[section]
\newtheorem{lemma}[theorem]{Lemma} 
\newtheorem{remark}[theorem]{Remark}
\newtheorem{definition}[theorem]{Definition}
\newtheorem{proposition}[theorem]{Proposition}
\newtheorem{corollary}[theorem]{Corollary}
\begin{document}
	\title{Blind Demixing and Deconvolution at Near-Optimal Rate
          \thanks{
            The results of this paper have been presented in part at the 
            International Workshop on Compressed Sensing Theory and
            its Applications to Radar, Sonar, and Remote Sensing
            (Cosera), Aachen, Germany 2016
            \cite{Stoeger:cosera16} and 21st International ITG
            Workshop on Smart Antenna 2017, Berlin, Germany \cite{Stoeger:wsa17}
          }}
      \author{Peter Jung\thanks{Communications and Information Theory, Technische Universit\"at Berlin, 10587 Berlin}, Felix Krahmer\thanks{Department of Mathematics, Technische Universit\"at M\"unchen, 85748 Garching/Munich, Germany}, Dominik St\"oger\footnotemark[3]}
	\maketitle

\begin{abstract}
  We consider simultaneous blind deconvolution of $r$ source signals
  from their noisy superposition, a problem also referred to \emph{blind
    demixing and deconvolution}. This signal processing problem occurs
  in the context of the Internet of Things where a massive number of
  sensors sporadically communicate only short messages over unknown
  channels. We show that robust recovery of message and channel
  vectors can be achieved via convex optimization when random linear
  encoding using i.i.d. complex Gaussian matrices is used
  at the devices and the number of required measurements at the
  receiver scales with the degrees of freedom of the overall
  estimation problem. Since the scaling is linear in $r$ our result
  significantly improves over recent works.
\end{abstract}

\section{Introduction}
Recent progress regarding recovery problems for low-complexity
structures in high-dimensional data have shown that a substantial
reduction in sampling and storage complexity can be achieved in many
relevant non--adaptive linear signal separation and estimation
problems, in particular in the case of randomized strategies.  This
includes the recovery of sparse and compressible vectors (often
referred to as {\em compressed sensing}) \cite{candes2006robust,
  donoho2006compressed}, low--rank matrices \cite{rechtfazel}, and
higher--order tensors from subsampled linear measurements
\cite{rauhut2017low}, as well as the compressive demixing of multiple
source signals \cite{mccoy2014sharp}. An important step in many of
such vector and matrix recovery problems is to establish computational
tractability in the sense of complexity theory; a common strategy to
achieve this is to show that, under appropriate assumptions on the
measurement map, the reconstruction problem can be recast as a
tractable convex program.

In practice, however, one faces additional difficulties. Namely, the
data acquisition process has to cope with uncalibrated measurement
devices depending on further unknown parameters. In many such
scenarios one can only sample the output of an unknown or partially
known linear system. In such cases the object/signal $s$ to
recover is coupled with the unknown or partially known environment $w$
in a multiplicative way giving rise to a \emph{bilinear inverse problem},
i.e., solve for $s$ and $w$ given a bilinear combination
$\mathcal{B}(w,s)$.  Relevant examples are when the effective sensing
matrix might be subject to uncertainties
\cite{balzano2007blind, Herman:generaldeviants,Chi:sensitivity,Gleichman2011}, or
signals might have been transmitted through individual channels whose
properties are not completely known \cite{wang1998blind}. Our current
understanding of these \emph{blind information retrieval} tasks is at
the very beginning and usually it forces one therefore to operate at
sub-optimal sensing rates, or else incur significant reconstruction
errors due to model mismatch. The situation is all the more
unsatisfactory, as such blind sampling problems are often much closer
to practical applications than the original linear models.
\subsection{Blind Deconvolution}
The prototypical bilinear mapping, practically relevant in many
applications, is the convolution 
\[
w\ast s:= (\sum_{j=1}^L {w}_j s_{k-j})_{k=1}^L.
\]
For technical reasons we will consider the circular convolution, where
the index difference $k-j$ is considered modulo $L$. The classical
convolution can be reduced to this setup by appropriate zero padding.
Then the corresponding inverse problem, that is, the problem of
recovering $s$ and $w$ from their convolution up to inherent
ambiguities, 
is known as \emph{blind deconvolution} \cite{Haykin1994}.
The precise role of $s$ and $w$ depends on the underlying application. In imaging, for example, the signal vector $s$ typically represents the image and $w$ is
an unknown blurring kernel \cite{Stockham1975a}. In communication
engineering, $w$ represents the channel parameters and the task is to demodulate and decode the signal information $s$ only having access to the {\em channel output}
$w\ast s$, and the important question is how much overhead is
required for coping with the unknown impulse response $w$ of the
communication channel \cite{Godard1980}.

 Obviously, without further
 constraining $s$ and $w$ the convolution $(s,w)\rightarrow w\ast s$ has many more degrees of freedom than measurements and is hence
 far from injective, exhibiting various kinds of ambiguities. The goal must then be to eliminate these ambiguities as much as possible by imposing structural constraints on the signal and the channel paramters. It should be noted that a scaling ambiguity will always remain, as any bilinear mapping $\mathcal{B}$ satisfies $\mathcal{B}(s,w)=\mathcal{B}(\lambda s,w/\lambda)$ for any $0\neq\lambda\in\mathbb{C}$ and can hence be injective only up to a multiplicative factor. Specific scenarios can give rise to additional ambiguities, as it has been investigated in \cite{Choudhary2014}.
 For more detailed discussions of ambiguities in the one-dimensional
 case such as
 shifts or reflections, see
 \cite{Choudhary2014a} and \cite{walk:asilomar16}.
 In any case,  additional constraints like sparsity and subspace priors, depending on the specific application,
 are necessary to make blind deconvolution feasible. It has been shown that
 sparsity in the canonical basis alone is not sufficient for these
 purposes \cite{Choudhary2015}, and for generic bases, the subspace dimensions and sparsity levels that yield injectivity have been exactly classified \cite{li2015unified, li2017identifiability, Kech2016}.
 
 Even when injectivity can be established, this does not directly yield a tractable reconstruction scheme. While a number of works have studied algorithms for recovery (see, e.g.,\cite{chan2000convergence, levin2009understandingdeconvolution, almeida2013blind}), the focus has mostly been on algorithmic performance rather than on recoverability guarantees.
 The search for algorithms allowing for guaranteed recovery has recently shown significant progress by taking a compressed sensing viewpoint, namely aiming to choose remaining degrees of freedom to reduce the degree of ill-posedness. 
 The first near-optimal rigourous recovery guarantees in a randomized setting have been established in
 \cite{ARR2012} 
 with high probability under the assumption that both the signal and the channel parameters lie in subspaces of small dimension, and one of them is chosen at random.
 %
 The main idea was to exploit that any bilinear map $\mathcal{B}(w,s)$ can be represented as a linear map in the outer product $ws^T$ of the two input vectors (this approach is often referred to as lifting) and hence analyzed using methods from the theory of low rank matrix recovery.
 More precisely, exploiting the fact that the (normalized, unitary)
 $L\times L$ discrete Fourier matrix $F$ diagonalizes the circular
 convolution to establish the representation
 \begin{equation}
   w\ast s:=\sqrt{L}\cdot F^{*}\diag(Fw)Fs,
 \end{equation}
 with $\diag(v)$ denoting the diagonal matrix with the entries of $v$ on its diagonal.

 Under the subspace model, where both the signal $s$ and the vector of
 channel parameters are assumed to lie in a known low-dimensional
 subspace and hence can be represented as $w=F^{*}B h$ and
 $s=F^{*}C\overline{x}/\sqrt{L}$, for given $B\in\mathbb{C}^{L\times K}$ and
 $C\in\mathbb{C}^{L\times N}$, this translates to
\begin{equation}
  y:=F(w\ast s)=\diag(Bh)C\overline{x}=:\mathcal{A}(hx^*), \label{eq:convF}
\end{equation}
where $\mathcal A$ is a linear map and $M^*$ denotes the adjoint of a matrix $M$, that is, its conjugate transpose.
%
This formulation yields a low rank recovery problem, as of all
potential matrices giving rise to measurements $y$, the rank one
matrix $hx^*$ is the one of the lowest rank.  Even though recovering a
low rank matrix from linear measurements is known to be, in general,
NP-hard \cite{chistov1984complexity}, it has been shown that under
appropriate random measurement models, one can establish recovery
guarantees for tractable algorithms with high probability
\cite{candes2011tight, gross2011recovering}.  While the results in
these works require more randomness than what is available in the
convolution setup due to the structure imposed by \eqref{eq:convF} and
hence do not apply directly, Ahmed, Recht, and Romberg \cite{ARR2012}
derived recovery guarantees for blind deconvolution. Their result
assumes that (i) $C$ has independent standard~Gaussian entries and
that (ii) $B^*B=1$ and $B$ is incoherent in two ways, namely that
$\mu^2_{\max}:=\frac{L}{K}\max_{\ell}\|b_\ell\|^2_{\ell_2}$ and
$\mu^2_h=L\cdot\max_{1\leq \ell\leq L}|b_\ell^*h|^2$ are
sufficiently small ($b_\ell$ are the columns of $B^*$).
  Under these assumptions, they showed that the unknown real $K\times N$--matrix $hx^*$ can be recovered
with overwhelming probability by nuclear norm minimization, that is, via the semidefinite program
\begin{equation}
  \min \lVert X\rVert_*\quad\text{s.t.}\quad \mathcal{A}(X)=y.
\end{equation}
Here, $\lVert X\rVert_*$ denotes the nuclear norm of the matrix $X$, which is defined to be the sum of its singular values.

Although nuclear norm
minimization is computational tractable, the lifted representation drastically increases the
size of the signal to be recovered.  Consequently, the resulting algorithm will be too slow
for most practical applications. The theoretical analysis of nuclear norm minimization has, however, paved the way for more efficient algorithms with similar guarantees. Namely, the recent work
\cite{Li2016} demonstrates that a gradient-based algorithm with a
suitable initialization can be used without lifting and in the regime
$\mu^2_h \max(K,N)\lesssim L/\log^2(L)$ which comes with considerably
reduced complexity. 

Finally, typical channel impulse responses $h$ exhibit further structural 
properties such as sparsity, which should be used as well.
However, the challenging extension of these works to sparsity models seems to
be much more involved. The difficulty with such models is that the lifted representation is both sparse and of low rank, and no straightforward tractable convex relaxation is known. In particular, minimizing convex combinations of nuclear and
$\ell_1$-norm regularizers has been shown to yield provably suboptimal recovery performance
\cite{Oymak2015}. Research regarding alternative convex
surrogates as for example in \cite{Richard2014} is only in its beginnings.
For this reason, some recent approaches ignore the rank constraint, just aiming for sparsity, as investigated  for the $\ell_1$--approach (sparse lift) in \cite{Ling2015}
and for the mixed $\ell_1/\ell_2$-case in \cite{Flinth2016}.

On the other hand, the search for non-convex alternatives to overcome
this obstacle is an active area of research. In particular, local
convergence guarantees as well as global convergence guarantees for
peaky signals have been derived in \cite{lee2017blind} for the sparse
power factorization method, an alternating minimization approach
originally introduced in \cite{Lee2013}, for the context of
deconvolution. The near-optimal recovery guarantees build on some
property similar to the restricted isometry property, which has been
derived in \cite{riplikeproperties} (for both inputs lying in random
subspaces). The search for global recovery guarantees in the sparsity
model without peakiness assumptions, however, remains open.


\subsection{Simultaneous Demixing and  Blind Deconvolution}
The extension of the model we shall consider here is blind
deconvolution and simultaneously demixing multiple source signals.
This setting is motivated by recent challenges in future wireless
multi--terminal communication scenarios for uncoordinated sporadic
communication \cite{Wunder2015:sparse5G,Jung2014}.  We consider the
prototypical case of $R$ transmitters each having an individual
information message encoded into the vector $x_i\in\mathbb{C}^{N_i}$
for $i=1,\ldots ,R$ using, for example, classical modulation alphabets
and error--correcting codes. In fact, such data could be independent
user data payloads or even correlated sensor readings on a common
source. For reasons of simplicity, we focus on the case of independent
data sources.  Each transmitter generates its transmit signal
$s_i=F^{*}C_i\overline{x}_i/\sqrt{L}\in\mathbb{C}^L$ by multiplying
(linearly encoding) its complex--valued (conjugated) message vector
$\overline{x}_i$ by an $L\times N_i$ matrix $F^*C_i/\sqrt{L}$ which
is then transmitted into the shared channel.  Note that, from the
perspective of communication engineering, this procedure has been
simplified to facilitate the analysis.  In a more advanced setting one
could consider a directly randomized mapping from bits to sequences in
$\mathbb{C}^L$.  Now consider a single receiver, for example a base
station. Each transmitter $i$ has its individual impulse response
$w_i$ describing the channel propagation conditions to this base
station.  For simplicity we consider a low--mobility scenario where,
for appropriate block length $L$, the channel is time--invariant and
can be modeled by a convolution of the transmit signal with a channel
impulse response $w_i$. Furthermore, with cyclic extensions and
zero-padding at the transmitter such a signal propagation can then be
modeled as a circular convolution.  To incorporate further structure
for the channel impulse response we write it as $w_i=F^*B_ih_i$
where $B_i\in\mathbb{C}^{L\times K_i}$.  A reasonable assumption for
our application is that the unknown coefficients $h_i$ are located on
the first samples since the path delays in the channel are usually
much shorter than the frame length $L$. In this case $F^*B_i$ is a
truncated identity, i.e., $B_i^*B_i=\Id$.

In practice, since the desired deployment scenario is uncoordinated
and sporadic, only a small fraction of size $r$ of $R$ devices are
online and transmitting data. We assume for this work that the
receiver is able to detect the activity pattern correctly (which can
be achieved through a separate control channel, see for example
\cite{kueng:itw16} for a certain approach). One can even detect
activity  simultaneously with data.  However, algorithms for
blind deconvolution and demixing are usually quite complex from
practical and computational aspects and it is desired to reduce the
problem size as much as possible already from the beginning.  This
means, restricted and resorting to the active set, the receiver
observes the noisy superposition
\begin{equation}
  y
  = \sum_{i=1}^{r} F(w_i  \ast s_i) + e
  = \sum_{i=1}^{r} \diag(B_i h_i)C_i \overline{x}_i+e
  = \sum_{i=1}^{r} \mathcal{A}_i(h_ix_i^*)+e
  \label{eq:intro:sysmodel}
\end{equation}
of $r$ signal contributions where the vector $e\in\mathbb{C}^L$
denotes additive noise.

The conventional approach is (i) to
design the matrices $C_i$ is such a way that resources are used
exclusively by $\mathcal{O}(R)$ devices which requires considerable processing,
resource planning and allocation algorithms and (ii) estimate the
channel from pilot signals during a calibration phase prior to data
transmission.  However, in an increasing number of new applications
the typical data traffic consists only of short messages (status
updates or sensor data) yielding a sporadic traffic type and then the
overall communication in a network is then considerable dominated by
control data.

In \cite{lingstrohmer} it has therefore been proposed to consider the scenario of
\emph{simultaneous blind deconvolution and demixing} of multiple
signals from its superposition $y$, which we will also study in this paper. Demixing by convex programming
methods has been intensively investigated in the fields of ``sine and
spikes'' (and pairs of bases) decompositions, see \cite{Donoho2001}
and \cite{Amelunxen:2013}, and in the field of sparse and low--rank
decomposition, see, e.g., the work \cite{Chandrasekaran09}.  More
generally, as for example outlined in \cite{McCoy:cdemix13} and
\cite{Wright:cPCA13}, a convex approach consists of minimizing the
sum of the individual regularizers over all signal formations which
are conform with the model and consistent with the observations.  To
this end, assuming a priori that $\lVert e\rVert_{\ell_2}\leq\tau$, we
consider the convex optimization problem
\begin{equation}
  \label{eq:nucbpdn}
  \min \sum_{i=1}^r\lVert X_i\rVert_*\quad\text{s.t.}
  \quad \lVert\sum_{i=1}^r\mathcal{A}_i(X_i)-y\rVert_{\ell_2}\leq\tau.
\end{equation}
According to \cite{McCoy:cdemix13}, reliable convex demixing is
possible whenever (i) the signal contributions are incoherent to each
other and (ii) the number of observations is sufficiently above the sum
of effective dimensions of the descent cones of the individual
regularizers at the unknown ground truth. Since the rank-one matrix
$X_i=h_ix^*_i$ has effective dimension $K_i+N_i$ this amounts to
$\mathcal{O}(r(K+N))$ observations, where $K=\max_i(K_i)$ and
$N=\max_i(N_i)$. First results and guarantees, based on the
incoherence between the mappings $\mathcal{A}_i$ which explicitly
occur in blind deconvolution \eqref{eq:intro:sysmodel} with
random $C_i$'s  are worked out in \cite{lingstrohmer}.  
The result in this paper states that if (up to logarithmic orders)
$L=\mathcal{O}(r^2\max(K,N))$ the minimizer
$(\hat{X}_1,\dots,\hat{X}_r)$ of the program \eqref{eq:nucbpdn}
satisfies with high probability that
\begin{equation}
  \label{eq:LS15:estimationerror}
  \sum_{i=1}^r\Vert \hat{X}_i  - X_i^0 \Vert^2_{F}
  \lesssim  r^2\cdot\max  \left\{ K; N   \right\}  \tau^2
\end{equation}
Hence, for $\tau=0$ the ground truth $(\hat{X}^0_1,\dots,\hat{X}^0_r)$
is recovered exactly.  However, the embedding dimension does not quite
match the effective dimension, which would suggest a linear dependence
on $r$.  Ling and Strohmer suggested that this mismatch is a proof
artifact, observing numerically that linear dependence on $r$. In this
paper, we will analytically justify these observations.
In the special case of  partial (low-frequency) Fourier matrices $B_i$ mentioned above,
our main result, Theorem \ref{theorem:mainwithnoise}, reads as follows.
\begin{theorem}\label{theorem:mainwithnoise:dft}
  Let $\omega \ge 1$ and set $\mu^2_h=L\max_{i,l}|b^*_{i,\ell}h_i|^2$.
  Assume $\lVert e\rVert_{\ell_2}\leq\tau$ and that 
  \begin{equation}\label{Lbound}
    L \ge C_{\omega} r \left( K \log K + N \mu^2_h \right) \log^3 L,
  \end{equation}
  where $C_{\omega}$ is a universal constant only depending on
  $\omega$. Then with probability at least
  $1- \mathcal{O} \left( L^{- \omega} \right) $ the minimizer
  $ \hat{X}$ of the recovery program \eqref{eq:nucbpdn} satisfies
  \begin{equation}\label{ineq:estimationerror}
    \sum_{i=1}^r\Vert \hat{X}_i  - X_i^0 \Vert^2_{F} \lesssim  r\cdot\max  \left\{ 1; \frac{r KN}{L}   \right\}  \log(L)    \ \tau^2.
  \end{equation}
\end{theorem}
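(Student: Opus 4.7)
The plan is to recast \eqref{eq:intro:sysmodel} as a lifted block-diagonal low-rank recovery problem and apply the dual-certificate / golfing-scheme framework, adapted from Ahmed--Recht--Romberg \cite{ARR2012} and Gross, but with sharper concentration control for the off-diagonal cross terms. Concretely, I view $\mathcal{A}\colon\bigoplus_{i=1}^r\C^{K_i\times N_i}\to\C^L$ with $\mathcal{A}(X)=\sum_i\mathcal{A}_i(X_i)$, set $X^0_i=h_ix_i^*$, and work on the tangent space $T=T_1\oplus\cdots\oplus T_r$ where $T_i=\{h_iy^*+zx_i^*:y\in\C^{N_i},\,z\in\C^{K_i}\}$. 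The subgradient of the sum-nuclear-norm at $X^0$ that is singled out is $Y=\bigl(\tfrac{h_ix_i^*}{\|h_i\|\|x_i\|}\bigr)_{i=1}^r$. The error bound \eqref{ineq:estimationerror} will follow, as in the standard inexact-duality argument, from two ingredients: (i) a local isometry estimate $\|\mathcal{P}_T\mathcal{A}^*\mathcal{A}\mathcal{P}_T-\mathcal{P}_T\|_{\mathrm{op}}\le\tfrac12$ together with a Frobenius-to-$\ell_2$ lower bound $\|\mathcal{A}(W)\|_{\ell_2}^2\gtrsim\|W\|_F^2$ for $W\in T$, and (ii) an approximate dual certificate $\lambda\in\mathrm{range}(\mathcal{A}^*)$ satisfying $\|\mathcal{P}_T\lambda-Y\|_F$ and $\|\mathcal{P}_{T^\perp}\lambda\|_{\mathrm{op}}$ small.

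For (i) I decompose $\mathcal{A}^*\mathcal{A}=\sum_i\mathcal{A}_i^*\mathcal{A}_i+\sum_{i\neq j}\mathcal{A}_i^*\mathcal{A}_j$. The diagonal blocks each reduce to the single-transmitter setting of \cite{ARR2012}, and by matrix Bernstein with appropriate truncation, each block concentrates provided $L\gtrsim(K\log K+N\mu_h^2)\log^3 L$; the sum over $i$ produces the prefactor $r$ (and \emph{not} $r^2$) because the blocks act on orthogonal summands of $T$, so their operator norms on $T$ combine additively rather than multiplicatively. For the cross terms I do \emph{not} union-bound over the $\binom{r}{2}$ index pairs — this is precisely the step where \cite{lingstrohmer} loses a factor of $r$; instead I view $\sum_{i\neq j}\mathcal{A}_i^*\mathcal{A}_j\mathcal{P}_T$ as a single sum over the $L$ rows of independent, mean-zero random operators (the $C_i$ are independent standard complex Gaussians), and apply non-commutative Bernstein. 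The variance proxy scales with $r$, not $r^2$, because only the first moments of distinct $C_i$'s contribute once cross-expectation is taken. The incoherence parameter $\mu_h^2$ enters through the $\ell_\infty$-norm of $Bh$ and controls the subexponential tail of the summands.

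For (ii) I run the golfing scheme: partition the $L$ measurements into $P\asymp\log L$ disjoint batches and iteratively define $W_0=Y$, $W_t=(\mathcal{P}_T-\tfrac{1}{p}\mathcal{P}_T\mathcal{A}_{S_t}^*\mathcal{A}_{S_t}\mathcal{P}_T)W_{t-1}$, setting $\lambda=\tfrac{1}{p}\sum_t\mathcal{A}_{S_t}^*\mathcal{A}_{S_t}W_{t-1}$. The contraction $\|W_t\|_F\le\tfrac12\|W_{t-1}\|_F$ follows from the restricted isometry of each batch, and the control $\|\mathcal{P}_{T^\perp}\lambda\|_{\mathrm{op}}\le\tfrac12$ follows from an operator-norm Bernstein bound on $\mathcal{P}_{T^\perp}\mathcal{A}_{S_t}^*\mathcal{A}_{S_t}W_{t-1}$, which requires an additional $\|W_{t-1}\|_{\mathrm{mixed}}$-type bound (controlling row/column-wise norms, analogous to the $\mu$-norms in \cite{ARR2012}). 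Each golfing step consumes $\approx L/\log L$ measurements and succeeds with probability $1-L^{-\omega}$ under the sample bound \eqref{Lbound}; a union bound over $P$ steps preserves this.

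The main obstacle is the variance computation for the cross-term Bernstein bound of step (i), specifically showing that the sum of rank-one Gaussian cross products concentrates with variance linear in $r$ rather than quadratic when acting on the block-diagonal tangent space $T$; this rests on the orthogonality of the $T_i$ as subspaces of $\bigoplus_i\C^{K_i\times N_i}$ and on the independence of the $C_i$. Once (i) and (ii) are in place, the standard inexact-duality calculation yields two complementary bounds on $H=\hat{X}-X^0$: the $T$-component is controlled by $\|\mathcal{A}(H)\|_{\ell_2}\le2\tau$ through the Frobenius-to-$\ell_2$ isometry, giving the $r\log(L)\tau^2$ term, while the $T^\perp$-component is controlled by the dual certificate, giving a term scaling as $r^2KN/L\cdot\log(L)\tau^2$ whenever the local isometry is not also a full isometry on $T^\perp\cap\mathrm{range}\,\mathcal{A}^*$; taking the maximum recovers \eqref{ineq:estimationerror}.
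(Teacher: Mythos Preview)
Your high-level architecture --- inexact dual certificate plus golfing, with the local isometry established on the \emph{full} tangent space $\mathcal{T}=\bigoplus_i T_i$ rather than block-by-block --- is exactly the paper's strategy, and you correctly identify that this global viewpoint is what breaks the $r^2$ barrier of \cite{lingstrohmer}. However, the technical mechanism you propose for step (i) is genuinely different from the paper's, and your description of it contains a confusion.

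The paper does \emph{not} use matrix Bernstein to prove the local isometry on $\mathcal{T}$. Instead it writes $\lVert\mathcal{A}(X)\rVert_{\ell_2}^2=\lVert H_X\xi\rVert_{\ell_2}^2$ for a deterministic block-diagonal matrix $H_X$ (depending linearly on $X$) and a single long Gaussian vector $\xi$, and then applies the suprema-of-chaos-processes theorem of Krahmer--Mendelson--Rauhut (Theorem~\ref{KMR}). The work is shifted to bounding the Talagrand functional $\gamma_2(\{H_X:X\in\mathcal{T},\,\|X\|_F\le 1\},\|\cdot\|_{2\to 2})$, and this is where the paper spends most of its effort: a custom norm $\|\cdot\|_B$ is introduced, and the $\gamma_2$-bound is obtained through covering-number estimates that combine Maurey's lemma with the Artstein--Milman duality of metric entropy (Lemmas~\ref{splittinglemma}--\ref{lemmagammabound}). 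The factor $r$ enters through the dimension of $\mathcal{T}$ in these covering estimates, not through a variance computation.

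Your Bernstein route is not implausible --- if one projects the output of the cross-term operator onto $\mathcal{T}$ before taking norms, the variance proxy does indeed come out as $\sim r(K_\mu+N\mu_h^2)/L$ rather than $rK_\mu N/L$ or $r^2(\cdots)/L$ --- but your account misattributes the source of the $r$: for the \emph{diagonal} blocks $\mathcal{P}_{T_i}\mathcal{A}_i^*\mathcal{A}_i\mathcal{P}_{T_i}$, the operator norms combine as a \emph{maximum} (block-diagonal), not additively, so the diagonal contributes no $r$ factor at all; the linear $r$ must come entirely from the cross-term variance. You also leave the sub-exponential parameter $R$ in the matrix Bernstein bound unexamined; since each summand involves products like $c_{i,\ell}c_{j,\ell}^*$ summed over $r$ indices, $R$ itself scales with $r$, and one must check that the $R\log(\cdot)$ term does not spoil the final condition. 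The paper's chaos-process approach sidesteps this entirely.
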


Shortly
before completion of this manuscript Ling and Strohmer presented
recovery guarantees for (considerably more efficient) nonconvex
gradient (Wirtinger) based methods \cite{Ling:2017}, again with
quadratic scaling in $r$.  Again they conjecture linear dependence, as
observed in their numerical experiments. We also include some
numerical experiments in Section \ref{sec:numerics} at the end that
illustrate the linear dependence. We expect that our paper at hand
will pave the way to an optimized parameter dependence also for more efficient algorithms.

\section{General Framework and Main Result}\label{sec:generalframework}
\subsection{Notation}
Before we describe the mathematical model we introduce some basic notation. For complex numbers $z \in \mathbb{C}$ we 
denote its conjugate by $\bar{z}$ and write $\text{Re } z$ and
$\text{Im } z$ for the real and imaginary part. Similarly, for a
vector $w= \left( w \lbrack 1 \rbrack , \ldots, w  \lbrack n \rbrack \right)\in \C^n $ we use the
notation $\text{Re } w = \left( \text{Re } w \lbrack 1 \rbrack, \ldots, \text{Re } w \lbrack n \rbrack
\right) $ and $\text{Im } w = \left( \text{Im } w  \lbrack 1 \rbrack, \ldots, \text{Im
  } w  \lbrack n \rbrack \right)  $.  For a matrix $A \in \mathbb{C}^{d_1 \times d_2} $
we will denote its adjoint by $ A^* $ and 
(for $d_1=d_2$) its trace by $ \trace \left(A\right) $. For matrices 
$A,B \in \mathbb{C}^{d_1 \times d_2}$ we will define the inner product
by $\langle A,B \rangle_F = \trace \left( AB^*\right) $. The
Frobenius norm of $A$ is $\Vert A\Vert^2_F=\langle A,A\rangle_F$
and $\Vert A \Vert_{2 \rightarrow 2}$ denotes its operator norm. If $\mathcal{B}$ is a linear operator mapping matrices to vectors or matrices, we will denote its operator norm by $ \Vert \cdot \Vert_{F \rightarrow 2} $ or $ \Vert \cdot \Vert_{F \rightarrow F} $, respectively. 
The nuclear norm of the matrix $A$, which is defined as the sum of its
singular values, will be denoted by $\Vert A \Vert_{\ast}$. Note
that the notation for $\Vert \cdot \Vert_{\ast}$,
$\lVert\cdot\rVert_F$ and $\langle \cdot, \cdot \rangle_F$ will be
used later in a more generalized setting, as will be pointed out in
the next section.
The matrix $ \Id_d$ will denote the identity matrix in 
$\mathbb{C}^{d\times d} $. If no confusion can arise, we will suppress $d$ and write $ \Id $ instead of $ \Id_d $. For a vector $v \in \mathbb{C}^d$  $ \text{diag} \left(v\right) $ denotes the matrix whose diagonal entries are given by $v$. Furthermore, $\Vert v \Vert_{\ell_2}$ denotes the
$\ell_2$-norm of this vector, i.e. $\Vert v \Vert^2_{\ell_2}= \langle v,v\rangle=\trace \left( v v^*\right)$.\\

By $ \mathbb{P} \left(E\right) $ we will denote the probability of an event $E$. For any $ N \in \mathbb{N} $ we will denote the set $ \left\{ 1, \ldots, N \right\} $ by $ \lbrack N \rbrack $. For a set $S$ we will denote its cardinality by $ \vert S \vert $. The notation $ \log \left( \cdot \right) $ will refer to the logarithm of base $2$. Furthermore, during the whole manuscript $C$ will denote positive numerical constants, which are independent of all other variables which appear in the text and whose value may change from line to line. Similarly, $C_{\omega} $ will denote universal numerical constants, which only depend on $\omega $. We will write $ a \lesssim b $, if $ a \le Cb $ and $ a \lesssim_{\omega} b $, if $ a \le C_{\omega} b $. We will write $ a \sim b $, if we have $ a \lesssim b $ as well as $ b \lesssim a $.

\subsection{The General Model}
In this paper we will work with a more general model, as also studied in \cite{lingstrohmer}, which includes the demixing-deconvolution scenario given above as special case. Assume that the vector $ y \in \mathbb{C}^L $
of $L$ noisy measurements corresponding to inputs $ \left\{ h_i \right\}^r_{i=1} $, $h_i \in \mathbb{C}^{K_i}$ and $ \left\{ x_i \right\}^r_{i=1} $, $x_i \in \mathbb{C}^{N_i} $, is given by
\begin{equation}\label{equ:ybasicmodel}
  y =  \sum_{i=1}^{r} \diag \left( B_i h_i \right) C_i \overline{x}_i+e. 
\end{equation}
where $e$ is additive noise, the matrices $B_i \in \mathbb{C}^{L \times K_i} $ satisfy $ B^*_i B_i = \Id_{K_i} $ for all $ i \in \lbrack r \rbrack $, and
all the entries of the random matrices $C_i \in \mathbb{C}^{L \times N_i}$ are independent and follow a standard circular-symmetric complex normal distribution $ \mathcal{CN} \left(0,1\right) $ (see Appendix \ref{sec:complexgaussian} for more details).
The vectors $h_i$ are assumed to be normalized, $ \Vert h_i \Vert_{\ell_2}=1 $, whereas the norms of
$x_i $ are arbitrary. (This is not restrictive as there is an inherent scaling ambiguity.)   
Furthermore, we set
\begin{align*}
  K= \underset{i \in \lbrack r \rbrack}{\max}\ K_i \quad\text{and}\quad
  N= \underset{i \in \lbrack r \rbrack}{\max}\ N_i.
\end{align*}
Let us denote by $ b_{i, \ell} $ the $\ell$th column of $ B^*_i $ and
by $c_{i,\ell} $ the $ \ell $th column of $C_i $. Then, the $\ell$th entry of $y$ is given by
\begin{equation*}
y \lbrack \ell \rbrack= \sum_{i=1}^{r} b^*_{i,\ell } h_i x^*_i c_{i, \ell} + e \lbrack \ell \rbrack .
\end{equation*} 
We observe that the overall vector $y$ only depends on the outer
products $h_i x^*_i $. Thus, we may proceed by considering a lifted representation (see, e.g., \cite{bodmannliftingtrick}). Defining for each $i \in \lbrack r \rbrack$ the operator
$\mathcal{A}_i:\mathbb{C}^{K_i \times N_i} \longrightarrow \mathbb{C}^L$ via
\begin{equation*}
  \mathcal{A}_i \left( Z \right):= \left(  b^*_{i,\ell} Z c_{i, \ell}  \right)^{L}_{\ell=1}
  \label{eq:def:Ai}
\end{equation*}
we obtain that 
\begin{equation*}
  y = \sum_{i=1}^{r} \mathcal{A}_i \left( h_i x^*_i  \right) + e .
\end{equation*}
In the following we will use the decomposition $x_i= \sigma_i m_i $ where  $ \sigma_i \ge 0 $ and some $m_i \in \mathbb{C}^{N_i} $ such that $ \Vert m_i \Vert_{\ell_2} =1 $. (If $x_i  =0$ we set $ \sigma_i =0 $ and choose $ m_i$ arbitrarily.) Thus, the signal to be recovered may be written as
\begin{equation*}
  X^0 =  \left( h_1 x^*_1, \ldots, h_r x^*_r  \right) = 
  \left( \sigma_1 h_1 m^*_1, \ldots, \sigma_r h_r m^*_r \right)=:\left(X_1, \ldots, X_r \right).
\end{equation*}
Define 
\begin{equation*}
  \mathcal{M}:=  \left\{ \left(Z_1, \ldots, Z_r\right): \ Z_i \in \mathbb{C}^{K_i \times N_i} \text{ for all } i \in \left[ r \right]  \right\}
\end{equation*}
and note that $\mathcal{M}$ is naturally equipped with the algebraic structure of a vector space, as it may be regarded as the product space of the vector spaces $ \mathbb{C}^{K_i \times N_i} $.  The linear operator $ \mathcal{A}: \mathcal{M} \rightarrow \mathbb{C}^L $ is defined by
\begin{equation*}
  \mathcal{A} \left( Z \right) :=  \sum_{i=1}^{r} \mathcal{A}_i \left( Z_i \right)
  \label{eq:def:Aall}
\end{equation*}
for $ Z= \left( Z_1, \ldots, Z_r \right) \in \mathcal{M}$. The linear
space $ \mathcal{M}$ will be endowed with a norm and an inner product defined by
\begin{equation*}
  \langle W, Z \rangle_{F} = \sum_{i=1}^{r} \langle W_i, Z_i \rangle_F
  \quad\text{and}\quad
  \big\Vert W \big\Vert^2_{F} = \langle W,W\rangle_F =\sum_{i=1}^{r} \Vert W_i \Vert^2_F.
\end{equation*}
for all $ W, Z \in \mathcal{M}$. The operator norms $ \Vert \cdot \Vert_{F \rightarrow 2} $ and $ \Vert \cdot \Vert_{F \rightarrow F}  $ of linear maps on $ \mathcal{M}$ are defined analogously to the matrix case. For the adjoint $\mathcal{A}^*$ of $\mathcal{A}$ with respect to the inner product on $ \mathcal{M}$  
it follows $ \mathcal{A}^* \left(y \right) = \left( \mathcal{A}_1^*
  \left(y\right), \ldots, \mathcal{A}^*_r \left( y \right)  \right) $
for all $y \in \mathbb{C}^L $. Note that the adjoint operations $\mathcal{A}^*_i(y)$ itself are given by
\begin{equation}\label{equ:Aitransposed}
\mathcal{A}^*_i \left(y\right) = \sum_{\ell=1}^{L} y \lbrack \ell \rbrack b_{i,\ell} c^*_{i,\ell} \quad \text{for all } y \in \mathbb{C}^L.
\end{equation}
We will also use the norm defined by $\Vert W \Vert_{\ast} = \sum_{i=1}^{r} \Vert W_i \Vert_{\ast}$. For reasons which will become clear in Section \ref{sectionsufficientconditions} we set
\newcommand{\sgn}{\text{sgn}}
\begin{equation*}
\sgn(X^0_i) := \begin{cases}
h_i m^*_i &  \sigma_i > 0\\
0 & \text{else} \end{cases}
\end{equation*} 
for $ i \in \lbrack r \rbrack $ (recall that $\sigma_i\geq 0$). This allows us to define
\begin{equation*}
\sgn(X^0)  := \left( \sgn(X^0_1) , \ldots, \sgn(X^0_r)  \right).
\end{equation*}

\subsection{Partition of Measurements and Incoherence Assumptions}\label{section:partitioncoherence}
As those of of \cite{ARR2012,lingstrohmer}, our results are based on
two notions of coherence. The first is captured by the coherence
parameter
\begin{equation}\label{def:mumax}
  \mu^2_{i} = \underset{  \ell \in \lbrack L \rbrack}{\max} \frac{L}{K_i} \Vert b_{i, \ell} \Vert^2_{\ell_2}  \quad \text{for } i \in \lbrack r \rbrack .
\end{equation}
Note that $ B^*_i B_i = \Id \in \mathbb{C}^{K_i \times K_i} $ for all
$ i \in \lbrack r \rbrack $ implies that
$ 1 \le \mu^2_{i} \le \frac{L}{K_i} $. In the (important) case that
all matrices $B_i$ are partial (low-frequency) DFT matrices, which refers to the
special situation described in the introduction, we have minimal
coherence $ \mu^2_{i} =1 $. In order to simplify notation we introduce
the quantities
\begin{equation}\label{def:Kmu}
K_{i, \mu} := K_i \mu^2_{i}, \quad \quad K_{\mu}:= \underset{i \in \left[r\right]}{\max}~K_{i,\mu}.
\end{equation}
We observe that $K_i \le K_{i, \mu} \le L $.
Again, in the special case that the matrices $B_i$ are partial
(low-frequency) DFT matrices we obtain that $K_{i, \mu} = K_i$.\\

For the proof of our results we will use the Golfing Scheme
\cite{gross2011recovering}, see Section \ref{subsec:golfing}. This
requires a partition $ \left\{ \Gamma_p \right\}^P_{p=1} $ of the set
of the measurements $ \lbrack L \rbrack $ with associated measurement
operators $ \mathcal{A}^p $. The second coherence parameter will also
depend on this partition. In order to guarantee that the Golfing
Scheme is successful with high probability we will need that
$ T_{i,p} := \frac{L}{Q} \sum_{\ell \in \Gamma_p} b_{i,\ell}
b^*_{i,\ell} \approx \Id_{K_i} $, as it will become clear in Remark \ref{Remark:Golfingmodified}. Thus, we have to assure that the partition
$ \left\{ \Gamma_p \right\}^P_{p=1} $ is chosen such that for
$Q:= \frac{L}{P} $ and $ \nu > 0 $ small enough one has
\begin{equation}\label{partitionequation}
\underset{i \in \lbrack r \rbrack, \ p \in \lbrack P \rbrack }{\max}\ \Big\Vert \Id_{K_i} - T_{i,p} \Big\Vert_{2\rightarrow 2} \le \nu.
\end{equation}
Furthermore, we require that $ \vert \Gamma_p \vert $ is large enough for
all $ p \in \lbrack P \rbrack $, i.e., each operator $\mathcal{A}^p $ contains enough measurements, and also the partition consists of the right number of sets, that is, $P$ is bounded above and below.
More precisely, we require that the partition is $\omega$-admissible in the sense of the following definition.
\begin{definition}\label{definition:admissiblepartition}
	Let $ \omega \ge 1 $ and let $\left\{ \Gamma_p \right\}^P_{p=1}$ be a partition of $ \lbrack L \rbrack $. The set $\left\{ \Gamma_p \right\}^P_{p=1}$ is called $\omega$-admissible if the following three conditions are satisfied:
	\begin{enumerate}
		\item $\frac{1}{2} Q \le \vert \Gamma_p \vert \le \frac{3}{2} Q $ for all $ p \in \lbrack P \rbrack $, where $Q = \frac{L}{P} $.
		\item (\ref{partitionequation}) is fulfilled with $ \nu = \frac{1}{32} $.
		\item It holds that $  \log \left( 8 \tilde{\gamma} \sqrt{r} \right)  \ge  P \ge \frac{1}{2} \log \left( 8 \tilde{\gamma} \sqrt{r} \right) $, where
		\begin{equation*}
		\tilde{\gamma} = 2\sqrt{  \omega \max  \left\{ 1; \frac{r K_{\mu} N }{L}   \right\}  \log \left(L+rKN\right) }.
		\end{equation*}
	\end{enumerate}
\end{definition}
Here the parameter $ \omega $ is the same that appears in Theorem \ref{theorem:mainwithnoise:dft} and in Theorem \ref{theorem:mainwithnoise}.\\

 This definition gives rise to the question whether such a partition exists in general and how one can construct them. This has already been discussed in \cite[Section 2.3]{lingstrohmer} for several
important special cases of matrices
$B_i \in \mathbb{C}^{K_i \times N_i} $. In particular, it is proven
that in the special case that the $B_i$'s are partial (low-frequency) Fourier matrices
of the same size and if $L=PQ$ one may find a partition such that $\nu=0$. In
\cite{ARR2012}, the authors discussed the construction of such a
partition for $r=1$ and for a general matrix $B \in \mathbb{C}^{K \times N}
$ which satisfies $ B^* B = \Id_K $. However, such a partition can be constructed for all matrices $B_i \in \mathbb{C}^{K_i \times N_i} $ simultanously via the following lemma.
\begin{lemma}\label{partitionlemma}
Let $P \in \lbrack L \rbrack$ and $\nu \in \left(0,1\right) $ be fixed. Set $Q= \frac{L}{P} $. There is a universal constant $C>0$ such that if
\begin{equation}\label{Qminimalsize}
Q \ge C \frac{K_{\mu}}{\nu^2} \log \left( \max \left\{ r; P; K   \right\}  \right)
\end{equation}
then there is a partition $ \left\{ \Gamma_p \right\}^P_{p=1} $ of $ [L] $ such that (\ref{partitionequation}) is satisfied and $ \frac{1}{2} Q \le  \vert \Gamma_p \vert \le \frac{3}{2} Q $ holds for all $p \in \lbrack P \rbrack $.
\end{lemma}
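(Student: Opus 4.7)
The plan is to obtain the partition by a probabilistic construction: assign each index $\ell \in [L]$ independently and uniformly at random to one of the $P$ blocks, which automatically produces a partition $\{\Gamma_p\}_{p=1}^P$. I then show that with positive probability all three desired properties hold simultaneously, which establishes existence. Let $\delta_{\ell,p} \in \{0,1\}$ be the indicator that $\ell$ is assigned to block $p$, so $\mathbb{E}\delta_{\ell,p} = 1/P = Q/L$ and $|\Gamma_p| = \sum_\ell \delta_{\ell,p}$.

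For the balancing condition $\tfrac12 Q \le |\Gamma_p| \le \tfrac32 Q$, I apply a standard scalar Chernoff bound to each sum $|\Gamma_p|$: the probability that $|\Gamma_p|$ deviates from its mean $Q$ by more than $Q/2$ is at most $2\exp(-cQ)$. A union bound over $p \in [P]$ gives failure probability at most $2P\exp(-cQ)$, which is much less than $1/4$ once $Q \gtrsim \log P$, a condition that is subsumed by \eqref{Qminimalsize}.

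The core of the argument is controlling the operator norm error $\|\Id_{K_i} - T_{i,p}\|_{2\to 2}$ for each pair $(i,p)$. Write
\begin{equation*}
T_{i,p} - \Id_{K_i} \;=\; \sum_{\ell=1}^{L} X_\ell^{(i,p)}, \qquad X_\ell^{(i,p)} := \frac{L}{Q}\left(\delta_{\ell,p} - \tfrac{Q}{L}\right) b_{i,\ell} b_{i,\ell}^*,
\end{equation*}
where I have used $\sum_\ell b_{i,\ell}b_{i,\ell}^* = B_i^* B_i = \Id_{K_i}$ to verify that the summands are centered and sum (in expectation) to zero. Each $X_\ell^{(i,p)}$ is a self-adjoint rank-one matrix, and the coherence bound $\|b_{i,\ell}\|_{\ell_2}^2 \le K_{i,\mu}/L \le K_\mu/L$ yields the uniform bound $\|X_\ell^{(i,p)}\|_{2\to 2} \le K_\mu/Q$. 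For the matrix variance, I expand $(X_\ell^{(i,p)})^2$, use $\mathbb{E}(\delta_{\ell,p} - Q/L)^2 \le Q/L$, and again exploit the coherence bound to obtain
\begin{equation*}
\Big\|\sum_{\ell=1}^L \mathbb{E}(X_\ell^{(i,p)})^2\Big\|_{2\to 2} \;\le\; \frac{L}{Q}\cdot\frac{K_\mu}{L}\cdot\Big\|\sum_\ell b_{i,\ell}b_{i,\ell}^*\Big\|_{2\to 2} \;=\; \frac{K_\mu}{Q}.
\end{equation*}

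The matrix Bernstein inequality (for self-adjoint sums, applied in $\mathbb{C}^{K_i\times K_i}$) then gives
\begin{equation*}
\mathbb{P}\!\left(\big\|\Id_{K_i} - T_{i,p}\big\|_{2\to 2} > \nu\right) \;\le\; 2K_i \exp\!\left(-\frac{c\,\nu^2 Q}{K_\mu}\right),
\end{equation*}
valid for $\nu \in (0,1)$ since both the max-norm and variance terms are bounded by $K_\mu/Q$. A union bound over the $rP$ pairs $(i,p)$ yields total failure probability at most $2rPK\exp(-c\nu^2 Q/K_\mu)$, which is strictly less than $1/2$ once $Q \ge C K_\mu \nu^{-2}\log(\max\{r,P,K\})$ with $C$ chosen sufficiently large. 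Combining this with the Chernoff estimate above, the random partition satisfies all required properties with probability strictly greater than $0$, hence such a partition exists.

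The main technical obstacle is verifying the matrix Bernstein hypotheses with the correct dependence on $K_\mu$ rather than $L$ or $K$; this hinges on the two uses of the coherence bound $\|b_{i,\ell}\|_{\ell_2}^2 \le K_\mu/L$ — once to control the almost-sure bound on each summand and once to dominate the matrix variance — together with the identity $B_i^* B_i = \Id_{K_i}$. Once these two ingredients are in place, the logarithmic factor $\log(\max\{r,P,K\})$ arises naturally from the union bound and the intrinsic $\log K$ prefactor in Bernstein.
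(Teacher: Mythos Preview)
Your proof is correct and follows essentially the same approach as the paper: a random uniform assignment of indices to blocks, the matrix Bernstein inequality applied to the centered rank-one sums $\frac{L}{Q}(\delta_{\ell,p}-Q/L)\,b_{i,\ell}b_{i,\ell}^*$ (with the coherence bound $\|b_{i,\ell}\|_{\ell_2}^2\le K_\mu/L$ controlling both the uniform and variance terms), a scalar Bernstein/Chernoff bound for the block sizes, and a union bound over $(i,p)$ to conclude existence via positive probability. The only difference is organizational---the paper isolates the single-$(i,p)$ matrix Bernstein step as a separate lemma---but the substance is identical.
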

A proof of this result is included in Appendix~\ref{constructionpartition}. As $P=\tfrac{L}{Q}$, this lemma implies the existence of an $\omega$-admissible partitions provided that 

\[L \gtrsim \sqrt{r} \log \left( 8 \tilde{\gamma} \sqrt{r} \right) \frac{K_{\mu}}{\nu^2} \log \left( \max \left\{ r; P; K   \right\}  \right) ,\]  with $\tilde \gamma$ as in Definition~\ref{definition:admissiblepartition}, which is a somewhat milder assumption than what is required in  our main theorem.

%

The second incoherence parameter will depend on the choice of such an $\omega$-admissible partition, measuring how aligned the input $h_i$ is with the basis vectors $b_{i,\ell}$ distorted by a family of linear maps corresponding to the different sets in the partition.


 More precisely, for a fixed  $\omega$-admissible partition $  \left\{ \Gamma_p \right\}^P_{p=1} $ we define
\begin{equation}\label{definition:muh}
\mu^2_{h} := L  \max \left\{ \underset{ \ell  \in \lbrack L \rbrack, i \in \lbrack r \rbrack}{\max} \vert b^*_{i, \ell } h_i \vert^2   ,  \underset{ p \in \lbrack P \rbrack, \ell  \in \lbrack L \rbrack, i\in \lbrack r \rbrack }{\max}  \vert b^*_{i, \ell } S_{i,p}  h_i \vert^2  \right\},
\end{equation}
where we have set $ S_{i,p}= T^{-1}_{i,p} $. The proof in Section \ref{sec:proof} will yield the strongest result when $ \mu^2_h $ is small. Thus, we will choose for our proof a partition, which minimizes the quantity defined in (\ref{definition:muh}). This motivates the introduction of the following quantity.
\begin{equation}\label{definition:muhomega}
  \mu^2_{h,\omega} = L  \min_{\left\{ \Gamma_p \right\}^P_{p=1} \omega\text{-admissible}}  \max \left\{ \underset{ \ell  \in \lbrack L \rbrack, i \in \lbrack r \rbrack}{\max} \vert b^*_{i, \ell } h_i \vert^2   ,  \underset{ p \in \lbrack P \rbrack, \ell  \in \lbrack L \rbrack, i\in \lbrack r \rbrack }{\max}  \vert b^*_{i, \ell } S_{i,p}  h_i \vert^2  \right\}.
\end{equation}
\begin{lemma}
Let $ \left\{ \Gamma_p \right\}^P_{p=1} $ be a $\omega$-admissible partition of $ \left[L\right] $. Then $1\le \mu^2_{h} \le \left(\frac{32}{31}\right)^2 K_{\mu}. $
\end{lemma}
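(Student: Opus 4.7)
The plan is to handle the upper and lower bounds separately, both using only elementary estimates once we have recorded the basic consequences of the $\omega$-admissibility conditions.

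For the lower bound $\mu_h^2 \ge 1$, I would exploit that $B_i^* B_i = \Id_{K_i}$ implies $\sum_{\ell=1}^L b_{i,\ell} b_{i,\ell}^* = \Id_{K_i}$. Applying this identity in the quadratic form against $h_i$ (using $\|h_i\|_{\ell_2} = 1$) gives $\sum_{\ell=1}^L |b_{i,\ell}^* h_i|^2 = 1$, so the average of $|b_{i,\ell}^* h_i|^2$ over $\ell$ is $1/L$, and in particular $\max_{\ell, i} |b_{i,\ell}^* h_i|^2 \ge 1/L$. Multiplying by $L$ yields $\mu_h^2 \ge 1$.

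For the upper bound, I would estimate each of the two quantities inside the max in \eqref{definition:muh} separately. The first term is immediate from Cauchy--Schwarz together with the definitions \eqref{def:mumax} and \eqref{def:Kmu}:
\[
|b_{i,\ell}^* h_i|^2 \le \|b_{i,\ell}\|_{\ell_2}^2 \, \|h_i\|_{\ell_2}^2 \le \frac{K_i \mu_i^2}{L} = \frac{K_{i,\mu}}{L} \le \frac{K_\mu}{L}.
\]
For the second term, the key extra ingredient is an operator-norm bound on $S_{i,p} = T_{i,p}^{-1}$. Since the partition is $\omega$-admissible, condition 2 of Definition~\ref{definition:admissiblepartition} gives $\|\Id_{K_i} - T_{i,p}\|_{2 \to 2} \le \nu = 1/32$, so the eigenvalues of the Hermitian positive-definite matrix $T_{i,p}$ lie in the interval $[31/32, 33/32]$, and consequently $\|S_{i,p}\|_{2 \to 2} \le 32/31$. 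Using Cauchy--Schwarz once more,
\[
|b_{i,\ell}^* S_{i,p} h_i|^2 \le \|b_{i,\ell}\|_{\ell_2}^2 \, \|S_{i,p}\|_{2\to 2}^2 \, \|h_i\|_{\ell_2}^2 \le \left(\frac{32}{31}\right)^2 \frac{K_\mu}{L}.
\]

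Taking the maximum over the two terms and multiplying through by $L$ produces the upper bound $\mu_h^2 \le (32/31)^2 K_\mu$. There is no real obstacle here; the only nontrivial step is reading off the spectral bound on $S_{i,p}$ from the admissibility assumption, and this is a direct consequence of a Neumann-series / spectral-mapping argument applied to the perturbation bound $\|\Id - T_{i,p}\|_{2 \to 2} \le 1/32$.
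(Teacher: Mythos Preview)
Your proposal is correct and follows essentially the same route as the paper: the lower bound via $\sum_{\ell=1}^L |b_{i,\ell}^* h_i|^2 = \|h_i\|_{\ell_2}^2 = 1$, and the upper bound via Cauchy--Schwarz together with the spectral bound $\|S_{i,p}\|_{2\to 2} \le 32/31$ derived from the admissibility condition $\|\Id - T_{i,p}\|_{2\to 2} \le 1/32$. There is nothing to add.
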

\begin{proof}
The lower bound follows immediately from the observation
\begin{equation*}
  \sum_{\ell=1}^{L} \Vert b^*_{i,\ell} h_i \Vert^2_{\ell_2} =   \sum_{\ell=1}^{L} h^*_i b_{i,\ell} b^*_{i,\ell} h_i    = \Vert h_i \Vert^2_{\ell_2} =1.
\end{equation*}
For the upper bound it is enough to observe that $ L\vert b^*_{i,\ell} h_i \vert^2 \le L \Vert b_{i,\ell} \Vert^2_{\ell_2} \Vert h_i \Vert^2_{\ell_2} \le K_{\mu} $ and similarly $ L\vert b^*_{i,\ell} S_{i,p} h_i \vert^2 \le L \Vert S_{i,p} \Vert^2_{2 \rightarrow 2} \Vert b_{i,\ell} \Vert^2_{\ell_2} \Vert h_i \Vert^2_{\ell_2}  $. The result follows from the observation $ \Vert S_{i,p} \Vert_{2 \rightarrow 2} \le \frac{32}{31} $, which is due to $ \Vert \Id - T_{i,p} \Vert_{2 \rightarrow 2} \le \frac{1}{32} $.
\end{proof}

\begin{remark}
As already pointed out in \cite[Remark 2.1]{lingstrohmer} the appearance of the second term in the definition of $\mu_h$ is due to the modified Golfing Scheme (cf. Remark \ref{Remark:Golfingmodified}). 
Note, however, that our definition of $ \mu^2_h $ is slightly different to the definition of $ \mu^2_h $ in \cite{lingstrohmer}. In our definition, the second term the maximum is over all $ \ell \in \lbrack L \rbrack$, whereas in \cite{lingstrohmer} the maximum is only over all $ \ell \in \Gamma_p$. The reason is that of a simpler presentation and a less technical argument; it is possible to obtain our result with $ \mu^2_h $ as defined in \cite{lingstrohmer} by a slightly more involved argument: One needs to replace the norm $ \Vert \cdot \Vert_B $, which will be introduced in Section \ref{localisometry}, by norms which depend on the individual partitions $ \Gamma_p$.
\end{remark}
One may ask whether the second term in the definition of $ \mu^2_h $ can be removed. By a closer look at the proof of Lemma \ref{partitionlemma} one infers that for fixed $P$, which satisfies the third condition in Definition \ref{definition:admissiblepartition},  a constant fraction of all partitions are $ \mu $-admissible. Thus, one might conjecture that there is at least one partitition such that the quantity $ \underset{ p \in \lbrack P \rbrack, \ell  \in \lbrack L \rbrack, i\in \lbrack r \rbrack }{\max}  \vert b^*_{i, \ell } S_{i,p}  h_i \vert^2 $ is small such that it can be neglected. We leave this problem for future work.

\subsection{Main Result}
\label{sec:mainresults}
Our main result establishes a recovery guarantee for the general measurement model \eqref{equ:ybasicmodel}. Reconstruction proceeds via nuclear norm minimization, the semidefinite program formulated in \eqref{eq:nucbpdn}.
\begin{theorem}\label{theorem:mainwithnoise}
	Let $\omega \ge 1$ and let $y \in \mathbb{C}^L$ be given by (\ref{equ:ybasicmodel}) with $ \Vert e \Vert_{\ell_2} \le \tau $. Assume that 
	\begin{equation}\label{Lbound}
	L \ge C_{\omega} r \left( \underset{i \in \lbrack r \rbrack}{\max}~\left(K_i \mu^2_{i} \log \left( K_{i} \mu^2_i \right) \right) + N \mu^2_{h,\omega} \right) \log^3 L,
	\end{equation}
	where $C_{\omega}$ is a universal constant only depending on
        $\omega$. Then, with probability at least
        $1- \mathcal{O} \left( L^{- \omega} \right) $ the minimizer
        $ \hat{X}$ of the recovery program \eqref{eq:nucbpdn} satisfies
	\begin{equation}\label{ineq:estimationerror}
          \Vert \hat{X}  - X^0 \Vert_{F} \lesssim \tau  \sqrt{ r  \max  \left\{ 1; \underset{i \in \lbrack r \rbrack}{\max} \frac{r K_{i} \mu^2_i N}{L}   \right\}  \log L }  .
	\end{equation}
\end{theorem}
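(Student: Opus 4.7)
The plan is to follow the standard tangent-space approach to nuclear norm minimization that is prevalent in low-rank matrix recovery, but with a carefully chosen norm on $\mathcal{M}$ that decouples the $r$ blocks well enough to yield the desired linear-in-$r$ scaling. Let $T_i=\{h_i a^* + b\, m_i^* : a\in\mathbb{C}^{N_i},\, b\in\mathbb{C}^{K_i}\}\subset\mathbb{C}^{K_i\times N_i}$ be the tangent space of rank-one matrices at $h_im_i^*$, and let $T=T_1\oplus\cdots\oplus T_r\subset\mathcal{M}$, with orthogonal projection $\mathcal{P}_T$. Standard nuclear-norm-minimization analysis (as in \cite{ARR2012,lingstrohmer,candes2011tight}) reduces the recovery guarantee to three ingredients: (i) a \emph{local isometry property}
$\Vert \mathcal{P}_T\mathcal{A}^*\mathcal{A}\mathcal{P}_T - \mathcal{P}_T\Vert_{F\to F}\le \tfrac{1}{2}$;
(ii) a bound on $\Vert \mathcal{A}\Vert_{F\to 2}$ of order $\sqrt{L}$ up to logarithmic factors, yielding the noise-robustness constant in \eqref{ineq:estimationerror}; and (iii) an inexact dual certificate $Y=\mathcal{A}^*(z)\in\mathrm{Range}(\mathcal{A}^*)$ satisfying $\Vert \mathcal{P}_T Y-\sgn(X^0)\Vert_F\le\epsilon_1$ small and $\Vert \mathcal{P}_{T^\perp}Y\Vert_{\ast}^{\mathrm{dual}}\le\tfrac{1}{2}$, where the dual norm on $\mathcal{M}$ is the blockwise maximum of the operator norms. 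Once these are in place, a routine convex duality argument produces \eqref{ineq:estimationerror}.

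For ingredient (i) I would fix the partition $\{\Gamma_p\}$ to be $\omega$-admissible and apply a matrix Bernstein (or operator Chernoff) concentration inequality to $\mathcal{P}_T\mathcal{A}^*\mathcal{A}\mathcal{P}_T-\mathcal{P}_T$ viewed as a sum over $\ell\in[L]$ of independent rank-$(K_i+N_i)$ operators. Because the $c_{i,\ell}$'s are complex Gaussians, the relevant variance parameter is controlled by $\mu^2_{h,\omega}$ through the bound $\vert b_{i,\ell}^*h_i\vert\le\mu_h/\sqrt{L}$, while the uniform bound on each summand involves $K_i\mu_i^2$. Choosing $L$ as in (\ref{Lbound}) and combining with standard truncation for the Gaussian tails yields the desired $\tfrac{1}{2}$ bound with probability at least $1-L^{-\omega}$. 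Ingredient (ii) similarly reduces to a Gaussian concentration estimate on $\Vert \mathcal{A}_i \Vert_{F\to 2}$ of each block.

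Ingredient (iii) is the heart of the argument and should be constructed by a modified Golfing Scheme in the style of \cite{gross2011recovering}, adapted to the non-DFT setting using the correction operators $S_{i,p}=T_{i,p}^{-1}$. Concretely, initialize $W_0=\sgn(X^0)$ and for $p=1,\dots,P$ set
\begin{equation*}
Y_p := Y_{p-1} + \tfrac{P}{L}\, \mathcal{A}_{\Gamma_p}^*\mathcal{A}_{\Gamma_p}\big( S_p\, W_{p-1} \big),\qquad W_p := \sgn(X^0)-\mathcal{P}_T Y_p,
\end{equation*}
where $\mathcal{A}_{\Gamma_p}$ restricts measurements to $\Gamma_p$ and $S_p$ acts blockwise as $S_{i,p}$. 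The $S_{i,p}$ precisely compensate for the deviation of $\tfrac{L}{|\Gamma_p|}\sum_{\ell\in\Gamma_p} b_{i,\ell}b_{i,\ell}^*$ from the identity; this is why $\mu^2_{h,\omega}$ is the right coherence parameter. One shows a contraction estimate $\Vert W_p \Vert \le \tfrac{1}{2}\Vert W_{p-1}\Vert$ in a suitably chosen norm $\Vert\cdot\Vert_B$ (alluded to in the remark), together with a bound on the growth of $\Vert \mathcal{P}_{T^\perp}(Y_p-Y_{p-1})\Vert$ in the blockwise operator norm, both holding conditionally on the previous steps with probability at least $1-L^{-\omega}/P$. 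After $P\sim\log(r\tilde\gamma)$ iterations both tangential residual and transversal mass are sufficiently small.

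The main obstacle is the choice of the norm $\Vert\cdot\Vert_B$ used to track the contraction in the Golfing Scheme and the concentration estimates for $\mathcal{P}_T\mathcal{A}^*\mathcal{A}\mathcal{P}_T-\mathcal{P}_T$ and related operators. A naive choice of the Frobenius norm on $\mathcal{M}$ loses a factor $\sqrt{r}$ when passing between blocks, and iterating such losses is what produces the $r^2$ scaling in \cite{lingstrohmer}. To recover the linear scaling claimed in the theorem, the norm $\Vert\cdot\Vert_B$ must be a mixed norm that on each block behaves like the Frobenius norm but across blocks couples them in a way that captures both the incoherence of the ground-truth signs $\sgn(X_i^0)=h_i m_i^*$ and the independence of the Gaussian ensembles $C_i$. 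Controlling simultaneously the tangential contraction in $\Vert\cdot\Vert_B$ and the transversal growth in the blockwise operator norm, with the sharp dependence on $K_{\mu}$, $N\mu^2_{h,\omega}$, and $r$, is the technical centerpiece of the proof, and once achieved it immediately yields \eqref{Lbound} and the noise bound \eqref{ineq:estimationerror}.
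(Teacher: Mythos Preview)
Your high-level architecture is correct: the proof does combine a local isometry on the full tangent space $\mathcal{T}$, an operator-norm bound on $\mathcal{A}$, and a Golfing construction with the $\mathcal{S}^p$ correctors. Your description of the Golfing recursion matches the paper essentially verbatim, and the paper indeed runs the contraction $\|W_p\|_F\le 4^{-p}\|W_0\|_F$ in the \emph{plain Frobenius norm} on $\mathcal{M}$---so your diagnosis that ``a naive choice of the Frobenius norm \ldots\ loses a factor $\sqrt{r}$ when passing between blocks'' in the Golfing step is not where the difficulty lies. The Frobenius contraction works directly once the local isometry on $\mathcal{T}^p$ is in hand.

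The genuine gap is in ingredient (i). You propose to obtain the $\delta$-local isometry on $\mathcal{T}$ via matrix Bernstein applied to $\mathcal{P}_T\mathcal{A}^*\mathcal{A}\mathcal{P}_T-\mathcal{P}_T$. This is precisely the route of \cite{lingstrohmer}, and it is the step that produces the quadratic $r$-scaling (roughly because controlling the off-block terms $\mathcal{A}_i^*\mathcal{A}_j$ individually costs a union bound over $r^2$ pairs or an equivalent variance blow-up). The paper's key innovation is to abandon Bernstein here and instead treat $\|\mathcal{A}(X)\|_{\ell_2}^2$ as a second-order chaos in the Gaussian vector $\xi=(c_{i,\ell})$, writing $\|\mathcal{A}(X)\|_{\ell_2}^2=\|H_X\xi\|_{\ell_2}^2$ for a block-diagonal $H_X$ and bounding $\sup_{X}\big|\|H_X\xi\|_{\ell_2}^2-\mathbb{E}\|H_X\xi\|_{\ell_2}^2\big|$ via the Krahmer--Mendelson--Rauhut chaos inequality. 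The $\|\cdot\|_B$ norm you allude to is exactly the operator norm of $H_X$, namely $\|X\|_B=\sqrt{L}\max_\ell(\sum_i\|X_i^*b_{i,\ell}\|_{\ell_2}^2)^{1/2}$, and it enters the proof through the $\gamma_2$-functional $\gamma_2(\mathcal{T}\cap S_F,\|\cdot\|_B)$, not through the Golfing iteration. The linear-in-$r$ bound $\gamma_2\lesssim\sqrt{r(K_\mu\log L\log^2 K_\mu+N\mu_h^2)}$ is then obtained by a covering-number argument (splitting $B^m$ via a product covering, Maurey's lemma plus the Artstein--Milman duality theorem for the $B_i$-factors, and a direct volumetric bound for $B^h$ using $\|\cdot\|_B\le\mu_h\|\cdot\|_F$ on $\mathcal{T}_h$). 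Without this chaos-process step your proposal does not yield the claimed dependence on $r$; the rest of your outline would go through once it is supplied.
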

In the important special case of noiseless measurements, i.e., $ \tau= 0$, Theorem \ref{theorem:mainwithnoise} yields exact recovery with high probability, if $L$ satisfies
condition (\ref{Lbound}), i.e., $ X^0$ is the unique minimizer of the
semidefinite program (\ref{eq:nucbpdn}). As already mentioned
in the introduction our result significantly improves upon the result of
\cite{lingstrohmer} and exhibits optimal scaling in the degrees of freedom up to logarithmic factors. In the noisy case, the estimation error
(\ref{ineq:estimationerror}) is improved at least by a factor of
$\sqrt{r}$ (cf. \cite[Theorem 3.3]{lingstrohmer}).
\section{Preliminaries}

\subsection{Concentration Inequalities}
In our proof we will have to estimate the spectral norm of a random matrix several times. Amongst others one tool we will apply is a  generalized version of the matrix Bernstein inequality, which may be seen as a corollorary from Theorem 4 in \cite{koltchinskii_generalizedbernstein}. It is based on so-called Orlicz norms $\Vert \cdot \Vert_{\psi_{\alpha}}$ , which may be regarded as a measure for the tail decay of random variables.
\begin{definition}
	Let $X$ be a complex-valued random variable. For $\alpha \ge 1$ we define the Orlicz norm $ \Vert \cdot \Vert_{\psi_{\alpha}} $ by
	\begin{equation*}
	\Vert X \Vert_{\psi_{\alpha}} = \inf \left\{ t>0 :  \mathbb{E} \left[ \exp \left(  \frac{\vert X \vert^{\alpha}}{t^{\alpha}} \right) \right] \le 2  \right\}.
	\end{equation*}
	
\end{definition}
It is straightforward to check that $ \Vert \cdot \Vert_{\psi_{\alpha}} $ is a norm (on the vector space of all complex-valued random variables $X$ such that $\Vert X \Vert_{\psi_{\alpha}} < + \infty $). Furthermore, as shown in  \cite{rutickii1961convex}, any two random variables $X,Y$ satisfy the Hoelder inequality
\begin{equation}\label{ineq:Hoelderinequality}
\Vert XY \Vert_{\psi_1} \le \Vert X \Vert_{\psi_2} \Vert Y \Vert_{\psi_2}.
\end{equation}
If $ \Vert X \Vert_{\psi_1} <  \infty$ we will call a random variable sub-exponential. For sub-exponential random variables we state the Bernstein inequality in the version of \cite[Proposition 5.16]{vershyninlecturenotes}.
\begin{theorem}\label{bernsteinonedimensional}
	Let $X_1, \ldots, X_n$ be independent, mean zero sub-exponential random variables, i.e., $  \Vert X_i \Vert_{\psi_1} <  \infty $ for all $ i \in \lbrack r \rbrack $. Then with probability at least $ 1 - 2 \exp \left( -t \right) $
	\begin{equation*}
	\Big\vert \sum_{i=1}^{n} X_i \Big\vert \lesssim  \max \left\{  \sqrt{ t \sum_{i=1}^{n} \Vert X_i \Vert^2_{\psi_1} }; t \left( \underset{i \in \lbrack n \rbrack}{ \max}~\Vert X_i \Vert_{\psi_1} \right)   \right\}.
	\end{equation*}
\end{theorem}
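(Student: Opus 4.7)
The plan is to derive this one-dimensional Bernstein inequality via the classical moment-generating function / Chernoff route, so the work reduces to controlling the MGF of each centered $X_i$ in terms of its $\psi_1$-norm.

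First, I would establish the equivalence between the Orlicz norm condition $\|X\|_{\psi_1}\le K$ and a quantitative MGF bound for centered variables, namely that there are absolute constants $c_1,c_2>0$ such that
\begin{equation*}
 \mathbb{E}\bigl[\exp(\lambda X)\bigr]\le\exp\!\bigl(c_2\lambda^2 K^2\bigr)\qquad\text{whenever }|\lambda|\le\tfrac{c_1}{K}.
\end{equation*}
The forward direction follows by Taylor-expanding $\exp(\lambda X)$, using $\mathbb{E}X=0$ to kill the linear term, and bounding $\mathbb{E}|X|^k\lesssim k!\,K^k$, which is a direct consequence of the definition of $\|\cdot\|_{\psi_\alpha}$ together with the integral identity $\mathbb{E}|X|^k=\int_0^\infty k u^{k-1}\mathbb{P}(|X|>u)\,du$ and the tail bound $\mathbb{P}(|X|>u)\le 2\exp(-u/K)$ coming from Markov's inequality applied to $\exp(|X|/K)$. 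Once $|\lambda|K$ is small enough, the resulting series is dominated by a geometric series summing to the desired quadratic form.

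Next, I would use independence to multiplicativize: for $|\lambda|\le c_1/(\max_i\|X_i\|_{\psi_1})$,
\begin{equation*}
 \mathbb{E}\Bigl[\exp\!\Bigl(\lambda\sum_{i=1}^{n}X_i\Bigr)\Bigr]=\prod_{i=1}^{n}\mathbb{E}\bigl[\exp(\lambda X_i)\bigr]\le\exp\!\Bigl(c_2\lambda^2\sum_{i=1}^{n}\|X_i\|_{\psi_1}^{2}\Bigr).
\end{equation*}
Applying Markov's inequality to $\exp(\lambda\sum X_i)$ yields, for every $s>0$,
\begin{equation*}
 \mathbb{P}\!\Bigl(\sum_{i=1}^{n}X_i\ge s\Bigr)\le \exp\!\Bigl(-\lambda s+c_2\lambda^2\sigma^2\Bigr),\quad \sigma^2:=\sum_{i=1}^{n}\|X_i\|_{\psi_1}^{2},
\end{equation*}
valid over the constrained range $|\lambda|\le c_1/M$ with $M:=\max_i\|X_i\|_{\psi_1}$. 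I would then optimize over $\lambda$: the unconstrained optimum $\lambda^\star=s/(2c_2\sigma^2)$ gives the sub-Gaussian bound $\exp(-s^2/(4c_2\sigma^2))$ whenever $\lambda^\star\le c_1/M$, i.e.\ whenever $s\le 2c_1c_2\sigma^2/M$; in the complementary regime, choosing $\lambda=c_1/M$ produces the sub-exponential bound $\exp(-c_1 s/(2M))$.

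Finally, setting $s$ so that the exponent equals $-t$ gives $s\asymp\max\{\sqrt{t\sigma^2},\,tM\}$, matching the right-hand side of the theorem. Applying the same argument to $-\sum_iX_i$ and a union bound yields the two-sided estimate with probability at least $1-2\exp(-t)$. The main technical hurdle is the first step — deducing the quadratic MGF bound from the Orlicz definition with clean constants and correctly tracking the admissible range of $\lambda$ — since the whole two-regime structure in the conclusion reflects precisely this constraint on $\lambda$. Everything afterwards is a standard Chernoff optimization.
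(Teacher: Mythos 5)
Your proof is correct and follows exactly the classical Chernoff/MGF route: convert the $\psi_1$-Orlicz condition into a quadratic MGF bound on a restricted $\lambda$-range via the moment estimate $\mathbb{E}|X|^k\lesssim k!\,\|X\|_{\psi_1}^k$, multiplicativize by independence, optimize the exponent in the two regimes determined by whether the unconstrained minimizer $\lambda^\star=s/(2c_2\sigma^2)$ lies within the admissible range $|\lambda|\le c_1/M$, and finish with a union bound. The paper itself does not prove this statement; it is taken verbatim from Vershynin's lecture notes (Proposition~5.16), whose proof is precisely the argument you outline, so there is no substantive divergence between your route and the source the authors rely on.

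One small caveat worth flagging: the Orlicz norm is introduced in the paper for complex-valued random variables, and the theorem is subsequently applied to complex sums (e.g.\ in Step~2 of the proof of Lemma~\ref{mudecay}). Your MGF argument is formulated for real-valued $X_i$. This is not a gap per se, since the complex case reduces immediately to the real one by applying the bound separately to $\sum\mathrm{Re}\,X_i$ and $\sum\mathrm{Im}\,X_i$ and adjusting constants (both have $\psi_1$-norms controlled by $\|X_i\|_{\psi_1}$), but you should state that reduction explicitly if the claim is meant to cover complex variables as the paper uses it.
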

There are powerful generalizations of the Bernstein inequality for the matrix-valued case. Those generalizations were discovered first in \cite{ahlswede2002strong} and were refined in \cite{tropp2012user}. We will state a this theorem for unbounded random matrices, which is reformulation of a version of Koltchinskii \cite[Theorem 4]{koltchinskii_generalizedbernstein}.
\begin{theorem}[Matrix Bernstein Inequality]\label{matrixbernstein}
	Let $\alpha \in  \lbrack  1, + \infty ) $ and let $X_1, X_2, \ldots, X_n \in \mathbb{C}^{d_1 \times d_2} $ be independent random matrices that satisfy $\mathbb{E} \left[X_i\right]=0 $ for all $ i \in \lbrack n \rbrack $. Set  $R= \underset{i \in \lbrack n \rbrack}{\max} \Big\Vert \Vert X_i \Vert_{2 \rightarrow 2}  \Big\Vert_{\psi_{\alpha}} $ and
	\begin{equation}\label{definitionsigma}
	\sigma^2= \max \left\{ \Big\Vert \sum_{i=1}^{n}   \mathbb{E} \left[ X_i X^*_i \right]  \Big\Vert_{2\rightarrow 2}  ;  \Big\Vert \sum_{i=1}^{n}   \mathbb{E} \left[  X^*_i X_i \right]  \Big\Vert_{2\rightarrow 2}    \right\}.
	\end{equation}
	Set $Z = \sum_{i=1}^{n} X_i $. Then with probability at least $1 - \exp \left( -t \right) $ 
	\begin{equation*}
	\big\Vert Z \big\Vert_{2 \rightarrow 2} \lesssim  \max \left\{   \sigma \sqrt{ t + \log \left(d_1+d_2\right) }   ; \  R \left( 
	\log \left(  1+ \frac{nR^2}{\sigma^2} \right) \right)^{\frac{1}{\alpha}}   \left(t + \log \left( d_1+d_2 \right) \right)   \right\}.
	\end{equation*}
\end{theorem}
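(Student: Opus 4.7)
The statement is essentially a reformulation of Koltchinskii's generalized matrix Bernstein inequality (Theorem~4 of \cite{koltchinskii_generalizedbernstein}), so the plan is to reduce to the classical \emph{bounded} matrix Bernstein inequality of Tropp by truncating each $X_i$ at a level dictated by the Orlicz norm $\psi_\alpha$, and then controlling the two sources of error introduced by truncation: the event that some $X_i$ is actually larger than the threshold, and the mean drift caused by zeroing out the tails of mean-zero variables.

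First I would fix a truncation level $\tau$ of order $R\bigl(\log(1+nR^2/\sigma^2)\bigr)^{1/\alpha}$ (up to a factor that absorbs $t+\log(d_1+d_2)$) and split $X_i = Y_i + W_i$ with $Y_i := X_i\,\Ind_{\{\Vert X_i\Vert_{2\to 2}\le \tau\}}$. By the definition of $\Vert\cdot\Vert_{\psi_\alpha}$ one has $\mathbb{P}(\Vert X_i\Vert_{2\to 2}>\tau)\le 2\exp(-(\tau/R)^\alpha)$; a union bound over $i\in[n]$ together with the choice of $\tau$ then ensures that on an event of probability at least $1-\tfrac12\exp(-t)$ every $W_i$ vanishes, so that $Z = \sum_i Y_i$ deterministically on that event.

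Next I would write $Z = \sum_i(Y_i-\mathbb{E}Y_i) + \sum_i \mathbb{E}Y_i$. The centred truncated matrices $\tilde Y_i := Y_i-\mathbb{E}Y_i$ are bounded by $2\tau$ in operator norm, and an operator-monotonicity calculation gives $\sum_i \mathbb{E}[\tilde Y_i\tilde Y_i^*]\preceq \sum_i\mathbb{E}[Y_iY_i^*]\preceq \sum_i\mathbb{E}[X_iX_i^*]$, with the analogous bound on the other side, so the variance proxy of $\sum_i \tilde Y_i$ is controlled by $\sigma^2$. Tropp's bounded matrix Bernstein inequality then delivers the desired bound for $\Vert\sum_i\tilde Y_i\Vert_{2\to 2}$ with probability at least $1-\tfrac12\exp(-t)$. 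The remaining centering correction $\Vert\sum_i\mathbb{E}Y_i\Vert_{2\to 2} = \Vert\sum_i\mathbb{E}[X_i\,\Ind_{\{\Vert X_i\Vert>\tau\}}]\Vert_{2\to 2}$, which arises because $\mathbb{E}X_i=0$, I would control by a tail integral $\mathbb{E}\bigl[\Vert X_i\Vert_{2\to 2}\Ind_{\{\Vert X_i\Vert>\tau\}}\bigr]$ whose $\psi_\alpha$-tail decay, combined with the specified choice of $\tau$, makes this contribution subdominant to the subgaussian term $\sigma\sqrt{t+\log(d_1+d_2)}$.

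The main obstacle, and the reason the statement credits Koltchinskii, is obtaining the logarithmic factor in the form $\bigl(\log(1+nR^2/\sigma^2)\bigr)^{1/\alpha}$ rather than the coarser $\bigl(\log n + t + \log(d_1+d_2)\bigr)^{1/\alpha}$ that a single naive truncation naturally produces. Decoupling the "dimension" logarithm from the "deviation" logarithm requires either (i) optimising $\tau$ as a function of the probability level inside a finer peeling-type argument, or (ii) directly invoking \cite[Theorem~4]{koltchinskii_generalizedbernstein}, which effectively performs such a peeling and keeps only the intrinsic scale $nR^2/\sigma^2$ in the Orlicz logarithm; I would follow the latter route and merely verify that the hypotheses of that theorem are satisfied by our $X_i$ and that its conclusion rewrites into the displayed inequality.
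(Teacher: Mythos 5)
Your final route --- invoking Koltchinskii's Theorem~4 --- is exactly what the paper does: the paper's ``proof'' consists of the remark that the self-adjoint case follows from \cite[Theorem~4]{koltchinskii_generalizedbernstein} with $\psi_\alpha(u)=\exp(u^\alpha)-1$ and $\delta=1$, and the rectangular case then follows by self-adjoint dilation as in \cite[Section~4.6.5]{tropplecture}. So the core reference and the reason for crediting Koltchinskii both match.

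There are two points where your write-up diverges from what the paper actually needs. First, you never mention the passage from the Hermitian formulation in Koltchinskii's theorem to the rectangular matrices $X_i\in\C^{d_1\times d_2}$ of the statement; this is a genuine (if standard) step, and the paper explicitly flags the self-adjoint dilation
$\begin{pmatrix}0 & X_i\\ X_i^* & 0\end{pmatrix}$
as the way to transfer the bound --- note that this dilation is also what makes the term $\log(d_1+d_2)$ appear, and that the operator norm and the two-sided variance proxy $\sigma^2$ are preserved under it. ``Verifying the hypotheses'' of Koltchinskii's theorem for your $X_i$ is not quite enough, because those hypotheses concern square Hermitian matrices. Second, your preliminary truncation sketch is internally inconsistent: with $\tau\sim R\bigl(\log(1+nR^2/\sigma^2)\bigr)^{1/\alpha}(t+\log(d_1+d_2))$, the union bound requires $(\tau/R)^\alpha\gtrsim t+\log n$, which fails whenever $n\gg d_1+d_2$ and $\sigma^2\sim nR^2$; you do acknowledge this obstruction in your last paragraph, so the sketch reads as motivation rather than proof, but as written it would not survive a careful reader. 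Since you ultimately defer to Koltchinskii anyway, the only substantive gap relative to the paper is the omitted dilation step.
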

	Indeed, when $d_1 = d_2$ and the matrices $X_1,X_2, \ldots, X_n$ are self-adjoint, Theorem \ref{matrixbernstein} can be deduced from \cite[Theorem 4]{koltchinskii_generalizedbernstein} (by choosing $ \psi_{\alpha} \left( u\right) = \exp \left( u^{\alpha} \right) -1  $ and, for example, $ \delta =1 $). In order to pass from self-adjoint matrices to general matrices $ X_i \in \mathbb{C}^{d_1 \times d_2} $ one may use self-adjoint dilations and argue as in \cite[Section 4.6.5]{tropplecture}.\\


The  matrix Bernstein inequality is a powerful tool, which works in many different situations. However, for some more specific examples of random matrices there are other tools, which yield better estimates and which are easier to apply. The following theorem is useful, when the matrix $Z$ is the sum of matrices of the type $\gamma_i X_i $ where $X_i$ is a fixed matrix and $\gamma_i$ is a random variable which are circular-symmetric complex normally distributed. It is an immediate corollary of \cite[Theorem 4.1.1]{tropplecture}, where matrices of this type are called Matrix Gaussian Series. For completeness, we include a proof in the Appendix.
\begin{corollary}[Matrix Gaussian Series]\label{gaussianconcentration}
	Let $ X_1, \ldots, X_n \in \mathbb{C}^{ d_1 \times d_2} $ be (fixed)
	matrices, and let $ \gamma_1, \ldots, \gamma_n $ be independent, identically distributed
	random variables, where $\gamma_i$ has circular symmetric gaussian distribution
	$ \mathcal{CN} \left(0,1\right)  $.
	Set $Z= \sum^{n}_{i=1} \gamma_i X_i$ and
	\begin{align*}
	\sigma^2 &= \max \left\{  \Big\Vert  \mathbb{E} \left[ Z Z^* \right] \Big\Vert_{2 \rightarrow 2},  \Big\Vert \mathbb{E} \left[ Z^* Z \right] \Big\Vert_{2 \rightarrow 2}    \right\}\\
	&= \max \left\{ \Big\Vert \sum_{i=1}^{n} X_i X^*_i \Big\Vert_{2 \rightarrow 2} ; \Big\Vert \sum_{i=1}^{n} X^*_i X_i \Big\Vert_{2 \rightarrow 2}   \right\}.
	\end{align*}
	Then, for all $t>0$, with probability at least $1-2\exp \left(-t\right) $ 
	\begin{equation*}
	\big\Vert Z \big\Vert_{2 \rightarrow 2} \le  \sigma \sqrt{2 \left( t + \log \left(d_1 + d_2 \right) \right)}.
	\end{equation*}
	
\end{corollary}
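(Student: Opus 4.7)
The plan is to reduce the complex circular-symmetric Gaussian series to the standard real case handled by Tropp's Matrix Gaussian Series bound \cite[Theorem 4.1.1]{tropplecture} via a Hermitian dilation.

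First, I would split each $\gamma_i$ into its real and imaginary parts: since $\gamma_i \sim \mathcal{CN}(0,1)$ we may write $\gamma_i = (g_i + i h_i)/\sqrt{2}$ where $g_1, \ldots, g_n, h_1, \ldots, h_n$ are independent real standard normals. This expresses $Z$ as the $2n$-term real Gaussian series
\[
Z = \frac{1}{\sqrt{2}}\sum_{i=1}^n g_i X_i + \frac{i}{\sqrt{2}}\sum_{i=1}^n h_i X_i,
\]
whose coefficient matrices $X_i$ and $iX_i$ are generally non-Hermitian. To bring in the Hermitian version of Tropp's bound I would pass to self-adjoint dilations by setting
\[
\tilde Z = \begin{pmatrix} 0 & Z \\ Z^* & 0 \end{pmatrix}, \qquad
\tilde X_i = \begin{pmatrix} 0 & X_i \\ X_i^* & 0 \end{pmatrix}, \qquad
\tilde Y_i = \begin{pmatrix} 0 & iX_i \\ -iX_i^* & 0 \end{pmatrix}.
\]
Then $\tilde X_i, \tilde Y_i \in \C^{(d_1+d_2)\times(d_1+d_2)}$ are Hermitian, a short block computation gives $\tilde Z = \tfrac{1}{\sqrt{2}} \sum_{i=1}^n (g_i \tilde X_i + h_i \tilde Y_i)$, and the identity $\Vert \tilde Z \Vert_{2 \rightarrow 2} = \Vert Z \Vert_{2 \rightarrow 2}$ converts the target tail bound on $Z$ into one on $\tilde Z$.

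Tropp's matrix Gaussian series theorem then yields $\mathbb{P}(\Vert \tilde Z \Vert_{2 \rightarrow 2} \ge s) \le 2(d_1 + d_2)\exp(-s^2/(2 v))$ with $v = \bigl\Vert \tfrac{1}{2} \sum_{i=1}^n (\tilde X_i^2 + \tilde Y_i^2) \bigr\Vert_{2 \rightarrow 2}$. Squaring block by block gives $\tilde X_i^2 = \tilde Y_i^2 = \diag(X_i X_i^*, X_i^* X_i)$, so that $v = \max\{\Vert \sum_i X_i X_i^* \Vert_{2 \rightarrow 2}, \Vert \sum_i X_i^* X_i \Vert_{2 \rightarrow 2}\} = \sigma^2$. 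Choosing $s = \sigma \sqrt{2 (t + \log(d_1 + d_2))}$ makes the exponent $-t - \log(d_1+d_2)$, which absorbs the prefactor $2(d_1+d_2)$ and produces exactly the stated failure probability $2 e^{-t}$.

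There is no real obstacle here; the computation is a routine complex-to-real reduction. The one point requiring care is the $1/\sqrt{2}$ normalization in $\gamma_i = (g_i + i h_i)/\sqrt{2}$: because $\tilde X_i$ and $\tilde Y_i$ square to the same block-diagonal matrix, the factor $1/2$ arising from $(1/\sqrt{2})^2$ is precisely compensated by adding the two contributions, so $v$ equals $\sigma^2$ rather than $2\sigma^2$ and the constant in the conclusion matches the statement.
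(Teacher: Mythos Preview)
Your argument is correct. The real/imaginary split, the Hermitian dilation, the identities $\tilde X_i^2=\tilde Y_i^2=\diag(X_iX_i^*,X_i^*X_i)$, and the resulting variance $v=\sigma^2$ are all right, and the choice $s=\sigma\sqrt{2(t+\log(d_1+d_2))}$ gives exactly the failure probability $2e^{-t}$.

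The paper takes a closely related but organizationally different route: it applies the triangle inequality to split $\|Z\|_{2\to 2}\le \|\sum_i \mathrm{Re}(\gamma_i)X_i\|_{2\to 2}+\|\sum_i \mathrm{Im}(\gamma_i)X_i\|_{2\to 2}$, invokes the (rectangular) Matrix Gaussian Series bound from \cite{tropplecture} separately on each real-coefficient piece, and then combines the two events by a union bound. Your approach instead treats $Z$ as a single $2n$-term real Gaussian series, performs the Hermitian dilation by hand, and applies the Hermitian version of Tropp's theorem once. What your route buys is that the constant $\sqrt{2}$ in the conclusion drops out transparently from the identity $v=\sigma^2$, without relying on the triangle inequality being sharp here; the paper's route is shorter on paper because the dilation is hidden inside the cited rectangular result.
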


.
\subsection{Suprema of Chaos Processes}\label{sec:supremaandcoveringnumbers}
In addition to sums of random matrices, random variables of the form $ \underset{A \in \mathcal{X}}{\sup} \Vert A \xi \Vert  $, where $ \xi $ is a random vector and $ \mathcal{X}$ is a class of matrices,  will play an important role in this paper. To state a tail bound for such random variables, we need the $\gamma_2$-functional, a geometric quantity introduced by Talagrand (see \cite{talagrand2014upper}).
\begin{definition}
Let $ \left( X,  \vertiii{\cdot} \right) $ be a Banach space and suppose that $S \subset X $. We say that a sequence $ \left( S_n \right)_{n \ge 0}  $ of subsets of $ S $ is admissible, if $ \vert S_0 \vert = 1 $ and  $ \vert  S_n \vert \le 2^{2^n} $ for all $ n \ge 1 $. Then we set
\begin{equation*}
\gamma_2 \left( S,  \vertiii{\cdot} \right) = \inf_{\left( S_n \right)_{n \ge 0}} \underset{s \in S}{\sup} \sum_{n=0}^{\infty} 2^{n/2} \underset{s \in S_n}{\inf} \vertiii{ s-s_n} ,
\end{equation*}
where the infimum is taken over all admissible sequences $ \left( S_n \right)_{n \ge 0}  $.
\end{definition}
The $ \gamma_2$-functional fulfills the following inequality.
\begin{lemma}[\cite{riplikeproperties}, Lemma 2.1]\label{gamma_2sum}
	Let $\left(X, \vertiii{\cdot} \right)$ be an arbitrary Banach space. Suppose that $A,B \subset X$. Then
	\begin{equation*}
	\gamma_2 \left(A + B, \vertiii{\cdot} \right) \lesssim  \gamma_2 \left(A ,  \vertiii{\cdot} \right) + \gamma_2 \left( B, \vertiii{\cdot} \right) .
	\end{equation*}
\end{lemma}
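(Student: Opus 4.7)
The plan is to build an admissible sequence for $A+B$ by Minkowski-summing admissible sequences for $A$ and $B$, compensating for the resulting doubling of cardinalities via an index shift. Concretely, start from any admissible sequences $(A_n)_{n\ge 0}$ and $(B_n)_{n\ge 0}$ certifying nearly optimal values of $\gamma_2(A,\vertiii{\cdot})$ and $\gamma_2(B,\vertiii{\cdot})$, respectively. Define $C_0 := A_0 + B_0$, which is a singleton since $|A_0|=|B_0|=1$, and $C_n := A_{n-1} + B_{n-1}$ for $n\ge 1$. Then $|C_n|\le 2^{2^{n-1}}\cdot 2^{2^{n-1}}=2^{2^n}$, so $(C_n)_{n\ge 0}$ is admissible for $A+B$.

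Next, I would estimate the distance functional along this new sequence. For an arbitrary $s = a+b \in A+B$, the triangle inequality yields $d(s, C_n) \le d(a, A_{n-1}) + d(b, B_{n-1})$ for $n\ge 1$, while for $n=0$ the bound $d(s,C_0)\le d(a,A_0)+d(b,B_0)$ is immediate. Summing with the weights $2^{n/2}$ and relabeling $m = n-1$, one obtains
\begin{equation*}
\sum_{n\ge 0} 2^{n/2}\, d(s,C_n) \le (1+\sqrt{2})\left(\sum_{m\ge 0} 2^{m/2}\, d(a,A_m) + \sum_{m\ge 0} 2^{m/2}\, d(b,B_m)\right).
\end{equation*}
Taking the supremum over $s=a+b\in A+B$ decouples into independent suprema over $a\in A$ and $b\in B$, which already gives a bound of the form $(1+\sqrt{2})$ times the corresponding quantities for the chosen admissible sequences for $A$ and $B$.

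Finally, I would take the infimum over the two original admissible sequences. Since the admissible sequence $(C_n)$ we constructed is a valid competitor for the infimum defining $\gamma_2(A+B,\vertiii{\cdot})$, this yields $\gamma_2(A+B,\vertiii{\cdot}) \le (1+\sqrt{2})\bigl(\gamma_2(A,\vertiii{\cdot}) + \gamma_2(B,\vertiii{\cdot})\bigr)$, which is exactly the stated $\lesssim$ bound.

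I do not anticipate any serious obstacle: the only subtle point is the cardinality bookkeeping that forces the index shift (without which $|A_n+B_n|\le 2^{2^{n+1}}$ would violate admissibility), and the associated absorption of the $n=0$ boundary term into the constant. Everything else is a direct application of the triangle inequality and the definition of the $\gamma_2$-functional.
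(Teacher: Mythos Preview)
Your argument is correct and is precisely the standard proof of this subadditivity property of the $\gamma_2$-functional. Note that the paper does not supply its own proof of this lemma---it is quoted directly from \cite{riplikeproperties}---so there is nothing further to compare; your index-shift construction $C_n=A_{n-1}+B_{n-1}$ with the resulting constant $1+\sqrt{2}$ is exactly the argument one finds in that reference.
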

Let $ \mathcal{X}$ be any set of matrices and define $  d_{F} \left( S \right) = \underset{ A \in \mathcal{X}}{\sup}~  \Vert A \Vert_F$ and $  d_{ 2 \rightarrow 2} \left( S \right) = \underset{ A \in \mathcal{X}}{\sup}~ \Vert A \Vert_{2 \rightarrow 2} $. We can now state the following theorem, which will be crucial in Section \ref{sec:localisometry}.
\begin{theorem}\cite[Theorem 1.4]{KMF2015}\label{KMR} 
	Let $ \mathcal{X} $ be a  symmetric set of matrices, i.e., \mbox{$ \mathcal{X} = - \mathcal{X} $} and let $ \xi $ be a random vector whose entries $\xi_i $ are independent circular-symmetric standard normal random variables with mean $0$ and variance $1$. Set
	\begin{align*}
	E&= \gamma_2 \left( \mathcal{X}, \Vert \cdot \Vert_{2\rightarrow 2}  \right) \left(  \gamma_2 \left( \mathcal{X}, \Vert \cdot \Vert_{2\rightarrow 2}  \right) + d_F (\mathcal{X}) \right) \\
	V&= d_{2 \rightarrow 2} \left( \mathcal{X} \right) \left( \gamma_2 \left( \mathcal{X}, \Vert \cdot \Vert_{2\rightarrow 2}  \right) +   d_F (\mathcal{X}) \right)\\
	U&= d^2_{2 \rightarrow 2} \left( \mathcal{X} \right) 
	\end{align*}
	Then, for $t>0$, 
	\begin{equation*}
	\mathbb{P} \left( \underset{A \in \mathcal{X}}{\sup} \big\vert \Vert A \xi \Vert_{\ell_2}^2 - \mathbb{E} \Vert A\xi \Vert_{\ell_2}^2  \big\vert \ge c_1 E + t \right) \le 2 \exp \left( -c_2 \min \left( \frac{t^2}{V^2}, \frac{t}{U} \right) \right).
	\end{equation*}
	The constants $c_1$ and $c_2 $ are universal.
\end{theorem}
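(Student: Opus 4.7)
The strategy is to view $\|A\xi\|_{\ell_2}^2 - \mathbb{E}\|A\xi\|_{\ell_2}^2 = \xi^*(A^*A)\xi - \trace(A^*A)$ as the supremum of a centered second-order complex Gaussian chaos indexed by $\mathcal{X}$, and to control it by generic chaining with respect to two metrics on $\mathcal{X}$: the operator norm $\|\cdot\|_{2\to 2}$ and the Frobenius norm $\|\cdot\|_F$. The output of this chaining argument will be a moment bound which, by a standard Markov conversion, yields the mixed Gaussian/sub-exponential tail of the theorem.

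The first step is to apply a decoupling inequality (in the style of de la Pe\~na-Montgomery-Smith) to replace the homogeneous chaos $\xi^* M \xi - \mathbb{E} \xi^* M \xi$ (with $M = A^*A$) by the decoupled bilinear form $\xi^* M \xi' + (\xi')^* M \xi$, where $\xi'$ is an independent copy of $\xi$. Losing only a universal constant, it suffices to bound $\sup_{A \in \mathcal{X}} \bigl| \xi^*(A^*A)\xi' + (\xi')^*(A^*A)\xi \bigr|$. For any pair $A, B \in \mathcal{X}$, a Hanson-Wright type computation (applied conditionally on one of $\xi,\xi'$ at a time, then combined) shows that the increment of this process has Bernstein-type tails controlled by $\|A^*A - B^*B\|_F$ and $\|A^*A - B^*B\|_{2 \to 2}$, which in turn are dominated by $\|A-B\|_{2 \to 2}$ amplified by either $d_F(\mathcal{X})$ or $d_{2\to 2}(\mathcal{X})$.

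Next I would run Talagrand-style generic chaining along an admissible sequence realizing $\gamma_2(\mathcal{X}, \|\cdot\|_{2\to 2})$ up to constants. Because the increments satisfy a mixed sub-Gaussian/sub-exponential concentration with both parameters driven by $\|\cdot\|_{2\to 2}$, the generic chaining theorem for Bernstein-type processes (Dirksen's refinement of Talagrand's bound) delivers a moment estimate whose Gaussian component scales like $\gamma_2(\mathcal{X}, \|\cdot\|_{2\to 2}) \, d_F(\mathcal{X})$ and whose sub-exponential component scales like $\gamma_2(\mathcal{X}, \|\cdot\|_{2\to 2})^2$. Summing reproduces the quantity $E$, and identifying the Bernstein variance and envelope parameters along the chain gives $V = d_{2\to 2}(\mathcal{X})\bigl( \gamma_2(\mathcal{X}, \|\cdot\|_{2\to 2}) + d_F(\mathcal{X}) \bigr)$ and $U = d_{2\to 2}^2(\mathcal{X})$, yielding precisely the tail bound stated.

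The main obstacle, and essentially the whole technical content, is pairing $\gamma_2$ with the operator norm while allowing the Frobenius diameter $d_F(\mathcal{X})$ to enter only as a single global factor rather than as a full $\gamma_2(\mathcal{X}, \|\cdot\|_F)$; the coarser bound one would get by chaining against $\|\cdot\|_F$ alone is too weak for the intended applications. Achieving the sharp form requires running the chaining with two pseudo-metrics in tandem, exploiting the structure of the decoupled Gaussian bilinear form to confine Frobenius-scale increments to the coarsest chain level. The complex-valued case adds no genuine difficulty: each $\xi_i \in \mathbb{C}$ decomposes into independent real and imaginary standard Gaussians of variance $1/2$, and the entire argument then reduces to the real-Gaussian version of Krahmer-Mendelson-Rauhut, whose constants $c_1, c_2$ carry over.
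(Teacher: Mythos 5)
The paper does not prove this statement: Theorem~\ref{KMR} is imported verbatim from Krahmer--Mendelson--Rauhut \cite{KMF2015} (their Theorem 1.4), so there is no ``paper's own proof'' to compare against. Your sketch is therefore a proposal to reprove a cited external result.

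That said, your outline does follow the actual Krahmer--Mendelson--Rauhut strategy at a high level (decoupling the quadratic form, controlling increments via the two metrics $\|A-B\|_{2\to 2}$ amplified by $d_F(\mathcal{X})$ and $d_{2\to 2}(\mathcal{X})$, chaining, and reducing the complex case to the real one). However, it elides the single hardest step, and elides it in a way that makes the sketch as written incorrect. After you obtain Bernstein-type increments with sub-Gaussian parameter $\sim d_F(\mathcal{X})\,\|A-B\|_{2\to 2}$ and sub-exponential parameter $\sim d_{2\to 2}(\mathcal{X})\,\|A-B\|_{2\to 2}$, a direct application of the two-metric Bernstein chaining theorem (Talagrand, Dirksen) yields a bound of the form
\[
\gamma_2\bigl(\mathcal{X},\|\cdot\|_{2\to 2}\bigr)\, d_F(\mathcal{X}) \;+\; \gamma_1\bigl(\mathcal{X},\|\cdot\|_{2\to 2}\bigr)\, d_{2\to 2}(\mathcal{X}),
\]
with a $\gamma_1$-functional in the sub-exponential term, not the $\gamma_2^2$ that appears in the theorem. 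Passing from $\gamma_1 \cdot d_{2\to 2}$ to $\gamma_2^2$ is a genuine improvement (for, e.g., a high-dimensional cube one has $\gamma_1\,d_{2\to 2} \gg \gamma_2^2$), and it is precisely the technical core of Krahmer--Mendelson--Rauhut: they do not simply invoke a generic Bernstein-chaining black box, but run a chaining argument specifically tailored to the decoupled chaos that exploits, at each scale, the fact that the ``envelope'' part of the process is itself a supremum of a Gaussian process indexed by the same admissible net. By asserting that ``Dirksen's refinement delivers a moment estimate whose sub-exponential component scales like $\gamma_2^2$,'' you are quoting the conclusion you are supposed to prove; that is where the real work of the KMR proof lies, and your sketch does not contain an argument for it.
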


Dudley's inequality yields a relation of the $ \gamma_2$-functional to covering numbers. Recall that the covering number $ N \left( S,  \vertiii{\cdot}, \varepsilon \right) $ is the minimum number of open $\vertiii{\cdot}$-balls with radius $ \varepsilon$, whose midpoint is contained in $S$, which are needed to cover $ S$. More precisely, Dudley's inequality (see \cite[Proposition 2.2.10]{talagrand2014upper}, \cite{DUDLEY1967290}) states that
\begin{equation}\label{dudley}
\gamma_2 \left( S,  \vertiii{\cdot} \right) \lesssim  \int_{0}^{d_{\vertiii{\cdot}} \left( S \right) } \sqrt{ \log N \left( S , \vertiii{\cdot} ,\varepsilon \right) } d\varepsilon,
\end{equation}
where $ d_{\vertiii{\cdot}} \left( S \right) = \underset{ x \in S}{\sup}~ \vertiii{x} $. For this reason, we will need some bounds for covering numbers, which are summarized in the following section.

\subsection{Covering Numbers}\label{sec:coveringnumbers}
The following lemma is a slight modification of the Maurey lemma by Carl \cite{Carl1985}. (See also \cite[Lemma 4.2]{KMF2015} for a formulation of this lemma using notation which is closer to the notation in this paper.)
\begin{lemma}\label{Maurey}
Let $\left(X,   \vertiii{\cdot}  \right)$ be a normed space, consider a finite set $\mathcal{U} \subset X$, and assume that for every $L \in \mathbb{N}$ and $\left( u_1, \ldots , u_L \right) \in \mathcal{U}^L$, $\mathbb{E}_{\varepsilon} \vertiii{ \sum_{j=1}^{L} \varepsilon_j u_j } \le A \sqrt{L} $, where $\left( \varepsilon_j \right)_{j=1}^{L} $ denotes a Rademacher vector. Then, for every $u > 0$, 
	\begin{equation*}
	\log N \left( \text{conv} \left( \mathcal{U} \right)  , \vertiii{ \cdot }  ,u\right) \lesssim  \frac{A^2}{u^2}  \log \vert \mathcal{U} \vert,
	\end{equation*}
	where $  \vert \mathcal{U} \vert $ denotes the cardinality of $ \mathcal{U}  $.
\end{lemma}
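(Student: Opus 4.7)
The plan is to use the classical empirical method (Maurey's probabilistic argument). Take an arbitrary $x \in \text{conv}(\mathcal{U})$ and write $x = \sum_j \lambda_j u_j$ as a finite convex combination, so that $(\lambda_j)$ defines a probability distribution on $\mathcal{U}$. Draw $L$ i.i.d.\ samples $U_1,\ldots,U_L$ from this distribution and form the empirical mean $\bar U := \tfrac{1}{L}\sum_{j=1}^L U_j$; by construction $\mathbb{E}[\bar U] = x$. The goal is to show that $\bar U$ is close to $x$ in $\vertiii{\cdot}$ on average, so that at least one realization gives a good approximation.

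To estimate $\mathbb{E} \vertiii{\bar U - x}$ I would apply the standard symmetrization trick: introducing an independent copy $(U'_j)$ with the same law and a Rademacher sequence $(\varepsilon_j)$,
\begin{equation*}
\mathbb{E}\vertiii{\bar U - x} \le \tfrac{1}{L}\mathbb{E}\vertiii{\textstyle\sum_j (U_j - U'_j)} \le \tfrac{2}{L}\mathbb{E}_U \mathbb{E}_\varepsilon \vertiii{\textstyle\sum_j \varepsilon_j U_j}.
\end{equation*}
Conditioning on the realization $(U_j) \in \mathcal{U}^L$, the hypothesis bounds the inner expectation by $A\sqrt{L}$, and hence $\mathbb{E}\vertiii{\bar U - x}\le 2A/\sqrt{L}$. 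Therefore there exists a deterministic tuple $(u_1,\ldots,u_L)\in \mathcal{U}^L$ with $\vertiii{\tfrac{1}{L}\sum_j u_j - x}\le 2A/\sqrt L$.

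The number of empirical means $\tfrac{1}{L}\sum_j u_j$ as $(u_1,\ldots,u_L)$ ranges over $\mathcal{U}^L$ is at most $|\mathcal{U}|^L$, so these points form a $(2A/\sqrt L)$-net for $\text{conv}(\mathcal{U})$ in the norm $\vertiii{\cdot}$. Given a target resolution $u>0$, choose $L = \lceil 4A^2/u^2\rceil$; then
\begin{equation*}
\log N\bigl(\text{conv}(\mathcal U),\vertiii{\cdot},u\bigr) \le L \log |\mathcal{U}| \lesssim \frac{A^2}{u^2}\log|\mathcal{U}|,
\end{equation*}
which is the desired bound. (If the chosen net points lie in $\text{conv}(\mathcal U)$ but not exactly in the center set required by the covering-number definition, a standard factor-of-two argument recovers proper centers at the price of an absolute constant.)

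The only mildly subtle step is the symmetrization combined with conditioning: one must be careful that the hypothesis applies pointwise to \emph{every} tuple $(u_1,\ldots,u_L)\in\mathcal{U}^L$ (which it does, as stated), so that after taking the outer expectation over $(U_j)$ the bound $A\sqrt L$ survives. Everything else is routine.
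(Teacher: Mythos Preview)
Your argument is correct and is precisely the standard proof of Maurey's lemma via the empirical method with symmetrization. The paper does not actually supply a proof of this lemma: it merely states it as a slight modification of Carl's version \cite{Carl1985} and refers to \cite[Lemma 4.2]{KMF2015} for a formulation in similar notation. So there is nothing in the paper to compare against; your write-up fills in exactly the argument those references contain.

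One cosmetic remark: the final inequality $L\log|\mathcal U|\lesssim \tfrac{A^2}{u^2}\log|\mathcal U|$ with $L=\lceil 4A^2/u^2\rceil$ is only literally valid when $A^2/u^2\gtrsim 1$. For $u>2A$ the ceiling forces $L=1$, but in that regime the hypothesis with $L=1$ gives $\vertiii{v}\le A$ for every $v\in\mathcal U$, so $\operatorname{conv}(\mathcal U)$ has diameter at most $2A<u$ and a single point covers it; hence $\log N=0$ and the bound holds trivially. You may want to add one sentence to that effect so the case distinction is explicit.
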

Let $ V \subset \mathbb{R}^n$ be a compact, convex, and symmetric set which is absorbing, i.e. $ \mathbb{R}^n = \underset{t>0}{\bigcup} t V $. We will denote by $ \Vert \cdot \Vert_{V}$ the norm associated with $V$, i.e., the unique norm whose unit ball is given by $V$. Furthermore, denote by $V^{\circ}$ the polar body of $V$, i.e., 
\begin{equation*}
V^{\circ} = \left\{ u \in \mathbb{R}^n:  \langle u,v \rangle \le 1 \text{ for all } v \in V  \right\}.
\end{equation*}
An elementary consequence of the definition is that the dual norm of $ \Vert \cdot \Vert_{V} $ is given by $\Vert \cdot \Vert_{V^{\circ}} $. 
The following result about covering numbers of polar bodies solved a special instance of a conjecture by Pietsch \cite{pietschconjecture}.
\begin{theorem}[\cite{artsteinmilman}]\label{dualitymetricentropy}
	As above, assume $ V \subset \mathbb{R}^n$ to be a compact, convex, symmetric, and absorbing set. Then, for all $ \varepsilon>0$
	\begin{equation*}
	\log N \left( B_{1} \left( 0 \right)  ,\Vert \cdot \Vert_{V}   , \varepsilon \right)  \lesssim \log N \left(  V^{\circ}  ,\Vert \cdot \Vert_{\ell_2}  , c \varepsilon \right),
	\end{equation*}
	where $c>0$ is a universal constant and $B \left(0,1\right):= \left\{ x \in \mathbb{R}^n: \Vert x \Vert_{\ell_2} \le 1 \right\} $.
\end{theorem}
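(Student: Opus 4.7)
This is the metric-entropy duality theorem of Artstein-Avidan, Milman, and Szarek, which resolves Pietsch's conjecture in the case when one of the bodies is a Euclidean ball. My plan is to follow their approach at a high level. First, I would recast the statement in geometric language: since $\|\cdot\|_V$ has unit ball $V$ and its dual norm is $\|\cdot\|_{V^\circ}$, writing $B := B_1(0)$ the target inequality becomes
\[
\log N(B,\varepsilon V)\lesssim \log N(V^\circ, c\varepsilon B).
\]
Passing from covering to packing numbers via the standard equivalence $M(\cdot,2\cdot)\le N(\cdot,\cdot)\le M(\cdot,\cdot)$, it then suffices to compare a maximal $\varepsilon V$-separated set in $B$ with a maximal $\varepsilon B$-separated set in $V^\circ$, up to an absolute-constant loss in the separation parameter.

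Second, I would prove the inequality in the (easy) case that $V$ is an \emph{ellipsoid}, $V = TB$ for a positive self-adjoint operator $T$. Then $V^\circ = T^{-1}B$ is also an ellipsoid, and both covering numbers decouple along the eigenbasis of $T$. The statement reduces to a product of one-dimensional estimates in each eigendirection, where the duality is elementary and the constant obtained is absolute. This provides a clean base case.

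Third, and this is the main step, I would lift from ellipsoids to arbitrary convex bodies $V$ via the Gaussian-measure comparison that forms the technical core of \cite{artsteinmilman}. The idea is to associate to a cover $V^\circ \subset \bigcup_{i=1}^{N}(c\varepsilon B + x_i)$ a family of Gaussian test functionals whose expectations relate the $V$-entropy of $B$ to the Euclidean entropy of $V^\circ$. This relies on the $K$-convexity of $\ell_2$ (which furnishes an absolute-constant reversal of Gaussian averages), Pisier-type abstract duality of entropy numbers, and a Brascamp-Lieb/Ball-type inequality to control the interaction between $V$ and $B$. The main obstacle is precisely this lifting: naive substitutes, such as replacing $V$ by its John ellipsoid and invoking the ellipsoidal case, lose a $\sqrt{n}$ factor, and the whole content of the theorem is to keep the constant $c$ dimension-free. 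The Euclidean hypothesis on one side of the duality enters essentially at this point, since without it the analogous statement is known to fail.
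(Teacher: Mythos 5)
The paper does not prove Theorem~\ref{dualitymetricentropy}; it is stated purely as a citation of Artstein-Avidan, Milman, and Szarek \cite{artsteinmilman} and is used as a black box exactly once, inside the proof of Lemma~\ref{maureyapplied}. There is therefore no in-paper argument against which your sketch can be matched, and the question of whether you ``take the same approach as the paper'' is vacuous here.

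Taken on its own terms as a reconstruction of the cited proof, your first two steps (reduction to packing numbers and the ellipsoidal base case) are fine. Your third step, however, misattributes the technical core. You describe the lift from ellipsoids to general bodies as resting on $K$-convexity of $\ell_2$, Pisier-type abstract duality, and a Brascamp--Lieb-type inequality. That is essentially the toolkit of the \emph{earlier} partial results (Bourgain--Pajor--Szarek--Tomczak-Jaegermann and related work), which proved duality of entropy under a $K$-convexity hypothesis; the contribution of \cite{artsteinmilman} was precisely to bypass that route via a direct geometric ``separation lemma'' which is then iterated, with the Euclidean structure entering through an elementary hyperplane-separation step rather than through Gaussian-process machinery or Brascamp--Lieb. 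If you intend to write out a full proof you should work from the argument of \cite{artsteinmilman} itself rather than the $K$-convexity machinery you outline, as the latter will not by itself give the statement in the dimension-free form recorded here. None of this affects the paper under review, which does not attempt a proof.
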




%
%

\section{Outline of the Proof}
\label{sec:outline}
In this section we give a rough outline of our
proof and highlight the main differences to
previous work (\cite{ARR2012} and \cite{lingstrohmer}). In particular,
we want to point out those parts, which enabled us to overcome the suboptimal scaling in $r$. The overall strategy of our proof remains similar to
the one in \cite{lingstrohmer} and in \cite{ARR2012}: First, we will
prove sufficient conditions for recovery. These conditions will rely
on the existence of a so-called inexact dual certificate. In the
second step this certificate will be constructed via the Golfing
Scheme, a method which has been introduced by Gross and others (see
\cite{gross2011recovering}).\\

As already mentioned, the first part of the proof consists of showing that
the existence of the inexact dual certificate is a sufficient condition for recovery. 
This will be proven in Section \ref{sectionsufficientconditions}. The underlying observation is that in certain cases,
it suffices that standard conditions defining minimizers are only approximately satisfied.
In \cite{lingstrohmer}, these perturbed conditions are given by \cite[(28)]{lingstrohmer}. In order for them to imply that $X^0$ is a minimizer, one needs that $\mathcal{A}_i$ acts approximately as an isometry on each
\begin{equation*}
\mathcal{T}_i = \left\{ h_i u_i^* + v_i m^*_i : \ u_i \in \mathbb{C}^{K_i}, v_i \in \mathbb{C}^{N_i}   \right\}
\end{equation*}
 and that the images of these operators are almost orthogonal to each other. The latter condition is responsible for the appearance of the quadratic scaling in $r$ in \cite{lingstrohmer}. Our approach will be different: We will show that the operator $ \mathcal{A}$ acts as an approximate isometry on the full subspace
 \begin{equation*}
\mathcal{T} := \left\{  \left( X_1, \ldots, X_r \right): \ X_i \in \mathcal{T}_i \text{ for all } i \in \lbrack r \rbrack   \right\}.
 \end{equation*} 
in the sense of the following definition.
\begin{definition}[Local isometry property]
 $ \mathcal{A}$ fulfills the $ \delta$-local isometry property on $\mathcal{T}$ for some $ \delta > 0 $, if
\begin{equation}\label{ineq:deltalocalisometryoutline}
\left(1 - \delta \right)  \Vert X \Vert^2_F \le \Vert \mathcal{A} \left( X\right) \Vert^2_{\ell_2} \le \left(1 + \delta \right) \Vert X \Vert^2_F
\end{equation}
for all $X \in \mathcal{T} $.
\end{definition}

The main novelty in our proof is that our global viewpoint allows us to establish the local isometry
property on $\mathcal{T}$ with high probability if $L$ scales \emph{linearly}
with $r$. This will be achieved via Theorem \ref{KMR}, which involves a $\gamma_2$-functional. Thus a large part of Section \ref{sec:localisometry} is concerned with estimating those $\gamma_2$-functionals.\\

The local isometry property is not only needed in the first part but
also in the second part of the proof, where the dual certificate is
constructed via the Golfing Scheme. For that we will assume that $\left\{ \Gamma_p \right\}^P_{p=1} $ is fixed $\omega$-admissible partition (see Definition \ref{definition:admissiblepartition}) which minimizes (\ref{definition:muhomega}). For this partition we can introduce the operators $ \mathcal{A}^p$ defined by $\mathcal{A}^p \left( X \right) = P_{\Gamma_p} \left( \mathcal{A} \left( X \right)  \right) $, where $P_{\Gamma_p}: \mathbb{C}^L \rightarrow \mathbb{C}^L $ denotes the (coordinate) projection of $\mathbb{C}^L $ onto the coordinates contained in the set $ \Gamma_p $. Similarly, we will define $ \mathcal{A}^p_i $ by $\mathcal{A}_i^p \left( X \right) = P_{\Gamma_p} \left( \mathcal{A}_i \left( X \right)  \right)  $.\\

We will need that each operator $ \mathcal{A}^p$ satisfies the $\delta$-local isometry property on a subspace $\mathcal{T}^p$, which is slightly larger than $\mathcal{T}$. In order to define the space $ \mathcal{T}^p$ we need to introduce the operators $ \mathcal{S}^p : \mathcal{M} \rightarrow \mathcal{M} $. For that, recall $ S_{i,p} = T^{-1}_{i,p} $ as defined in Section \ref{section:partitioncoherence}. 
\begin{definition}\label{def:operatorSp}
	Let $p \in \lbrack P \rbrack $. Then the operator
	$\mathcal{S}^p: \mathcal{M} \rightarrow \mathcal{M}$ is defined by
	\begin{equation}\label{def:operatorS}
	\mathcal{S}^p \left( W \right) = \left( S_{1,p} W_1 , \ldots, S_{r,p} W_r  \right)
	\end{equation}
	for $W= \left( W_1, \ldots, W_r \right)\in \mathcal{M} $.
\end{definition}
Then $\mathcal{T}^p $ is defined by 
\begin{equation}\label{definition:Tpsubspace}
\mathcal{T}^p = \mathcal{T} + \mathcal{S}^p \left(  \mathcal{T} \right).
\end{equation}
Observe that we may write $\mathcal{T}= \mathcal{T}_h + \mathcal{T}_m $ and $ \mathcal{T}^p= \mathcal{T}_h + \mathcal{T}_{\mathcal{S}^p h}  + \mathcal{T}_m  $, when the subspaces $ \mathcal{T}_m $, $ \mathcal{T}_h  $, and $ \mathcal{T}_{\mathcal{S}^p h}  $ are given by
\begin{align}\label{equ:Tsubspaces}
	\begin{split}
		\mathcal{T}_m &= \left\{ \left( v_1 m^*_1, \ldots, v_r m^*_r \right): \ v_i \in \mathbb{C}^{K_i} \text{ for all } i \in \lbrack r \rbrack     \right\},\\
		\mathcal{T}_h &= \left\{ \left( h_1 u^*_1, \ldots, h_r u^*_r \right): \ u_i \in \mathbb{C}^{N_i} \text{ for all } i \in \lbrack r \rbrack     \right\},\\
		\mathcal{T}_{\mathcal{S}^p h} &= \left\{ \left( \left( S_{1,p} h_1 \right) u^*_1, \ldots, \left( S_{r,p} h_r \right) u^*_r \right): \ u_i \in \mathbb{C}^{N_i} \text{ for all } i \in \lbrack r \rbrack     \right\}.
\end{split}
\end{align}
As already mentioned, the local isometry property on $\mathcal{T}$, respectively $ \mathcal{T}^p$, will be shown in Section \ref{sec:localisometry}. In Section \ref{subsec:dual:certificate} the approximate dual certificate will be constructed via the Golfing Scheme. Finally, in Section \ref{subsec:proofmaintheorem} we will prove the main result, Theorem \ref{theorem:mainwithnoise}.
\section{Proof of the Main Theorem}\label{sec:proof}
\subsection{Sufficient Conditions for Recovery}
\label{sectionsufficientconditions}
As already mentioned in the outline of the proof, in this section we will show  that the existence of
an inexact dual certificate implies that the signal is approximately recovered. Therefore, we will denote in the following by $ \mathcal{P}_{\mathcal{T}}$ the orthogonal projection onto $ \mathcal{T}$. Similarly, we will denote by for all $i \in \lbrack r \rbrack $ the orthogonal projection onto $ \mathcal{T}_i $
\begin{lemma}\label{dualcertificate}
  Suppose that $\mathcal{A}$ satisfies the $\delta$-local
  isometry property on $\mathcal{T}$ (\ref{ineq:deltalocalisometryoutline}) and set
  $\gamma =\Vert \mathcal{A} \Vert_{F \rightarrow 2}$, i.e., $ \gamma$ is the operator norm of $ \mathcal{A} $. Furthermore,
  suppose that there is
  $Y= \left( Y_1, \ldots, Y_r \right)= \mathcal{A}^* z $ for some
  $z \in \mathbb{C}^L$ such that
  \begin{align}
    \Vert \mathcal{P}_{\mathcal{T}} Y - \text{sgn} \left( X^0 \right) \Vert_{F} 
    &\le \alpha \label{dualcertificatecondition1} \\
    \Vert \mathcal{P}_{\mathcal{T}_i^{\perp}} Y_i \Vert_{2 \rightarrow 2} 
    &\le \beta   \text{ for all } i \in \lbrack r \rbrack , \label{dualcertificatecondition2}
  \end{align}
  where $\alpha, \beta \ge 0 $ are constants such that
  $ 1- \beta - \frac{\alpha \gamma}{\sqrt{1- \delta}} \ge
  \frac{1}{2}$,
  $ \alpha \le 1 $, and $\sqrt{1-\delta} \ge \frac{1}{2} $. Then if
  $ \hat{X} $ is a minimizer of
  \begin{align*}
    \text{minimize } & \Vert X \Vert_{\ast} \\
    \text{subject to } & \Vert \mathcal{A} \left( X \right) - \hat{y}  \Vert_{\ell_2} \le \tau
  \end{align*}
  we have that
  \begin{equation}\label{recoveryerror}
    \Vert \hat{X} - X^{0} \Vert_F \lesssim \tau  \left(1 + \gamma \right) \left(1 + \Vert z \Vert_{\ell_2} \right). 
  \end{equation}
\end{lemma}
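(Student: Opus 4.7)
The plan is to follow the standard dual-certificate / subgradient argument for nuclear norm minimization, but carried out in the product space $\mathcal{M}$ so that the per-block structure does not reintroduce any loss in $r$. Set $H = \hat{X}-X^0 \in \mathcal{M}$ and decompose $H = H_{\mathcal{T}} + H_{\mathcal{T}^\perp}$ with $H_{\mathcal{T}} = \mathcal{P}_\mathcal{T} H$. Using the subdifferential characterization of the nuclear norm componentwise, for any $W = (W_1,\dots,W_r) \in \mathcal{M}$ with $W_i \in \mathcal{T}_i^\perp$ and $\|W_i\|_{2\to 2} \le 1$ one has
\begin{equation*}
\|X^0+H\|_* \ge \|X^0\|_* + \langle \sgn(X^0)+W, H\rangle_F.
\end{equation*}
Combined with optimality $\|\hat{X}\|_*\le \|X^0\|_*$ (feasibility of $X^0$ uses $\|e\|_{\ell_2}\le\tau$), taking the supremum of $\langle W, H\rangle_F$ over admissible $W$ yields $\sum_i \|\mathcal{P}_{\mathcal{T}_i^\perp}H_i\|_* = \|H_{\mathcal{T}^\perp}\|_*$, and since $\sgn(X^0)\in\mathcal{T}$ this gives
\begin{equation*}
\|H_{\mathcal{T}^\perp}\|_* \le \bigl|\langle \sgn(X^0), H_{\mathcal{T}}\rangle_F\bigr|.
\end{equation*}

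Next I insert the certificate. Writing $\langle\sgn(X^0),H_{\mathcal{T}}\rangle_F = \langle Y,H\rangle_F - \langle Y,H_{\mathcal{T}^\perp}\rangle_F + \langle \sgn(X^0)-\mathcal{P}_\mathcal{T} Y, H_{\mathcal{T}}\rangle_F$, I bound the three pieces separately. The first equals $\langle z,\mathcal{A}(H)\rangle$; by the triangle inequality applied to the feasibility of $\hat{X}$ and $X^0$ we get $\|\mathcal{A}(H)\|_{\ell_2}\le 2\tau$, so this term is at most $2\tau\|z\|_{\ell_2}$. For the second, using $\langle Y_i, (H_i)_{\mathcal{T}_i^\perp}\rangle_F = \langle \mathcal{P}_{\mathcal{T}_i^\perp}Y_i, (H_i)_{\mathcal{T}_i^\perp}\rangle_F$ and the operator/nuclear norm duality together with \eqref{dualcertificatecondition2}, the term is at most $\beta\|H_{\mathcal{T}^\perp}\|_*$. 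For the third, Cauchy--Schwarz on $\mathcal{M}$ together with \eqref{dualcertificatecondition1} gives at most $\alpha\|H_{\mathcal{T}}\|_F$. Combining,
\begin{equation*}
\|H_{\mathcal{T}^\perp}\|_* \le 2\tau\|z\|_{\ell_2} + \beta\|H_{\mathcal{T}^\perp}\|_* + \alpha\|H_{\mathcal{T}}\|_F.
\end{equation*}

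To close the inequality, I use the local isometry property on $\mathcal{T}$ to control $\|H_{\mathcal{T}}\|_F$: since $H_{\mathcal{T}}\in \mathcal{T}$, $\sqrt{1-\delta}\|H_{\mathcal{T}}\|_F \le \|\mathcal{A}(H_{\mathcal{T}})\|_{\ell_2}\le \|\mathcal{A}(H)\|_{\ell_2}+\|\mathcal{A}(H_{\mathcal{T}^\perp})\|_{\ell_2}\le 2\tau+\gamma\|H_{\mathcal{T}^\perp}\|_F \le 2\tau+\gamma\|H_{\mathcal{T}^\perp}\|_*$, where the last step uses $\|\cdot\|_F\le\|\cdot\|_*$. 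Substituting, rearranging, and exploiting the hypothesis $1-\beta-\alpha\gamma/\sqrt{1-\delta}\ge \tfrac12$ gives
\begin{equation*}
\|H_{\mathcal{T}^\perp}\|_* \lesssim \tau(1+\|z\|_{\ell_2}),
\end{equation*}
where I also use $\alpha\le 1$ and $\sqrt{1-\delta}\ge\tfrac12$ to absorb the extra $\alpha/\sqrt{1-\delta}$ factor. Plugging back into the isometry estimate bounds $\|H_{\mathcal{T}}\|_F \lesssim \tau(1+\gamma)(1+\|z\|_{\ell_2})$, and the triangle inequality $\|H\|_F \le \|H_{\mathcal{T}}\|_F+\|H_{\mathcal{T}^\perp}\|_F\le \|H_{\mathcal{T}}\|_F+\|H_{\mathcal{T}^\perp}\|_*$ then yields \eqref{recoveryerror}.

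The argument itself is a textbook dual-certificate proof, so I do not anticipate a conceptual obstacle; the only points that require care are (i) running the subgradient inequality blockwise over the product structure so the resulting nuclear-norm bound is genuinely $\|H_{\mathcal{T}^\perp}\|_* = \sum_i \|\mathcal{P}_{\mathcal{T}_i^\perp}H_i\|_*$ (this is where the componentwise assumption \eqref{dualcertificatecondition2} is used), and (ii) verifying that the local isometry is applied on $\mathcal{T}$ (not on $\mathcal{T}^\perp$), for which the crude bound $\|\mathcal{A}(H_{\mathcal{T}^\perp})\|_{\ell_2}\le\gamma\|H_{\mathcal{T}^\perp}\|_F$ is good enough because the final result only has to absorb a factor $\gamma$.
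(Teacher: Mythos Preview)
Your proof is correct and follows essentially the same route as the paper's. The only cosmetic difference is organizational: the paper builds the dual witness $\sgn(X^0_i)+\mathcal{P}_{\mathcal{T}_i^\perp}Y_i+Z_i$ (with $\|Z_i\|_{2\to2}\le 1-\beta$) in one step and applies the operator/nuclear-norm duality directly, whereas you first run the subgradient inequality with a generic $W\in\mathcal{T}^\perp$ and then insert $Y$ via the algebraic decomposition of $\langle\sgn(X^0),H_{\mathcal T}\rangle_F$; the resulting estimates and the use of the local isometry on $\mathcal T$ are identical. One tiny remark: keep track of real parts in the complex subgradient inequality (you implicitly pass to $|\cdot|$ immediately, which is harmless here but worth noting).
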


\begin{proof}
  Set $V= \left(V_1, \ldots, V_r \right)= \hat{X} - X^0$ and note that we seek to estimate
  $\Vert V \Vert_F \le \Vert \mathcal{P}_{\mathcal{T}}  \left(V\right) \Vert_F +
  \Vert \mathcal{P}_{\mathcal{T}^{\perp}} \left( V \right) \Vert_F $
  from above. We observe that
  \begin{equation}\label{intern30}
    \Vert \mathcal{A} \left( V \right) \Vert_{\ell_2} \le \Vert \mathcal{A} ( \hat{X} ) - \hat{y} \Vert_{\ell_2} + \Vert \hat{y} - \mathcal{A} \left( X^0 \right) \Vert_{\ell_2} \le 2 \tau.
  \end{equation}
 Together with the $\delta$-local isometry property (\ref{ineq:deltalocalisometryoutline}), the definition of $\gamma$, and the triangle inequality we obtain
  \begin{align*}
    \Vert \mathcal{P}_{\mathcal{T}} \left(V\right)  \Vert_{F} 
    &\le \frac{1}{\sqrt{1-\delta}}  \Vert \mathcal{A} \left( \mathcal{P}_{\mathcal{T}}  \left(V\right) \right) \Vert_{\ell_2}  \le \frac{1}{\sqrt{1-\delta}} \left( \Vert \mathcal{A} \left( \mathcal{P}_{\mathcal{T}^{\perp}}  \left( V\right) \right) \Vert_{\ell_2} + \Vert \mathcal{A} \left(V\right) \Vert_{\ell_2}  \right)\\
    & \le  \frac{1}{\sqrt{1-\delta}} \left( \gamma \Vert \mathcal{P}_{\mathcal{T}^{\perp}} \left(V\right) \Vert_{F}  + 2 \tau \right).
  \end{align*}
  Thus it remains to find an upper bound for
  $\Vert \mathcal{P}_{\mathcal{T}^{\perp}} \left(V\right) \Vert_{F} $. For that
  purpose, choose $Z= \left(Z_1, \ldots, Z_r\right) $ such that for
  all $i \in \lbrack r \rbrack $ we have that
  $Z_i \in {\mathcal{T}}^{\perp}_i$,
  $\Vert Z_i \Vert_{2 \rightarrow 2} \le 1-\beta$, and
  $ \langle Z_i , V_i \rangle_F = \left( 1- \beta \right) \Vert
  \mathcal{P}_{{\mathcal{T}}_i^{\perp}} V_i \Vert_{\ast} $.
  This is possible by duality of the norms
  $ \Vert \cdot \Vert_{2\rightarrow 2} $ and
  $\Vert \cdot \Vert_{\ast} $ (see \cite[Section 4.2]{bhatia2013matrix}). Observe that
  and
  $ \Vert \text{sgn} \left( X^0_i \right) + \mathcal{P}_{\mathcal{T}^{\perp}_i} Y_i +Z_i \Vert_{2 \rightarrow 2}
  \le 1 $ as both the row and column spaces of $\text{sgn} \left( X^0_i \right)  $ and $ \mathcal{P}_{\mathcal{T}^{\perp}_i} Y_i +Z_i $ are orthogonal. Thus, again using  the duality between $ \Vert \cdot \Vert_{2 \rightarrow 2} $ and $ \Vert \cdot \Vert_{\ast} $, we obtain
  \begin{align*}
  \Vert X_i^0 + V_i \Vert_{\ast} &= \underset{W \in \mathbb{C}^{K_i \times N_i}, \ \Vert W \Vert_{2 \rightarrow 2} \le 1}{\sup} \vert \langle W, X_i^0 + V_i  \rangle_F  \vert\\
  &\ge \text{Re} \left( \langle \sgn \left( X^0_i \right) + \mathcal{P}_{\mathcal{T}^{\perp}_i} Y_i + Z_i  , X^0_i + V_i \rangle_F \right) \\
  &\ge \Vert X^0_i \Vert_{\ast} + \text{Re} \left( \langle \sgn \left( X^0_i \right) + \mathcal{P}_{\mathcal{T}^{\perp}_i} Y_i, V_i \rangle_F  \right) + \left(1-\beta \right)  \Vert \mathcal{P}_{\mathcal{T}_i} V_i  \Vert_{\ast}
  \end{align*}
Here, in the second inequality we used that  $ \mathcal{P}_{\mathcal{T}^{\perp}_i} Y_i +Z_i \in \mathcal{T}^{\perp}_i $ and $ \langle  \text{sgn} \left(  X^0_i  \right), X^0_i  \rangle_F = \Vert X^0_i \Vert_{\ast} $. Thus, by definition of $ \Vert X^0 + V \Vert_{\ast} $ we obtain
  \begin{align*}
    \Vert X^0 + V \Vert_{\ast} &\ge \sum_{i=1}^{r} \Vert X^0_i \Vert_{\ast} + \sum_{i=1}^{r} \text{Re} \left( \langle \text{sgn} \left( X^0_i \right) + \mathcal{P}_{\mathcal{T}^{\perp}_i} Y_i    , V_i \rangle_F \right) + \left(1- \beta \right) \sum_{i=1}^{r}  \Vert \mathcal{P}_{\mathcal{T}_i} V_i  \Vert_{\ast}  \\
    & =  \Vert X^0 \Vert_{\ast} + \text{Re} \left( \langle \text{sgn} \left( X^0 \right) - \mathcal{P}_{\mathcal{T}}  Y   , V \rangle_{F} + \langle Y,V \rangle_F \right) + \left(1- \beta \right) \Vert \mathcal{P}_{\mathcal{T}^{\perp}} V  \Vert_{\ast}.
  \end{align*}
 Now observe that by Cauchy-Schwarz, (\ref{dualcertificatecondition1}) and our upper bound for $\Vert \mathcal{P}_{\mathcal{T}} \left( V \right) \Vert_{\ell_2} $
\begin{align*}
  \text{Re} \left( \langle \text{sgn} \left( X^0 \right) - \mathcal{P}_{\mathcal{T}} \left( Y  \right)  , V \rangle_{F} \right)  &\ge  -\Vert \text{sgn} \left( X^0 \right) - \mathcal{P}_{\mathcal{T}} \left( Y  \right) \Vert_{F} \Vert \mathcal{P}_{\mathcal{T}} \left(V\right) \Vert_{F}\\
   &\ge \frac{- \alpha}{\sqrt{1-\delta}} \left(  \gamma \Vert \mathcal{P}_{\mathcal{T}^{\perp}} V \Vert_{F} + 2 \tau \right).
\end{align*}
  Furthermore, we note that by Cauchy-Schwarz and (\ref{intern30})
  \begin{equation*}
    \text{Re} \left(  \langle Y, V \rangle_F \right) = \text{Re} \left(  \langle \mathcal{A}^* z , V \rangle_F \right) = \left(  \langle z, \mathcal{A} \left( V \right) \rangle_{\ell_2} \right) \ge -2 \Vert z \Vert_{\ell_2} \tau.
  \end{equation*}	
  Combining the last three calculations and using that the nuclear norm is greater or equal than the Frobenius norm we obtain
  \begin{align*}
    \Vert \hat{X} \Vert_{\ast} \ge \Vert X^0 \Vert_{\ast} + \left(1- \beta -  \frac{\alpha \gamma}{\sqrt{1 - \delta}} \right)  \Vert \mathcal{P}_{\mathcal{T}^{\perp}} V \Vert_{\ast} - 2\tau \left( \Vert z \Vert_{\ell_2} +\frac{\alpha }{\sqrt{1 - \delta}} \right) .
  \end{align*}
  As $ \hat{X} $ is the nuclear norm minimizer and we have that $ \Vert X^0 \Vert_{\ast} \ge  \Vert \hat{X} \Vert_{\ast} $ this yields
  \begin{equation*}
    \left(1- \beta -  \frac{\alpha \gamma}{\sqrt{1 - \delta}} \right)  \Vert \mathcal{P}_{\mathcal{T}^{\perp}} \left(V\right) \Vert_{\ast} \le 2\tau \left( \Vert z \Vert_{\ell_2} +\frac{\alpha }{\sqrt{1 - \delta}} \right).
  \end{equation*}
  By our assumptions on $\alpha$, $\beta$, and $\delta$ this implies
  \begin{equation*}
    \Vert  \mathcal{P}_{\mathcal{T}^{\perp}} \left(V\right) \Vert_{F} \lesssim \tau \left( \Vert z \Vert_{\ell_2}+1  \right).
  \end{equation*}
  Thus, using again the upper bound for $\Vert \mathcal{P}_{\mathcal{T}} \left(V\right) \Vert_{F} $, which was calculated above, and again our assumptions on $\alpha$, $\beta$, and $\delta$ we obtain
  \begin{align*}
    \Vert V \Vert_F &\le \Vert \mathcal{P}_{\mathcal{T}} \left(V\right) \Vert_F + \Vert  \mathcal{P}_{\mathcal{T}^{\perp}} \left(V\right) \Vert_F \lesssim  \left( 1+ \gamma  \right) \Vert \mathcal{P}_{\mathcal{T}^{\perp}} \left( V \right) \Vert_{F} +  \tau  \lesssim \tau \left( 1+ \gamma  \right)  \left( 1+ \Vert z \Vert_{\ell_2}  \right),
  \end{align*}
  which finishes the proof.
\end{proof}
As already mentioned in the introduction, the noiseless case is also of interest for us. Note that in this
situation we may set $\tau= 0$ and Lemma \ref{dualcertificate} shows
that the existence of a dual certificate implies that the convex program
(\ref{eq:nucbpdn}) recovers the signal $X^0$ exactly.


\begin{remark}\label{dualcertificateremark}
Note that we still have the freedom to choose the parameters $ \alpha $ and $ \beta $ in Lemma \ref{dualcertificate}. In Section \ref{subsec:dual:certificate} we will construct a dual certificate $Y$ for the following choice of parameters: We set $\beta= \frac{1}{4} $ and assume that $\delta \le \frac{1}{4} $. In order to fulfill the condition $1-\beta - \frac{\alpha \gamma}{\sqrt{1-\delta}} \ge \frac{1}{2} $ it is  then enough to choose $\alpha = \frac{1}{8 \gamma}.$
\end{remark}
Note that in the noisy case the error estimate in Lemma \ref{dualcertificate} depends linearly on the operator norm of $ \mathcal{A}$ as (\ref{recoveryerror}) states. Thus, we need an upper bound for the operator norm of $ \mathcal{A} $ which holds with high probability.
\begin{lemma}\label{lemma:operatornormbound}
Let $\omega \ge 1 $. Then with probability at least $1 - 2  L^{-\omega}  $ we have that
\begin{equation*}
\Vert \mathcal{A} \Vert_{F \rightarrow 2} \le 2  \sqrt{  \omega \max  \left\{ 1; \frac{r K_{\mu} N}{L}   \right\}  \log \left(L+rKN\right) }.
\end{equation*}
\end{lemma}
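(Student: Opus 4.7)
My plan is to express $\mathcal{A}$ as a matrix Gaussian series and invoke Corollary~\ref{gaussianconcentration}. Writing $\mathcal{A}(Z)[\ell] = \sum_{i=1}^r b^*_{i,\ell} Z_i c_{i,\ell} = \sum_{i,k} c_{i,\ell}[k]\,(b^*_{i,\ell} Z_i e_k)$ and using that the scalars $c_{i,\ell}[k]$ are i.i.d.\ $\mathcal{CN}(0,1)$, I would decompose
\[
\mathcal{A} \;=\; \sum_{i=1}^{r} \sum_{\ell=1}^{L} \sum_{k=1}^{N_i} c_{i,\ell}[k] \, A_{i,\ell,k},
\]
where $A_{i,\ell,k}\colon \mathcal{M} \to \mathbb{C}^L$ is the deterministic operator $Z \mapsto (b^*_{i,\ell} Z_i e_k)\, e_\ell$. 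Its adjoint sends $y \in \mathbb{C}^L$ to the element of $\mathcal{M}$ whose $i$-th slot equals $y[\ell]\, b_{i,\ell} e_k^*$ and whose remaining slots vanish. After fixing an orthonormal basis of the finite-dimensional Hilbert space $\mathcal{M}$ this realises $\mathcal{A}$ as a matrix Gaussian series of the type covered by Corollary~\ref{gaussianconcentration}.

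Next I would compute the two operator norms that determine the parameter $\sigma^2$. Using the identities $\sum_\ell b_{i,\ell}b^*_{i,\ell} = B^*_i B_i = \Id_{K_i}$ and $\sum_k e_k e_k^* = \Id_{N_i}$, a direct calculation gives the exact identity $\sum_{i,\ell,k} A^*_{i,\ell,k} A_{i,\ell,k} = \Id$ on $\mathcal{M}$, so its operator norm equals $1$. On the other side one finds $A_{i,\ell,k} A^*_{i,\ell,k} = \Vert b_{i,\ell} \Vert^2_{\ell_2}\, e_\ell e_\ell^*$, whence
\[
\sum_{i,\ell,k} A_{i,\ell,k} A^*_{i,\ell,k} \;=\; \sum_{\ell=1}^{L} \Big(\sum_{i=1}^r N_i \Vert b_{i,\ell} \Vert^2_{\ell_2}\Big)\, e_\ell e_\ell^*,
\]
a diagonal operator whose operator norm is at most $rN K_\mu /L$, using the definition of $K_\mu$ and the bound $N_i \le N$. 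Hence $\sigma^2 \le \max\{1,\, rN K_\mu/L\}$ in the notation of Corollary~\ref{gaussianconcentration}.

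Finally, applying Corollary~\ref{gaussianconcentration} with $d_1 = L$, $d_2 = \sum_i K_i N_i \le rKN$, and $t = \omega \log L$ yields, with probability at least $1 - 2L^{-\omega}$,
\[
\Vert \mathcal{A} \Vert_{F \to 2} \;\le\; \sigma \sqrt{2\bigl(\omega \log L + \log(L+rKN)\bigr)}.
\]
Since $\omega \ge 1$ and $L \le L + rKN$, the elementary inequality $2 \omega \log L + 2 \log(L+rKN) \le 4 \omega \log(L+rKN)$ holds, and substituting the bound on $\sigma$ gives the claim.

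The only step that is more than routine bookkeeping is the verification that $\sum_{i,\ell,k} A^*_{i,\ell,k} A_{i,\ell,k}$ equals the identity \emph{exactly}, which is what keeps $\sigma^2$ from exceeding $1$ in the small-rank regime $rN K_\mu \le L$. This reflects the fact that $\mathcal{A}$ is an isometry in expectation on $\mathcal{M}$ and relies crucially on the assumption $B_i^* B_i = \Id_{K_i}$; everything else reduces to plugging into Corollary~\ref{gaussianconcentration}.
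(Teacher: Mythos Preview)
Your proof is correct and follows essentially the same approach as the paper: both decompose $\mathcal{A}$ as a matrix Gaussian series in the i.i.d.\ entries of the $C_i$, compute the two variance terms (obtaining $\mathbb{E}[\mathcal{A}^*\mathcal{A}]=\Id$ from $B_i^*B_i=\Id_{K_i}$ and $\mathbb{E}[\mathcal{A}\mathcal{A}^*]$ diagonal with entries $\sum_i N_i\|b_{i,\ell}\|_{\ell_2}^2$), and then apply Corollary~\ref{gaussianconcentration} with $t=\omega\log L$. The only cosmetic difference is that you write out the rank-one pieces $A_{i,\ell,k}$ explicitly, whereas the paper computes the expectations $\mathbb{E}[\mathcal{A}^*\mathcal{A}]$ and $\mathbb{E}[\mathcal{A}\mathcal{A}^*]$ directly; these are equivalent calculations.
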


\begin{proof}
  The result will be proven by using Corollary
  \ref{gaussianconcentration}. 
  Indeed, we can represent each operator $\mathcal{A}_i$ as
  $ \mathcal{A}_i = \sum_{\ell \in L} \sum_{j=1}^{K_i}
  \mathcal{B}_{\ell,j} $ such that each operator
  $ \mathcal{B}_{\ell,j} $ depends linearly on the $(\ell,k)$th entry
  of $C_i$, i.e.,
  $ \left(C_i\right)_{\ell,k} \sim \mathcal{CN} \left(0,1\right)
  $. Thus, we need to estimate the operator norms of
  $ \mathbb{E} \left[ \mathcal{A}^* \mathcal{A} \right] $ and
  $ \mathbb{E} \left[ \mathcal{A} \mathcal{A}^* \right]$. Observe that
\begin{equation*}
  \mathcal{A}^* \mathcal{A} = \left( \mathcal{A}_1^* \left(  \sum_{i=1}^{r} \mathcal{A}_i \right) , \ldots, \mathcal{A}_r^*  \left( \sum_{i=1}^{r} \mathcal{A}_i  \right) \right).
\end{equation*}
Note that the operators $ \left\{ \mathcal{ A }_i \right\}^r_{i=1} $
are independent with expectation
$ \mathbb{E} \left[ \mathcal{A}_i \right] = 0 $ for all
$ i \in \left[r\right] $. Thus
$ \mathbb{E} \left[ \mathcal{A}^* \mathcal{A} \right] = \left(
  \mathbb{E} \left[ \mathcal{A}^*_1 \mathcal{A}_1 \right], \ldots,
  \mathbb{E} \left[ \mathcal{A}^*_r \mathcal{A}_r \right] \right)
$. Let $Z= \left( Z_1, \ldots, Z_r\right) \in \mathcal{M} $. Using
(\ref{equ:Aitransposed}) we compute
\begin{equation}\label{equ:operatornormboundinline1}
  \mathbb{E} \left[ \left(  \mathcal{A}_i^* \mathcal{A}_i \right) \left( Z_i \right) \right]= \sum_{\ell=1}^{L} \mathbb{E} \left[ \left( \mathcal{A}_i \left(Z_i\right) \left( \ell \right) \right) b_{i,\ell}  c^*_{i, \ell}   \right]  = \sum_{\ell=1}^{L} \mathbb{E} \left[ b_{i, \ell } b^*_{i, \ell } Z_i c_{i,\ell} c^*_{i, \ell} \right] = \sum_{\ell=1}^{L} b_{i,\ell} b^*_{i,\ell} Z_i = Z_i
\end{equation}
Thus,
$ \mathbb{E} \left[ \mathcal{A}^* \mathcal{A} \left(Z\right) \right] =
Z$ for any $Z \in \mathcal{M} $, which implies
$ \mathbb{E} \left[ \mathcal{A}^* \mathcal{A}\right] = \Id $. To
compute $ \mathbb{E} \left[ \mathcal{A} \mathcal{A}^* \right] $ let
$ y \in \mathbb{C}^L $ be arbitrary. We compute with similar arguments
as before
\begin{align}
\mathbb{E} \left[  \left( \mathcal{A} \mathcal{A}^* y \right) \left(\ell\right) \right] &= \sum_{i=1}^{r} \mathbb{E}\left[ \left( \mathcal{A}_i \mathcal{A}^*_i  y \right) \left(\ell\right) \right] = \sum_{i=1}^{r}    \mathbb{E} \left[ b^*_{i,\ell}  \left( \mathcal{A}^*_i y \right)  c_{i,\ell} \right]  \notag \\
& \overset{(\ref{equ:Aitransposed})}{=} \sum_{i=1}^{r} \sum_{\ell '=1}^{L} y \left( \ell' \right) \mathbb{E} \left[  b^*_{i, \ell}  b_{i, \ell'}  c^*_{i, \ell'}  c_{i, \ell}  \right] \notag\\
&= y \left( \ell \right) \sum_{i=1}^{r} \mathbb{E} \left[  b^*_{i, \ell}  b_{i, \ell}  c^*_{i, \ell}  c_{i, \ell}   \right]= y \left( \ell \right)  \sum_{i=1}^{r}  \Vert b_{i,\ell} \Vert^2_{\ell_2} N_i . \label{equ:exp}
\end{align}
This shows that $\mathcal{A} \mathcal{A}^* $ can be represented as a diagonal matrix with entries $ \sum_{i=1}^{r}  \Vert b_{i,\ell} \Vert^2_{\ell_2} N_i $. Thus, by definition of $ K_{i,\mu} $ (\ref{def:Kmu}), $ \Vert \mathbb{E} \left[  \mathcal{A} \mathcal{A}^* \right]  \Vert_{2 \rightarrow 2} \le \frac{N \sum_{i=1}^{r} K_{i,\mu} }{L} \     $, which implies, together with (\ref{equ:operatornormboundinline1})
\begin{equation*}
\sigma^2 = \max \left\{ \Vert \mathbb{E} \left[ \mathcal{A}^* \mathcal{A} \right] \Vert_{F \rightarrow F}; \ \Vert \mathbb{E} \left[  \mathcal{A} \mathcal{A}^* \right]  \Vert_{2 \rightarrow 2}   \right\} \le   \max \left\{  1; \  \frac{N \sum_{i=1}^{r} K_{i,\mu} }{L}  \right\}. 
\end{equation*}
Consequently, Corollary \ref{gaussianconcentration} with  $ t= \omega  \log L$ yields that with probability exceeding $1 - 2 L^{-{\omega}} $ 
\begin{align*}
\Vert \mathcal{A} \Vert_{F \rightarrow 2} &\le     \max \left\{  1; \ \sqrt{ \frac{  N \sum_{i=1}^{r} K_{i,\mu}}{L} }  \right\}  \sqrt{  2 \left( \omega \log L  + \log \left(L+rKN\right) \right)  },
\end{align*}  
which implies the result.
\end{proof}

\begin{remark}
Note that in \eqref{equ:exp} and other places below, only a weighted sum of the $\Vert b_{i,\ell} \Vert^2_{\ell_2}$ appears. If the summands vastly differ, this may be too crude, and one may consider attempting an averaging argument similar to the one in \cite{krahmer2014stable}. This would, however, require that the proof is completely reworked in some parts. To achieve condition \eqref{dualcertificatecondition2}, for example, we currently rely very much on bounding each $K_{i,\mu}$ individually.
\end{remark}

\subsection{Local isometry property}\label{sec:localisometry}
In this subsection, we establish an isometry of $ \mathcal{A}$,  respectively of $ \mathcal{A}^p$, on $\mathcal{T}$, respectively $ \mathcal{T}^p$. More precisely, we establish the following theorem. 
\begin{theorem}\label{localisometry}
	Fix $ \omega \ge 1 $. Suppose that 
	\begin{equation}\label{intern3}
	Q \ge C_{\omega}  \delta^{-2} r \left( K_{\mu} \log \left(L\right) \log^2 \left( K_{\mu} \right) + N \mu^2_h \right).
	\end{equation}
	Then with probability $1- \mathcal{O} \left( L^{-\omega} \right)$ the following is true: All $X \in \mathcal{T}$ fulfill
	\begin{equation}\label{isometryproperty2}
	\left(1 - \delta \right) \Vert X \Vert^2_{F} \le \Big\Vert  \mathcal{A} \left( X \right) \Big\Vert_{\ell_2}^2     \le \left(1 + \delta \right) \Vert X \Vert^2_{F}
	\end{equation}
	and for all $p \in \lbrack P \rbrack $ every $Y \in \mathcal{T}^p =  \mathcal{T} + \mathcal{S}^p \mathcal{T} $ satisfies
	\begin{equation} \label{isometryproperty1}
	\left(1 - \delta \right) \sum_{i=1}^{r} \Vert T^{1/2}_{i,p} Y_i \Vert^2_{F} \le \frac{L}{Q} \Big\Vert \mathcal{A}^p \left( Y \right) \Big\Vert_{\ell_2}^2     \le \left(1 + \delta \right) \sum_{i=1}^{r} \Vert T^{1/2}_{i,p} Y_i \Vert^2_{F},
	\end{equation}
	where $ T^{1/2}_{i,p} $ denotes the unique positive, self-adjoint matrix whose square is equal to $T_{i,p} $.
\end{theorem}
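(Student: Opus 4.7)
The plan is to recast both isometry statements as concentration inequalities for Gaussian chaos processes and apply Theorem \ref{KMR}. Collecting all Gaussian entries of $C_1,\dots,C_r$ into one standard complex Gaussian vector $\xi$, for each $X\in\mathcal{M}$ there is a (deterministic) matrix $V_X$ with $V_X\xi=\mathcal{A}(X)$, and analogously $V^p_X\xi=\sqrt{L/Q}\,\mathcal{A}^p(X)$. A direct computation using $\sum_\ell b_{i,\ell}b_{i,\ell}^*=\Id_{K_i}$ (respectively $\sum_{\ell\in\Gamma_p}b_{i,\ell}b_{i,\ell}^*=\tfrac{Q}{L}T_{i,p}$) shows that $\mathbb{E}\|V_X\xi\|_{\ell_2}^2$ equals $\|X\|_F^2$ on $\mathcal{T}$ and $\sum_i\|T_{i,p}^{1/2}X_i\|_F^2$ on $\mathcal{T}^p$, matching the target values in \eqref{isometryproperty2}--\eqref{isometryproperty1}. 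Defining
\[
\mathcal{X}:=\{V_X:\ X\in\mathcal{T},\ \|X\|_F\le 1\},\qquad
\mathcal{X}^p:=\{V^p_Y:\ Y\in\mathcal{T}^p,\ \textstyle\sum_i\|T_{i,p}^{1/2}Y_i\|_F^2\le 1\},
\]
each isometry claim reduces to controlling $\sup_{A\in\mathcal{X}}|\|A\xi\|_{\ell_2}^2-\mathbb{E}\|A\xi\|_{\ell_2}^2|$ via Theorem \ref{KMR}.

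The main work is estimating the three geometric parameters $\gamma_2(\mathcal{X},\|\cdot\|_{2\to 2})$, $d_F(\mathcal{X})$ and $d_{2\to 2}(\mathcal{X})$, and their $\mathcal{X}^p$-analogues. For the diameters, $d_F(\mathcal{X})\le 1$ follows from the expectation computation, while $d_{2\to 2}(\mathcal{X})$ is controlled by explicitly computing the row norms of $V_X$ and invoking the incoherence bounds $\mu_i^2$ and $\mu_h^2$. For the $\gamma_2$-functional I would exploit the decompositions $\mathcal{T}=\mathcal{T}_h+\mathcal{T}_m$ and $\mathcal{T}^p=\mathcal{T}_h+\mathcal{T}_{\mathcal{S}^p h}+\mathcal{T}_m$ from \eqref{equ:Tsubspaces}, together with the subadditivity Lemma \ref{gamma_2sum}, so that it suffices to bound the $\gamma_2$-functional on each summand separately. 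On $\mathcal{T}_h$, writing $X_i=h_iu_i^*$ turns a row of $V_X$ into the concatenation of the vectors $(b_{i,\ell}^*h_i)u_i^*$, which reveals a structure amenable to Dudley's inequality \eqref{dudley} via Maurey's lemma (Lemma \ref{Maurey}) and the duality of metric entropy (Theorem \ref{dualitymetricentropy}); here the incoherence parameter $\mu_h^2$ enters naturally. On $\mathcal{T}_m$, writing $X_i=v_im_i^*$ and using that each $m_i^*c_{i,\ell}$ is an i.i.d.\ standard complex Gaussian reduces the problem to the spectrum of a matrix whose rows are $b_{i,\ell}^*v_i$ weighted by these scalars, and the corresponding $\gamma_2$-estimate produces the $K_\mu\log(L)\log^2(K_\mu)$ term in \eqref{intern3} after invoking Maurey/Dudley once more. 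The summand $\mathcal{T}_{\mathcal{S}^p h}$ is handled in the same way as $\mathcal{T}_h$, using the second part of the definition \eqref{definition:muh} of $\mu_h^2$ and the bound $\|S_{i,p}\|_{2\to 2}\le 32/31$ guaranteed by $\omega$-admissibility.

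Plugging the bounds on $\gamma_2,\,d_F,\,d_{2\to 2}$ into Theorem \ref{KMR} with $t\asymp\omega\log L$ yields \eqref{isometryproperty2} with the claimed failure probability, provided $Q$ satisfies \eqref{intern3}. The weighted statement \eqref{isometryproperty1} is proven identically but with $V_X$ replaced by $V^p_X$; since $\mathcal{A}^p$ only uses the $Q\asymp|\Gamma_p|$ Gaussian columns indexed by $\Gamma_p$, the diameters and $\gamma_2$-functional scale in the same way but the role of $L$ in the incoherence parameters is played by the partition structure, which is why the $\omega$-admissibility assumption $\|\Id-T_{i,p}\|_{2\to 2}\le\tfrac{1}{32}$ is essential: it ensures $T_{i,p}^{1/2}$ and $S_{i,p}$ are well conditioned, so that passing between the unweighted and weighted Frobenius norms on $\mathcal{T}^p$ only costs absolute constants. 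Finally, a union bound over $p\in[P]$, together with $P\lesssim\log L$ from admissibility, absorbs into the constants and finishes the proof.

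The main obstacle will be the $\gamma_2$-estimate on the component $\mathcal{T}_m$: obtaining the linear-in-$r$ scaling (rather than the $r^2$ scaling of \cite{lingstrohmer}) requires bounding the $\gamma_2$-functional of the \emph{joint} block-structured set rather than treating the $r$ blocks individually and summing. Concretely, the covering number estimate must use the Frobenius-ball structure of the full parameter $(v_1,\dots,v_r)$ (respectively $(u_1,\dots,u_r)$) rather than $r$ copies of a single ball; this is exactly where the global viewpoint described in Section \ref{sec:outline} is used, and it is the place where Maurey's lemma and the Artstein--Milman duality for metric entropy must be applied carefully to retain the correct logarithmic dependence in $K_\mu$ and $N$.
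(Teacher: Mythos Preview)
Your proposal is correct and follows essentially the paper's approach: the paper's Proposition \ref{theorem3} is exactly your reduction to Theorem \ref{KMR} via the block-diagonal matrices $H_X$ (your $V_X$, $V_X^p$), and Lemma \ref{lemmagammabound} carries out precisely the $\gamma_2$-estimates you sketch, splitting $\mathcal{W}^p=B^h+B^{\mathcal{S}^p h}+B^m$ via Lemma \ref{gamma_2sum} and handling $B^m$ with Dudley, the splitting Lemma \ref{splittinglemma}, and Maurey/Artstein--Milman (Lemma \ref{maureyapplied}).

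One small misattribution: for the $\mathcal{T}_h$ and $\mathcal{T}_{\mathcal{S}^p h}$ pieces the paper does \emph{not} use Maurey's lemma or metric-entropy duality. Instead it observes the elementary Lipschitz bound $\|U-V\|_B\le\mu_h\|U-V\|_F$ for $U,V\in B^h$ (this is where $\mu_h$ enters), and then applies Dudley with a standard volumetric estimate on the $\sum_iN_i$-dimensional Frobenius ball to get $\gamma_2(B^h,\|\cdot\|_B)\lesssim\mu_h\sqrt{rN}$. Maurey and Artstein--Milman are reserved for the $B^m$ piece, where no such direct Lipschitz comparison is available; there the paper first splits the covering number into an $\mathbb{R}^r$-ball factor and $r$ individual $\|\cdot\|_{B_i}$-balls in $\mathbb{C}^{K_i}$ (Lemma \ref{splittinglemma}), and only then invokes Maurey/duality on the latter. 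Your instinct that the linear-in-$r$ scaling hinges on handling the full Frobenius ball in $(v_1,\dots,v_r)$ jointly is right, and Lemma \ref{splittinglemma} is the concrete device that does this.
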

The proof of this theorem is divided into several steps. For the proof we need some additional notation. Recall that the incoherence parameter $ \mu^2_h $ measures the alignment between the vectors $h_i \in \mathbb{C}^{K_i}$ and $b_{i,\ell} \in \mathbb{C}^{K_i}$. As the operators $\mathcal{A}$ and $ \mathcal{A}_i $ are defined on matrices, it will to be useful to generalize the notion
of incoherence from vectors to matrices. This is achieved by the following definition.
\begin{definition}
	For all $i \in \lbrack r \rbrack $, vectors $z \in \mathbb{C}^{K_i }$
	and  matrices $Z_i \in \mathbb{C}^{K_i \times N_i}$ define
	\begin{align*}
	\Binorm{z} = 
	\sqrt{L}~ \underset{  \ell  \in \lbrack L \rbrack }{\max} \vert z^* b_{i, \ell}   \vert
	\quad\text{and}\quad
	\Binorm{ Z_i }= 
	\sqrt{L}~\underset{  \ell  \in \lbrack L \rbrack }{\max}~\Vert Z_i^*  b_{i, \ell} \Vert_{\ell_2}.
	\end{align*}
	For $Z= \left(Z_1, \ldots, Z_r \right) \in \mathcal{M}$ we define
	\begin{equation*}
		\Bnorm{Z}=\sqrt{ L~\underset{ \ell  \in \lbrack L \rbrack}{\max} \left(
		\sum_{i=1}^{r}  \big\Vert Z^*_i   b_{i, \ell } \big\Vert^2_{\ell_2} \right)}.
	\end{equation*}
\end{definition}
All these three operations are 
norms as $  \sum_{\ell=1}^{L} b_{i,\ell} b^*_{i,\ell} = \Id_{K_i}$ for all $ i \in \lbrack r \rbrack $. The following lemma provides us with some useful estimates.
\begin{lemma}
	Let $Z= \left(Z_1, \ldots, Z_r \right) \in \mathcal{M}$, 
	$i \in \left[ r \right]$ and  $z \in \mathbb{C}^{K_i} $. Then
	\begin{align}
	\Vert z \Vert_{B_i} 
	&\le \sqrt{K_{i,\mu}}\,\,\Vert z \Vert_{\ell_2}.
	\label{Bnorminequality4}\\
	\lVert Z_i\rVert_{B_i} 
	&\le \sqrt{K_{i,\mu}} \Vert Z_i \Vert_{2 \rightarrow 2} 
	\label{Bnorminequality1} \\
	\lVert Z\rVert_B
	& \le  \sqrt{ \sum_{i=1}^{r}   \Vert Z_i \Vert^2_{B_{i}}}
	\le  \sqrt{K_{\mu}} \Vert Z \Vert_{F} \label{Bnorminequality3}
	\end{align}
\end{lemma}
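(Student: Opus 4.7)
The plan is to prove each of the three inequalities by elementary manipulations, with Cauchy--Schwarz being the key tool, together with the definition $K_{i,\mu} = K_i\mu_i^2 = L\max_{\ell\in[L]}\|b_{i,\ell}\|_{\ell_2}^2$ from \eqref{def:mumax} and \eqref{def:Kmu}.

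For \eqref{Bnorminequality4}, I would simply write $|z^* b_{i,\ell}| \le \|z\|_{\ell_2}\|b_{i,\ell}\|_{\ell_2}$ by Cauchy--Schwarz, take the maximum over $\ell$ on both sides, and multiply by $\sqrt{L}$ to obtain $\|z\|_{B_i}\le \sqrt{L}\max_\ell\|b_{i,\ell}\|_{\ell_2}\,\|z\|_{\ell_2}=\sqrt{K_{i,\mu}}\|z\|_{\ell_2}$.

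For \eqref{Bnorminequality1}, I would apply the same idea one dimension up: $\|Z_i^*b_{i,\ell}\|_{\ell_2}\le \|Z_i^*\|_{2\to 2}\|b_{i,\ell}\|_{\ell_2}=\|Z_i\|_{2\to 2}\|b_{i,\ell}\|_{\ell_2}$, then take the max over $\ell$ and multiply by $\sqrt{L}$.

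For \eqref{Bnorminequality3}, I would split into two halves. For the first inequality, use the trivial bound $\max_\ell \sum_{i=1}^r \|Z_i^*b_{i,\ell}\|_{\ell_2}^2 \le \sum_{i=1}^r \max_\ell \|Z_i^*b_{i,\ell}\|_{\ell_2}^2 = L^{-1}\sum_{i=1}^r\|Z_i\|_{B_i}^2$, then multiply by $L$ and take the square root. For the second inequality, I would first note that $\|Z_i\|_{2\to 2}\le \|Z_i\|_F$, so applying \eqref{Bnorminequality1} gives $\|Z_i\|_{B_i}^2 \le K_{i,\mu}\|Z_i\|_F^2 \le K_\mu\|Z_i\|_F^2$; summing over $i$ and recalling $\|Z\|_F^2=\sum_i\|Z_i\|_F^2$ yields the claim after taking a square root. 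These steps are all routine, so there is no real obstacle; the main point is just to verify that the definitions line up correctly.
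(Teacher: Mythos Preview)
Your proposal is correct and follows essentially the same route as the paper: Cauchy--Schwarz (respectively the operator-norm bound $\|Z_i^* b_{i,\ell}\|_{\ell_2}\le \|Z_i\|_{2\to 2}\|b_{i,\ell}\|_{\ell_2}$) combined with $\max_\ell L\|b_{i,\ell}\|_{\ell_2}^2 = K_{i,\mu}$ gives \eqref{Bnorminequality4} and \eqref{Bnorminequality1}, and then \eqref{Bnorminequality3} follows by bounding the max of a sum by the sum of the maxes and invoking \eqref{Bnorminequality1} together with $\|Z_i\|_{2\to 2}\le \|Z_i\|_F$. Your write-up of \eqref{Bnorminequality1} is in fact slightly cleaner than the paper's, which contains a harmless typo (it passes through $\|Z_i\|_F^2$ before landing on $\|Z_i\|_{2\to 2}^2$).
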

\begin{proof}
In order to prove \eqref{Bnorminequality1} note that for $ Z_i \in \mathbb{C}^{K_i} $ and $ \ell \in \lbrack L \rbrack $ due to the definition of $ K_{i,\mu} $
\begin{equation*}
	\Vert Z^*_i   b_{i,\ell}  \Vert^2_{\ell_2} \le  
	\Vert Z_i \Vert^2_F  \Vert b_{i,\ell} \Vert^2_{\ell_2} \overset{(\ref{def:Kmu})}{\le}  \frac{K_{i,\mu}}{L} \Vert Z_i \Vert^2_{2 \rightarrow 2}.
\end{equation*}
Taking the maximum over all $ \ell \in \lbrack L \rbrack $ shows (\ref{Bnorminequality1}). Inequality (\ref{Bnorminequality4}) can be proven analogously. (\ref{Bnorminequality3}) follows from
\begin{equation*}
	\Vert Z \Vert^2_{B} \le L\,
	\sum_{i=1}^{r} \underset{\ell \in \lbrack L \rbrack }{\max} \Vert
	Z^*_i b_{i, \ell} \Vert^2_{\ell_2}
	=\sum_{i=1}^{r} \Vert Z_i \Vert^2_{B_{i}}
\end{equation*}	
combined with (\ref{Bnorminequality1}) and the definition of $\Vert Z \Vert_F$.
\end{proof}
The notion of $\Vert \cdot \Vert_{B}$-norms together with Theorem \ref{KMR} allows us to state the following abstract isometry result, where we will use the notation $ d_B \left( \mathcal{X} \right) = \underset{X \in \mathcal{X}}{\sup}  \Vert X \Vert_{B}$.
\begin{proposition}\label{theorem3}
	Let $\mathcal{X}= - \mathcal{X} \subset \mathcal{M}$ be a symmetric set and consider
	\begin{align*}
	\widehat{E}&= \frac{\gamma_2 \left( \mathcal{X}, \Bnorm{\cdot} \right)}{\sqrt{Q}} \left(  \frac{\gamma_2 \left( \mathcal{X}, \Bnorm{\cdot}  \right)}{\sqrt{Q}} + d_{F} (\mathcal{X}) \right)  \\
	\widehat{V}&=  \frac{d_{B} \left( \mathcal{X} \right)}{\sqrt{Q}} \left(\frac{\gamma_2 \left( \mathcal{X}, \Bnorm{\cdot}  \right)}{\sqrt{Q}} +   d_{F} (\mathcal{X}) \right)  \\
	\widehat{U}&= \frac{1}{Q}  d^2_{B} \left( \mathcal{X} \right).
	\end{align*}
	Then, for $t>0$ and all $ p \in \lbrack P \rbrack $, 
	\begin{align}
	&\mathbb{P} \left( \underset{X \in \mathcal{X}}{\sup} \Big\vert  \frac{L}{Q} \Vert \mathcal{A}^p \left( X \right) \Vert_{\ell_2}^2 - \sum_{i=1}^{r} \Vert T^{1/2}_{i,p} X_i \Vert^2_F  \Big\vert \ge \tilde{c}_1 \widehat{E} + t \right) \le 2 \exp \left( - \tilde{c}_2 \min \left( \frac{t^2}{\widehat{V}^2}, \frac{t}{\widehat{U}} \right) \right)\label{ineq:RIP1}\\
	&\mathbb{P} \left( \underset{X \in \mathcal{X}}{\sup} \Big\vert   \Vert \mathcal{A} \left( X \right) \Vert_{\ell_2}^2 -  \Vert  X \Vert^2_F  \Big\vert \ge \tilde{c}_3 \widehat{E} + t \right) \le 2 \exp \left( - \tilde{c}_4 \min \left( \frac{t^2}{\tilde{V}^2}, \frac{t}{\widehat{U}} \right) \right), \label{ineq:RIP2}
	\end{align}
	provided $ \left\{ \Gamma_p \right\}^P_{p=1} $ is a $\omega$-admissible partition of $ \lbrack L \rbrack $. The constants $ \tilde{c}_1 $, $ \tilde{c}_2 $, $ \tilde{c}_3$, and $ \tilde{c}_4$ are universal.
\end{proposition}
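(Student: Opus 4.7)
The plan is to reduce both \eqref{ineq:RIP1} and \eqref{ineq:RIP2} to a single application of Theorem~\ref{KMR}. I first stack the entries of all the $c_{i,\ell}$'s into one standard complex Gaussian vector $\xi$. Since $[\mathcal{A}(X)]_\ell = \sum_{i=1}^r b_{i,\ell}^* X_i c_{i,\ell}$ is a linear function of $\xi$, there is a matrix $A_X$, itself linear in $X$, with $\mathcal{A}(X)=A_X\xi$; by construction the $\ell$-th row of $A_X$ is supported only on the coordinates of $\xi$ coming from $c_{1,\ell},\ldots,c_{r,\ell}$. For the partitioned version I set $\tilde A_X^p := \sqrt{L/Q}$ times the restriction of $A_X$ to the rows indexed by $\Gamma_p$, so that $\Vert \tilde A_X^p \xi\Vert_{\ell_2}^2 = \tfrac{L}{Q}\Vert \mathcal{A}^p(X)\Vert_{\ell_2}^2$. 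Linearity of $X\mapsto A_X$ together with $\mathcal{X}=-\mathcal{X}$ makes both image sets symmetric, as Theorem~\ref{KMR} requires.

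Next I express the three geometric parameters entering Theorem~\ref{KMR} in terms of the $\Bnorm{\cdot}$- and $\Vert\cdot\Vert_F$-norms on $\mathcal{X}$. Because the rows of $A_X$ have pairwise disjoint supports, $A_XA_X^*$ is diagonal with $(\ell,\ell)$-entry $\sum_i \Vert X_i^* b_{i,\ell}\Vert_{\ell_2}^2$, which yields
\begin{equation*}
\Vert A_X\Vert_{2\to 2} = \tfrac{1}{\sqrt{L}}\Bnorm{X} \le \tfrac{1}{\sqrt{Q}}\Bnorm{X},
\end{equation*}
and analogously $\Vert\tilde A_X^p\Vert_{2\to 2}\le \Bnorm{X}/\sqrt{Q}$. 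Using $\mathbb{E}\Vert A_X\xi\Vert_{\ell_2}^2 = \Vert A_X\Vert_F^2$ together with the identities $\sum_{\ell=1}^L b_{i,\ell}b_{i,\ell}^* = B_i^* B_i = \Id$ and $\sum_{\ell\in\Gamma_p} b_{i,\ell}b_{i,\ell}^* = \tfrac{Q}{L}T_{i,p}$ one obtains $\mathbb{E}\Vert A_X\xi\Vert_{\ell_2}^2 = \Vert X\Vert_F^2$ and $\mathbb{E}\Vert\tilde A_X^p\xi\Vert_{\ell_2}^2 = \sum_i \Vert T_{i,p}^{1/2}X_i\Vert_F^2$, precisely the centring quantities appearing in \eqref{ineq:RIP1} and \eqref{ineq:RIP2}. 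In particular $d_F$ of the partitioned image set is controlled by $\sqrt{1+\nu}\,d_F(\mathcal X)\le \sqrt{33/32}\,d_F(\mathcal X)$ via $\Vert T_{i,p}\Vert_{2\to 2}\le 1+\nu$ from $\omega$-admissibility, while the non-partitioned case gives exactly $d_F(\mathcal X)$. Finally, the linearity $A_X-A_Y=A_{X-Y}$ propagates the $\Vert\cdot\Vert_{2\to 2}$-bound to differences, and substitution into the definition of the $\gamma_2$-functional gives $\gamma_2(\{A_X\},\Vert\cdot\Vert_{2\to 2})\le \gamma_2(\mathcal X,\Bnorm{\cdot})/\sqrt{Q}$, and similarly for $\{\tilde A_X^p\}$.

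With these three estimates in hand, Theorem~\ref{KMR} applied to $\{A_X\}$ and to $\{\tilde A_X^p\}$ produces $E$, $V$, $U$ each at most a universal constant times $\widehat E$, $\widehat V$, $\widehat U$, and the tail bounds \eqref{ineq:RIP1}--\eqref{ineq:RIP2} fall out after absorbing those constants into $\tilde c_1,\ldots,\tilde c_4$. The one piece of bookkeeping I expect to require most care is the $\sqrt{L/Q}$-rescaling in the partitioned case: this factor is exactly what turns $\mathbb{E}\Vert\mathcal{A}^p(X)\Vert_{\ell_2}^2 = \tfrac{Q}{L}\sum_i \Vert T_{i,p}^{1/2}X_i\Vert_F^2$ into the clean centring constant $\sum_i \Vert T_{i,p}^{1/2}X_i\Vert_F^2$, and simultaneously yields the $1/\sqrt{Q}$ scaling in both $d_{2\to 2}$ and in $\gamma_2$, so that a \emph{single} triple $(\widehat E,\widehat V,\widehat U)$ controls both tails in a unified way. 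The harmless $(1+\nu)\le 33/32$ factor from $\Vert T_{i,p}\Vert_{2\to 2}$ is then absorbed into the universal constants.
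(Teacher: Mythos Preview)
Your proposal is correct and follows essentially the same route as the paper: the paper also constructs, for each $X\in\mathcal{X}$, a block-diagonal matrix $H_X$ whose $\ell$-th row is $\sqrt{L/Q}\,(b_{1,\ell}^*X_1,\ldots,b_{r,\ell}^*X_r)$, stacks the $c_{i,\ell}$ into a Gaussian vector, computes $\Vert H_X\Vert_F^2=\sum_i\Vert T_{i,p}^{1/2}X_i\Vert_F^2$ and $\Vert H_X\Vert_{2\to 2}\le \Bnorm{X}/\sqrt{Q}$, and then applies Theorem~\ref{KMR}. The only cosmetic differences are that the paper works directly with the restricted Gaussian vector $\xi^{(p)}$ rather than restricting rows of a global $A_X$, and it bounds $\sum_i\Vert T_{i,p}^{1/2}X_i\Vert_F^2\le 2\Vert X\Vert_F^2$ with the cruder constant $2$ instead of your $1+\nu$.
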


\begin{proof}
We will start by proving (\ref{ineq:RIP1}). Fix $ p \in  \lbrack P \rbrack $. For $X= \left(X_1, \ldots, X_r \right)  \in \mathcal{X}$ let $H_{X} \in \C^{L \times Q \sum_{i=1}^{r}  N_i }$ be the block diagonal matrix, whose diagonal elements, indexed by $ \ell \in \Gamma_p $ are given by row vectors of the form $ \sqrt{\frac{L}{Q}} \left(   b^*_{1, \ell}  X_1, \ldots,  b^*_{r,\ell}  X_r  \right) $. Furthermore, set $\mathcal{H}_{X} = \lbrace H_X: X \in \mathcal{X} \rbrace $. Observe that
\begin{align}
 \Vert H_X \Vert_F^2 &= \frac{L}{Q} \sum_{\ell \in \Gamma_p} \sum_{i=1}^{r} \Vert X_{i}^*  b_{i,\ell }\Vert_{\ell_2}^2 = \sum_{i=1}^{r} \trace \left( X_i X^*_i T_{i,p}  \right) = \sum_{i=1}^{r} \Vert  T^{1/2}_{i,p} X_i \Vert^2_{F}, \label{eq:inline33} \\
\Vert H_X \Vert_{2 \rightarrow 2} &= \sqrt{\frac{L}{Q}} \max_{\ell \in \Gamma_p} \Vert \left(   b^*_{1, \ell}  X_1, \ldots,  b^*_{r,\ell}  X_r  \right)\Vert_{\ell_2} \le \frac{1}{\sqrt{Q}} \Bnorm{X}. \label{eq:inline34}
\end{align}
Let $\xi^{\left(p\right)}$ be the concatenation of all the random bases vectors $c_{i,\ell }$, where $i \in \lbrack r \rbrack, \ell  \in  \Gamma_p $. Then 
\begin{align*}
\frac{L}{Q}\Vert \mathcal{A}^p \left(X\right) \Vert_{\ell_2}^2 = \frac{L}{Q} \sum_{\ell \in \Gamma_p} \vert  \mathcal{A}^p \left(X\right) \left( \ell \right) \vert^2=\frac{L}{Q} \sum_{ \ell \in \Gamma_p} \Big\vert \sum_{i=1}^{r}   b_{i, \ell }^* X_i c_{i, \ell } \Big\vert^2 = \Vert H_X \xi^{\left(p\right)} \Vert_{\ell_2}^2 
\end{align*}
and
\begin{align*}
 \sum_{i=1}^{r} \Vert  T^{1/2}_{i,p} X_i \Vert^2_{F} = \Vert H_X \Vert_F^2  = \mathbb{E} [ \Vert H_X \xi^{\left(p\right)} \Vert^2_{\ell_2}  ] .  
\end{align*}
Consequently
\begin{align*}
\underset{X \in \mathcal{X}}{\sup}\Big\vert  \frac{L}{Q} \Vert \mathcal{A}^p \left( X \right) \Vert_{\ell_2}^2 - \sum_{i=1}^{r} \Vert T^{1/2}_{i,p}  X_i \Vert^2_F  \Big\vert =  \underset{X \in \mathcal{X}}{\sup} \Big\vert \Vert H_X \xi^{\left(p\right)} \Vert_{\ell_2}^2  - \mathbb{E} \left[ \Vert H_X \xi^{\left(p\right)} \Vert_{\ell_2}^2 \right]  \Big\vert 
\end{align*}
and inequality (\ref{ineq:RIP1}) follows from Theorem \ref{KMR}, equation (\ref{eq:inline33}), (\ref{eq:inline34}) combined with the fact that  $ \sum_{i=1}^{r} \Vert  T^{1/2}_{i,p} X_i \Vert^2_{F} \overset{(\ref{partitionequation})}{\le} 2 \Vert X \Vert^2_F $. Inequality (\ref{ineq:RIP2}) follows in an analogous way by letting $H_X$ be the block diagonal matrix, whose diagonal elements, indexed by $ \ell \in \lbrack L \rbrack $, are given by $ \left(   b^*_{1, \ell}  X_1, \ldots,  b^*_{r,\ell}  X_r  \right) $. Furthermore, one uses $ \sum_{\ell =1}^{L} b_{i, \ell} b^*_{i,\ell} = \Id $ instead of $ \frac{L}{Q}  \sum_{\ell \in \Gamma_p} b_{i,\ell} b^*_{i,\ell} = T_{i,p} $.
\end{proof}
Our strategy to prove Theorem \ref{localisometry} will now be to apply Proposition \ref{theorem3} with appropriately chosen sets $\mathcal{X}$. For $\mathcal{T}_m$, $ \mathcal{T}_h$, and $ \mathcal{T}_{\mathcal{S}^p h}$ as in (\ref{equ:Tsubspaces}), define
\begin{align*}
B^m &= \left\{ X \in \mathcal{T}_m :  \Vert X \Vert_F =1  \right\} \\
B^h &= \left\{  X \in \mathcal{T}_h :  \Vert X \Vert_F =1   \right\}   \\
B^{\mathcal{S}^p h} &= \left\{    X \in \mathcal{T}_{\mathcal{S}^p h} :  \Vert X \Vert_F =1    \right\}
\end{align*}
and observe that in order to prove the $\delta$-local isometry property on $ \mathcal{T} $ it is enough to apply Proposition \ref{theorem3} to the set $\mathcal{W}$ defined by
\begin{equation}\label{definition:W}
\mathcal{W}= B^h  + B^m.
\end{equation}
Similarly, in order to prove the $\delta$-local isometry property on $ \mathcal{T}^p $ for $ p \in \lbrack P \rbrack $  it is enough to apply Proposition \ref{theorem3} to the set $\mathcal{W}^p$ defined by
\begin{equation}\label{definition:Wp}
\mathcal{W}^p= B^h + B^{\mathcal{S}^p h} + B^m.
\end{equation}
That is, it remains to estimate the $ \gamma_2$-functionals of $\mathcal{W}$ and $\mathcal{W}^p$ with respect to $ \Vert \cdot \Vert_B $.  By Dudley's inequality (\ref{dudley}) one can bound the $ \gamma_2$-functional by an integral involving covering numbers. To estimate those, we need the following technical lemmas.
\begin{lemma}\label{splittinglemma}
Let  $B^m$ be the above defined set. Then
\begin{align*}
	&N \left( B^m, \Bnorm{\cdot}, \varepsilon \right)\\ 
	&\le  N \left( B \left(0,1\right) \subset \mathbb{R}^r , \Vert \cdot \Vert_{\ell_2}, \frac{\varepsilon}{2  \sqrt{ K_{\mu} }}  \right) \prod_{i=1}^{r} N \left( B\left(0,1\right) \subset \mathbb{C}^{K_i} , \Binorm{\cdot}, \frac{\varepsilon}{2} \right).
\end{align*}
(By $B\left(0,1\right) $ we always denote the closed unit ball with respect to the $ \Vert \cdot \Vert_{\ell_2} $-norm.)
\end{lemma}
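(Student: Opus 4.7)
The plan is to parametrize the set $B^m$ by a pair consisting of a vector of ``magnitudes'' on the $r$ blocks and a family of unit vectors in the individual spaces, and then discretize each factor separately before combining them into a product covering.

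More precisely, every $X \in B^m$ has the form $X_i = v_i m_i^*$ for some $v_i \in \mathbb{C}^{K_i}$, and since $\Vert m_i \Vert_{\ell_2}=1$ the constraint $\Vert X\Vert_F=1$ becomes $\sum_{i=1}^r \Vert v_i\Vert_{\ell_2}^2 = 1$. Set $\lambda_i := \Vert v_i\Vert_{\ell_2} \ge 0$ and, whenever $\lambda_i>0$, write $v_i = \lambda_i u_i$ with $\Vert u_i\Vert_{\ell_2}=1$ (for $\lambda_i=0$ pick any unit $u_i$). Thus $\lambda=(\lambda_1,\ldots,\lambda_r)$ lies in the unit ball of $\mathbb{R}^r$ and each $u_i$ in the unit ball of $\mathbb{C}^{K_i}$. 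Choose an $\varepsilon/(2\sqrt{K_\mu})$-net $\mathcal{N}_0$ of $B(0,1)\subset\mathbb{R}^r$ in $\Vert\cdot\Vert_{\ell_2}$ and, for each $i$, an $\varepsilon/2$-net $\mathcal{N}_i$ of $B(0,1)\subset\mathbb{C}^{K_i}$ in $\Vert\cdot\Vert_{B_i}$, and build the product net of points $\tilde X$ with $\tilde X_i = \tilde\lambda_i\tilde u_i m_i^*$, $\tilde\lambda\in\mathcal{N}_0$, $\tilde u_i\in\mathcal{N}_i$.

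The key computation bounds $\Vert X-\tilde X\Vert_B$ for a suitable choice of $(\tilde\lambda,\tilde u_1,\ldots,\tilde u_r)$ from the product net. Using $\Vert m_i\Vert_{\ell_2}=1$, for each $\ell$ one has $\Vert (X_i-\tilde X_i)^* b_{i,\ell}\Vert_{\ell_2}^2=\lvert(v_i-\tilde v_i)^* b_{i,\ell}\rvert^2$, and the split $v_i-\tilde v_i=(\lambda_i-\tilde\lambda_i)u_i+\tilde\lambda_i(u_i-\tilde u_i)$ combined with $(a+b)^2\le 2a^2+2b^2$ yields
\begin{equation*}
\Vert X-\tilde X\Vert_B^2 \;\le\; 2L\max_\ell \sum_{i=1}^r |\lambda_i-\tilde\lambda_i|^2|u_i^*b_{i,\ell}|^2 \;+\; 2L\max_\ell \sum_{i=1}^r |\tilde\lambda_i|^2|(u_i-\tilde u_i)^*b_{i,\ell}|^2 .
\end{equation*}
For the first summand the Cauchy--Schwarz bound $|u_i^*b_{i,\ell}|^2\le \Vert b_{i,\ell}\Vert_{\ell_2}^2\le K_\mu/L$ together with $\Vert\lambda-\tilde\lambda\Vert_{\ell_2}\le \varepsilon/(2\sqrt{K_\mu})$ gives $\le 2K_\mu\cdot\varepsilon^2/(4K_\mu)=\varepsilon^2/2$. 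For the second summand, pulling the maximum inside the sum and using $L|(u_i-\tilde u_i)^*b_{i,\ell}|^2\le\Vert u_i-\tilde u_i\Vert_{B_i}^2\le\varepsilon^2/4$ together with $\Vert\tilde\lambda\Vert_{\ell_2}^2\le 1$ yields another $\varepsilon^2/2$. Adding gives $\Vert X-\tilde X\Vert_B\le\varepsilon$, so the product net is an $\varepsilon$-net in $\Vert\cdot\Vert_B$, and its cardinality is exactly the product appearing on the right-hand side of the lemma.

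The only subtle point is the splitting step: one must keep the approximations in $\lambda$ and in the $u_i$'s decoupled, which is why the two covering scales $\varepsilon/(2\sqrt{K_\mu})$ and $\varepsilon/2$ are chosen asymmetrically; the factor $\sqrt{K_\mu}$ in the first scale is precisely what compensates the sup-bound $|u_i^*b_{i,\ell}|\le\sqrt{K_\mu/L}$ that one needs when $u_i$ is not yet discretized. With this balance, the $2a^2+2b^2$ split produces equal $\varepsilon^2/2$ contributions and gives the stated product estimate.
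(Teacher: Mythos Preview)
Your argument is correct and follows essentially the same route as the paper: both parametrize $B^m$ by magnitudes $\lambda\in B(0,1)\subset\mathbb{R}^r$ and normalized directions $u_i\in B(0,1)\subset\mathbb{C}^{K_i}$, discretize each factor with the stated scales, and control the approximation error via the same two contributions (the paper uses an intermediate point and the triangle inequality where you use $(a+b)^2\le 2a^2+2b^2$, which is a purely cosmetic difference).
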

This lemma is actually a slight modification of \cite[Lemma 3.1]{candes2011tight}. For the convenience of the reader we have included a proof in Appendix \ref{appendixsplittinglemma}.
\begin{lemma}\label{maureyapplied}
	For all $i \in \lbrack r \rbrack$
	\begin{equation}\label{coveringboundbymaurey}
	\log N \left( B \left(0,1\right) \subset \mathbb{C}^{K_i} , \Binorm{\cdot}  ,\frac{\varepsilon}{2} \right) \lesssim  \frac{ K_{i,\mu} }{\varepsilon^2} \log \left( L \right)  .
	\end{equation}
\end{lemma}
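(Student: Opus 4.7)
The plan is to combine the Artstein--Milman duality theorem (Theorem~\ref{dualitymetricentropy}) with Maurey's lemma (Lemma~\ref{Maurey}), a strategy familiar from the proofs of similar covering bounds in \cite{ARR2012} and \cite{lingstrohmer}.

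First, I would recognize $\Vert \cdot \Vert_{B_i}$ as a norm of $\ell_\infty$-type: since $\Vert z \Vert_{B_i} = \max_{\ell \in [L]}|\langle z, \sqrt{L}\, b_{i,\ell}\rangle|$, its unit ball $V$ has polar body $V^\circ$ equal to the absolute convex hull of the finite set $\mathcal{U} = \{\sqrt{L}\, b_{i,\ell}: \ell \in [L]\}$ (identifying $\mathbb{C}^{K_i}$ with $\mathbb{R}^{2K_i}$ and taking, if needed, a handful of phase rotations $e^{2\pi i k/m}\sqrt{L}b_{i,\ell}$ for some fixed $m$ so as to reduce the absolute convex hull in the complex sense to an ordinary convex hull with real coefficients, at the cost of an absolute constant). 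Theorem~\ref{dualitymetricentropy} then gives
\begin{equation*}
\log N\!\left(B(0,1), \Vert \cdot \Vert_{B_i}, \tfrac{\varepsilon}{2}\right) \lesssim \log N\!\left(V^\circ, \Vert \cdot \Vert_{\ell_2}, c\varepsilon\right)
\end{equation*}
for a universal constant $c>0$.

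Next, I would apply Maurey's lemma with the normed space $(X, \vertiii{\cdot}) = (\mathbb{C}^{K_i}, \Vert\cdot\Vert_{\ell_2})$ and $\mathcal{U}$ as above (possibly enlarged by phase rotations, so still $|\mathcal{U}| \lesssim L$). The key computation is to bound $\mathbb{E}_\varepsilon \Vert \sum_{j=1}^M \varepsilon_j u_j\Vert_{\ell_2}$ for $u_j \in \mathcal{U}$. By Jensen's inequality and orthogonality of the Rademacher variables,
\begin{equation*}
\mathbb{E}_\varepsilon \Big\Vert \sum_{j=1}^M \varepsilon_j u_j\Big\Vert_{\ell_2} \le \Big(\sum_{j=1}^M \Vert u_j\Vert_{\ell_2}^2\Big)^{1/2} \le \sqrt{K_{i,\mu}}\,\sqrt{M},
\end{equation*}
where the last bound uses $\Vert u_j\Vert_{\ell_2}^2 = L\Vert b_{i,\ell}\Vert_{\ell_2}^2 \le K_{i,\mu}$ by the definition \eqref{def:Kmu} of $K_{i,\mu}$. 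Thus we may take $A = \sqrt{K_{i,\mu}}$ in Lemma~\ref{Maurey}, which yields
\begin{equation*}
\log N\!\left(\mathrm{conv}(\mathcal{U}), \Vert\cdot\Vert_{\ell_2}, c\varepsilon\right) \lesssim \frac{K_{i,\mu}}{\varepsilon^2}\,\log |\mathcal{U}| \lesssim \frac{K_{i,\mu}}{\varepsilon^2}\,\log L.
\end{equation*}
Chaining this with the duality estimate above gives the claim.

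The only subtlety is the passage from the complex setting to the real one required by Theorem~\ref{dualitymetricentropy} and Lemma~\ref{Maurey} (which uses Rademacher signs). This is a routine device: replacing $\mathcal{U}$ by the enlarged but still $O(L)$-size set obtained by including a sufficiently fine, but constant-size, discretization of phases so that the complex absolute convex hull is comparable to the real convex hull of the enlarged set, up to an absolute constant that is absorbed into $\lesssim$. Everything else is essentially the standard Maurey/duality recipe, and no delicate probabilistic estimate is needed beyond the elementary second-moment computation displayed above.
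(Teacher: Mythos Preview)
Your proposal is correct and follows essentially the same route as the paper: Artstein--Milman duality (Theorem~\ref{dualitymetricentropy}) reduces the problem to covering the polar body in $\ell_2$, and Maurey's lemma (Lemma~\ref{Maurey}) with $A=\sqrt{K_{i,\mu}}$ finishes the estimate. The only cosmetic difference is in the complex-to-real reduction: the paper explicitly splits $|\langle x,b_{i,\ell}\rangle|$ into its real and imaginary parts via vectors $u_\ell,v_\ell\in\mathbb{R}^{2K_i}$ (so $|\mathcal{U}|=2L$), whereas you sketch the equivalent device of a fixed-size phase discretization; both produce a finite set of cardinality $O(L)$ with elements of $\ell_2$-norm at most $\sqrt{K_{i,\mu}}$ (after your rescaling), and the rest of the argument is identical.
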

\begin{proof}
Our goal is to apply Theorem \ref{dualitymetricentropy} to $\log N \left( B \left(0,1\right) \subset \mathbb{C}^{K_i} ,\Binorm{\cdot},\frac{\varepsilon}{2} \right)$. However, as $ \Binorm{\cdot} $ is a norm defined on a complex vector space we first need to transfer this setting into an appropriate real vector space framework.  For that goal we will use the isometric embedding $ P: \C^{K_i} \rightarrow \mathbb{R}^{2K_i} $ given by $ x = \left(x_1, \ldots, x_{K_i} \right) \in \C^{K_i} \mapsto \left(   \left( \text{Re } x\right)_1, \left( \text{Im } x\right)_1, \ldots, \left( \text{Re } x\right)_{K_i} , \left( \text{Im } x\right)_{K_i}  \right)  $. Furthermore, note that for all $x \in \mathbb{C}^{K_i } $
\begin{align}
\Vert x \Vert_{B_{i}} &= \sqrt{L}\  \underset{\ell \in \lbrack L \rbrack}{\max} \vert \langle x, b_{i, \ell} \rangle \vert = \sqrt{L}\ \underset{\ell \in \lbrack L \rbrack}{\max}  \sqrt{ \left( \text{Re }  \langle x, b_{i,\ell} \rangle \right)^2 +  \left( \text{Im }  \langle x, b_{i,\ell} \rangle \right)^2 } \label{ineq:778} \\
& \le \sqrt{2L} \ \underset{\ell \in \lbrack L \rbrack}{\max} \max \left\{ \big\vert \text{Re}\  \langle x, b_{i, \ell} \rangle   \big\vert ; \big\vert \text{Im}\ \langle x, b_{i, \ell} \rangle  \ \big\vert  \right\}. \label{ineq779}
\end{align}
Setting
\begin{align*}
u_{\ell} = \left( \left( \text{Re}\ b_{i, \ell}  \right)_1, - \left( \text{Im}\ b_{i, \ell}  \right)_1, \left( \text{Re}\ b_{i, \ell}  \right)_2, \ldots, - \left( \text{Im}\ b_{i, \ell}  \right)_{K_i-1} , \left( \text{Re}\ b_{i, \ell}  \right)_{K_i},  - \left( \text{Im}\ b_{i, \ell}  \right)_{K_i} \right)
\end{align*}
yields $   \text{Re} \left( \langle  x, b_{i,\ell} \rangle_{\ell_2} \right)   =  \langle Px, u_{\ell}  \rangle_{\ell_2}  $ for all $x \in \C^{K_i}$ and all $ \ell \in \lbrack L \rbrack$. Similarly, setting 
\begin{align*}
v_{\ell} = \left( \left( \text{Im}\ b_{i, \ell}  \right)_1,  \left( \text{Re}\ b_{i, \ell}  \right)_1, \left( \text{Im}\ b_{i, \ell}  \right)_2, \ldots,  \left( \text{Re}\ b_{i, \ell}  \right)_{K_i-1} , \left( \text{Im}\ b_{i, \ell}  \right)_{K_i},   \left( \text{Re}\ b_{i, \ell}  \right)_{K_i} \right)
\end{align*}
yields $ \text{Im} \left( \langle  x, b_{i,\ell}\rangle \right)  = \langle Px, v_{\ell} \rangle $  for all $x \in \C^{K_i}$ and all $  \ell \in \lbrack L \rbrack $. We define
\begin{equation*}
\mathcal{U} = \underset{\ell \in \lbrack L \rbrack}{ \bigcup } \left\{ u_{\ell}; v_{\ell} \right\}
\end{equation*}
and observe
\begin{align}\label{ineq:inline888}
\underset{u \in \mathcal{U}}{\max} \ \Vert u \Vert_{\ell_2}= \underset{ \ell \in \lbrack L \rbrack }{\max} \ \Vert b_{i,\ell} \Vert_{\ell_2} \le \sqrt{ \frac{ K_{i,\mu} }{L}  }.
\end{align}
By (\ref{ineq:778}, \ref{ineq779}) and the definition of $ \mathcal{U}  $ we obtain
\begin{equation}\label{ineq:inline777}
\Vert x \Vert_{B_{i}} \le \sqrt{2L} \ \underset{u \in \mathcal{U}}{\max}\ \langle Px, u \rangle=  \sqrt{2L} \underset{u \in \text{conv}\ \mathcal{U}}{\max}\ \langle Px, u \rangle    = \sqrt{2L} \Vert Px \Vert_{ \left( \text{conv}\ \mathcal{U} \right)^{\circ} }.
\end{equation}
(For the definition of $\Vert \cdot \Vert_{ \left( \text{conv} \ \mathcal{U} \right)^{\circ} } $ see Section \ref{sec:coveringnumbers}.) Inequality (\ref{ineq:inline777}) together with Theorem \ref{dualitymetricentropy} yields
\begin{align*}
\log N \left( B \left(0,1\right) \subset \mathbb{C}^{K_i} ,\Vert \cdot \Vert_{B_{i}} ,\frac{\varepsilon}{2} \right) & \le \log N \left( B \left(0,1\right) \subset \mathbb{R}^{2 K_i}  ,\Vert \cdot \Vert_{  \text{conv} \left( \mathcal{U} \right)^{\circ}  } ,\frac{\varepsilon}{2 \sqrt{2 L} } \right) \\
& \lesssim \log N \left( \text{conv} \left( \mathcal{U} \right), \Vert \cdot \Vert_{\ell_2} ,   \frac{ \tilde{c} \varepsilon }{\sqrt{L}}    \right),
\end{align*}
for some numerical constant $ \tilde{c} >0 $, due to $ \text{conv} \left( \mathcal{U} \right)^{\circ \circ} = \text{conv} \left( \mathcal{U} \right) $. In order to estimate this covering number from above we will use Lemma \ref{Maurey}. For that purpose let $M \in \mathbb{N}$ and assume  $ \left( u_1, \ldots , u_M  \right) \in \mathcal{U}^M $. By Jensen's inequality
\begin{align*}
\mathbb{E} \Big\Vert \sum_{m=1}^{M} \varepsilon_m u_m \Big\Vert_{\ell_2} \le \sqrt{ \mathbb{E} \Big\Vert  \sum_{m=1}^{M} \varepsilon_m u_m \Big\Vert^2_{\ell_2}  } = \sqrt{ \sum_{m=1}^{M} \Vert u_m \Vert^2_{\ell_2} }\le  \sqrt{M}~ \underset{u \in \mathcal{U}}{\max} \Vert u \Vert_{\ell_2} .
\end{align*}
Thus, by Lemma \ref{Maurey} applied with $ A= \underset{u \in \mathcal{U}}{\max} \Vert u \Vert_{\ell_2}$ we obtain
\begin{align*}
\log N \left( \text{conv} \left( \mathcal{U} \right), \Vert \cdot \Vert_{\ell_2}, \frac{\tilde{c} \varepsilon}{\sqrt{L}}   \right) \lesssim \frac{ L }{\varepsilon^2}  \underset{u \in \mathcal{U}}{\max} \Vert u \Vert^2_{\ell_2}     \log \vert \mathcal{U} \vert   \lesssim \frac{ K_{i,\mu} }{\varepsilon^2} \log L,
\end{align*}
where in the second inequality we have used (\ref{ineq:inline888}). This completes the proof.

\end{proof}
The previous two lemmas allow us to find an upper bound for the $ \gamma_2$-functional, which is needed to prove Theorem \ref{localisometry}.
\begin{lemma}\label{lemmagammabound}
Suppose that $ \mathcal{X}  = \mathcal{W}$ or $  \mathcal{X} = \mathcal{W}^p$ for some $ p \in \lbrack P \rbrack $. (For the definition of $\mathcal{W}$ and $ \mathcal{W}^p $ see (\ref{definition:W}) and (\ref{definition:Wp}).) Then
\begin{align*}
	d_{F} \left( \mathcal{X} \right)&\le 3,  \\
	d_{B}\left( \mathcal{X} \right)&\le 3 \sqrt{ K_{\mu} },\\
	\gamma_2 \left( \mathcal{X},  \Bnorm{\cdot} \right) &\lesssim  \sqrt{r \left(   K_{\mu}   \log \left(L\right) \log^2 \left( K_{\mu} \right)  + N \mu^2_h \right)}.
\end{align*}
\end{lemma}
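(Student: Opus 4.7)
The plan is to prove the three bullets separately, with the $\gamma_2$-bound being the substantial step.

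For the diameter bounds, both are immediate. Every element of $\mathcal{W}$ (resp.\ $\mathcal{W}^p$) is a sum of at most two (resp.\ three) elements of unit Frobenius norm, so by the triangle inequality $d_F(\mathcal{X}) \le 3$. Inequality (\ref{Bnorminequality3}) then gives $d_B(\mathcal{X}) \le \sqrt{K_\mu}\,d_F(\mathcal{X}) \le 3\sqrt{K_\mu}$.

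For the $\gamma_2$-bound, I would first invoke the subadditivity Lemma \ref{gamma_2sum} to reduce the problem to bounding $\gamma_2(B^h,\|\cdot\|_B)$, $\gamma_2(B^m,\|\cdot\|_B)$ and (when applicable) $\gamma_2(B^{\mathcal{S}^p h},\|\cdot\|_B)$ separately. For $B^h$, write a generic element as $X=(h_iu_i^*)_i$ with $\sum_i\|u_i\|_{\ell_2}^2=1$; then
\[
\|X\|_B^2 = L\,\max_\ell \sum_i |h_i^*b_{i,\ell}|^2\|u_i\|_{\ell_2}^2 \le \mu_h^2,
\]
by (\ref{definition:muh}). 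Hence $\|\cdot\|_B\le\mu_h\|\cdot\|_F$ on the complex subspace $\mathcal{T}_h$, whose (real) dimension is at most $2rN$. A standard volumetric covering bound $\log N(B^h,\|\cdot\|_B,\varepsilon)\lesssim rN\log(1+\mu_h/\varepsilon)$ plugged into Dudley's inequality (\ref{dudley}) yields $\gamma_2(B^h,\|\cdot\|_B)\lesssim \mu_h\sqrt{rN}$. The case of $B^{\mathcal{S}^p h}$ is entirely analogous, using the second term of (\ref{definition:muh}) together with the uniform bound $\|S_{i,p}\|_{2\to 2}\le 32/31$.

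The main obstacle is the bound on $\gamma_2(B^m,\|\cdot\|_B)$, because the vectors $m_i$ enjoy no incoherence. Here I would apply Lemma \ref{splittinglemma} to factor the covering numbers, using Lemma \ref{maureyapplied} together with a Sudakov-type volumetric bound for the complex unit balls in $(\mathbb{C}^{K_i},\|\cdot\|_{B_i})$. This gives the two competing estimates
\[
\log N(B^m,\|\cdot\|_B,\varepsilon)\;\lesssim\; r\log\bigl(1+\tfrac{\sqrt{K_\mu}}{\varepsilon}\bigr)\;+\; r\min\Bigl(K_\mu\log L/\varepsilon^2,\; K_\mu\log(1+\tfrac{\sqrt{K_\mu}}{\varepsilon})\Bigr),
\]
where the first term inside the $\min$ is Maurey and the second is volumetric (using $K\le K_\mu$). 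Since Maurey alone yields a divergent Dudley integral, I would split the integration over $(0,3\sqrt{K_\mu}]$ at a threshold $\varepsilon_0\sim 1$: for $\varepsilon\ge \varepsilon_0$ use Maurey to obtain a contribution $\lesssim \sqrt{rK_\mu\log L}\cdot\log K_\mu$, and for $\varepsilon\le\varepsilon_0$ use the volumetric bound, which after the change of variables $u=\log(2\sqrt{K_\mu}/\varepsilon)$ gives a contribution $\lesssim \sqrt{rK_\mu\log K_\mu}$. Adding the two yields $\gamma_2(B^m,\|\cdot\|_B)\lesssim \sqrt{r K_\mu\log L\,\log^2 K_\mu}$. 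Combining all three pieces via $\sqrt{a}+\sqrt{b}\lesssim\sqrt{a+b}$ produces the claimed bound.
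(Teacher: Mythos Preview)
Your proposal is correct and follows essentially the same route as the paper: triangle inequality plus (\ref{Bnorminequality3}) for the diameters, Lemma~\ref{gamma_2sum} to split $\gamma_2$ into the three pieces $B^h$, $B^{\mathcal{S}^p h}$, $B^m$, the Lipschitz bound $\|\cdot\|_B\le\mu_h\|\cdot\|_F$ combined with Dudley and a volumetric estimate for the first two, and Lemma~\ref{splittinglemma} together with Lemma~\ref{maureyapplied} plus a split of the Dudley integral at $\varepsilon\sim 1$ for $B^m$. One small slip: for $B^{\mathcal{S}^p h}$ you need a \emph{lower} bound on $\|S_{i,p}h_i\|_{\ell_2}$ (equivalently $\|T_{i,p}\|_{2\to2}\le 1+\nu$), not the upper bound $\|S_{i,p}\|_{2\to2}\le 32/31$, to pass from $\sum_i\|u_i\|_{\ell_2}^2$ back to $\|X\|_F^2$; the paper handles this via $\|h_i\|_{\ell_2}=\|T_{i,p}S_{i,p}h_i\|_{\ell_2}\le(1+\nu)\|S_{i,p}h_i\|_{\ell_2}$.
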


\begin{proof}
	The first inequality follows from the triangle inequality. For the second one note that for $X \in \mathcal{X} $ by (\ref{Bnorminequality3}) one obtains the inequality 
	\begin{equation*}
	\Bnorm{ X } \le  \sqrt{ K_{\mu} } \Vert X \Vert_F \le 3 \sqrt{ K_{\mu} }.
	\end{equation*}
	The last line is more involved. We will present the proof only in the case of $ \mathcal{X}= \mathcal{W}^p $. If $\mathcal{X} = \mathcal{W} $ the inequality can be proven analogously. By Lemma \ref{gamma_2sum} we obtain
	\begin{equation}\label{gammaintern3}
          \gamma_2 \left(\mathcal{W}^p,  	\Bnorm{\cdot} \right)  \lesssim  \  \gamma_2 \left(B^h, 	\Bnorm{\cdot} \right) + \gamma_2 \left(B^{\mathcal{S}_p h},\Bnorm{\cdot} \right)   + \gamma_2 \left(B^m, \Bnorm{\cdot}\right) .
	\end{equation}
	We will estimate the three $\gamma_2$-functionals separately.\\
    \textbf{Step 1:} To bound $ \gamma_2 \left(B^h, \Bnorm{\cdot} \right) $, let $U= \left( h_1 u^*_1, \ldots, h_r u^*_r \right), V= \left( h_1 v^*_1, \ldots, h_r v^*_r \right) \in B^h  $. Observe that by definition
        \begin{align*}
          \Bnorm{U - V} &= \underset{\ell \in \lbrack L \rbrack}{\max}  \sqrt{ L\ \sum_{i=1}^{r} \Big\Vert \left(h_i u^*_i -h_i v^*_i\right)^*b_{i, \ell}  \Big\Vert_{\ell _2}^2}   = \underset{\ell \in \lbrack L \rbrack}{\max}  \sqrt{ L\ \sum_{i=1}^{r} \Vert u_i - v_i \Vert_{\ell_2}^2  \vert h^*_i b_{i,\ell}  \vert^2} \\
                        & \le \mu_h   \sqrt{ \sum_{i=1}^{r} \Vert u_i - v_i \Vert_{\ell_2}^2  }
          = \mu_h \Vert U-V \Vert_{F},
        \end{align*}
        where the last equality is due to $\Vert h_i \Vert_{\ell_2}=1 $ for all $ i \in \lbrack r \rbrack $. This implies
	\begin{align}\label{step1}
          \gamma_2 \left(B^h, \Bnorm{\cdot} \right) &\le  \mu_h   \gamma_2 \left(B^h, \Vert \cdot \Vert_{F}\right) 
                                                      \lesssim   \mu_h   \int_{0}^{1} \sqrt{ \log N \left( B^h  ,\Vert \cdot \Vert_{F}, \varepsilon \right) }  d \varepsilon\lesssim   \mu_h \sqrt{rN},
	\end{align}
	where the second inequality follows from the Dudley inequality
        (\ref{dudley}).
        The third inequality follows from the fact that $ \left( B^h,
          \Vert \cdot \Vert_{F}  \right) $ is isometric to $ \left( B \left(0,1\right) \subset \mathbb{R}^{2 \sum_{i=1}^{r} N_i}, \Vert \cdot \Vert_{\ell_2}   \right) $
        and from a standard volumetric estimate.\\
        \textbf{Step 2:} To bound $ \gamma_2 \left(B^{\mathcal{S}_p h}, \Bnorm{\cdot} \right) $ let $U= \left( S_{1,p} h_1 u^*_1, \ldots, S_{r,p} h_r u^*_r\right)$ and\\
         $V= \left( S_{1,p} h_1 v^*_1, \ldots, S_{r,p} h_r v^*_r \right) \in B_h $. Then
	\begin{align*}
          \Bnorm{U - V} &= \underset{\ell \in \lbrack L \rbrack}{\max}  \sqrt{ L\ \sum_{i=1}^{r} \Big\Vert \left(S_{i,p} h_i  u^*_i - S_{i,p}h_i v^*_i\right)^* b_{i, \ell}  \Big\Vert_{\ell _2}^2} \\
                        & = \underset{\ell \in \lbrack L \rbrack}{\max}  \sqrt{ L\ \sum_{i=1}^{r} \Vert u_i - v_i \Vert_{\ell_2}^2  \vert h^*_i  S_{i,p} b_{i,\ell}  \vert^2}  \le \mu_h   \sqrt{ \sum_{i=1}^{r} \Vert u_i - v_i \Vert_{\ell_2}^2  }\\
                        & = \mu_h \sqrt{  \sum_{i=1}^{r} \Vert u_i - v_i \Vert^2_{\ell_2} \Vert h_i \Vert^2_{\ell_2}  }= \mu_h \sqrt{  \sum_{i=1}^{r} \Vert u_i - v_i \Vert^2_{\ell_2} \Vert T_{i,p} S_{i,p} h_i \Vert^2_{\ell_2}  }  \\
                        & \le \left(1+ \nu \right)  \mu_h \sqrt{ \sum_{i=1}^{r}  \Vert u_i - v_i \Vert^2_{\ell_2} \Vert S_{i,p} h_i \Vert^2_{\ell_2}   }
          \lesssim \mu_h \Vert U-V \Vert_{F}.
        \end{align*}
        In the third line we used that $\Vert h_i \Vert_{\ell_2} =1 $ and in the last line we used that $ \Vert T_{i,p} \Vert_{2 \rightarrow 2} \le 1 + \nu $ and $\nu = \frac{1}{32} $. An analogous reasoning as in (\ref{step1}) then yields
        \begin{align*}
          \gamma_2 \left(B^{S_p h}, \Bnorm{\cdot} \right)  \lesssim   \mu_h \sqrt{rN}.
        \end{align*}
        \textbf{Step 3:} To bound $ \gamma_2 \left(B^m, \Bnorm{\cdot}  \right) $ note that  inequality (\ref{dudley}) and the fact that $ d_B \left(B^m\right) \le \sqrt{ K_{\mu} } $ imply
        \begin{align*}
          \gamma_2 \left( B^m, \Bnorm{\cdot} \right) \lesssim  \int_{0}^{\sqrt{  K_{\mu}  }} \sqrt{  \log N \left( B_m  ,\Bnorm{\cdot} , \varepsilon \right) } d\varepsilon.
        \end{align*}
        Thus, by Lemma \ref{splittinglemma}	
        \begin{align}\label{gammaintern5}
        \begin{split}
          \gamma_2 \left( B^m,\Bnorm{\cdot} \right)& \lesssim    \int_{0}^{\sqrt{ K_{\mu}  }}  \sqrt{ \log N \left( B \left(0,1\right) \subset \mathbb{R}^r , \Vert \cdot \Vert_{\ell_2}, \frac{\varepsilon}{2\sqrt{ K_{\mu}  }}  \right) } d  \varepsilon   \\
          & +   \int_{0}^{\sqrt{ K_{\mu} }}  \sqrt{ \sum_{i=1}^{r}  \log \left( N \left( B (0,1) \subset \mathbb{C}^{K_i} , \Binorm{\cdot}, \frac{\varepsilon}{2} \right) \right)}  d\varepsilon.\\
          & \le    \int_{0}^{\sqrt{ K_{\mu}  }}  \sqrt{ \log N \left( B \left(0,1\right) \subset \mathbb{R}^r , \Vert \cdot \Vert_{\ell_2}, \frac{\varepsilon}{2\sqrt{ K_{\mu}  }}  \right) } d  \varepsilon \\
          & + \sqrt{r}  \int_{0}^{\sqrt{ K_{\mu}  }} \underset{i \in \lbrack r \rbrack }{\max} \sqrt{ \log \left( N \left( B (0,1) \subset \mathbb{C}^{K_i} , \Binorm{\cdot}, \frac{\varepsilon}{2} \right) \right)}  d\varepsilon. 
          \end{split}
	\end{align}
	The first integral can be bounded by
	\begin{align}\label{test}
	\begin{split}
	&\int_{0}^{\sqrt{ K_{\mu}  }}  \sqrt{ \log N \left( B \left(0,1\right) \subset \mathbb{R}^r , \Vert \cdot \Vert_{\ell_2}, \frac{\varepsilon}{2\sqrt{ K_{\mu} }}  \right) } d  \varepsilon\\ \le &\sqrt{r} \int_{0}^{\sqrt{ K_{\mu}  }} \sqrt{ \log \left(1 + \frac{4 \sqrt{ K_{\mu}  }}{\varepsilon}\right)} d\varepsilon 
	 \lesssim  \sqrt{r K_{\mu} },
	 \end{split}
	\end{align}
where we have used a volumetric estimate and a change of variables. In order to deal with the second term we will split the integrals into two parts: For small $\varepsilon$ we will use a volumetric estimate and for large $ \varepsilon $ we will apply Lemma \ref{maureyapplied}. First we consider the case that $ \varepsilon \in \left(0,1\right) $. Therefore, note that 
\begin{equation*}
B \left(0,1\right) \subset  \sqrt{ K_{i,\mu} } B_{\Vert \cdot \Vert_{B_{i}}} \left( 0,1 \right)  := \left\{ x \in \C^{K_i}: \ \Binorm{x} \le \sqrt{ K_{i,\mu} }  \right\}
\end{equation*}
by inequality (\ref{Bnorminequality4}). This fact combined with a volumetric estimate yields
	\begin{align*}
	\underset{i \in \lbrack r \rbrack }{\max}\ N \left( B \left(0,1\right) \subset \mathbb{C}^{K_i} ,\Binorm{\cdot} ,\varepsilon \right) &\le \underset{i \in \lbrack r \rbrack }{\max}\ N \left( B_{ \Binorm{\cdot} }\left(0,1\right)  , \Binorm{\cdot} ,  \frac{\varepsilon}{ \sqrt{ K_{i,\mu} } }  \right)  \\
	&\le \left(1 +\frac{2 \sqrt{ K_{\mu}  }}{\varepsilon} \right)^{2K}.	
	\end{align*}
By a change of variables and an elementary integral inequality (see \cite[Lemma C.9]{FR2013}) this implies
\begin{align*}
	\int_{0}^{1} \underset{i \in \lbrack r \rbrack}{\max} \ \sqrt{ \log   N \left( B \left(0,1\right) ,\Binorm{\cdot} , \frac{\varepsilon}{2} \right) } d\varepsilon &\le \sqrt{2K}  \int_{0}^{1} \sqrt{\log \left(1 + \frac{2 \sqrt{ K_{\mu} }}{\varepsilon}  \right)}   d\varepsilon   \\
	& \le \sqrt{2 K \log \left( e \left(  1 + 2 \sqrt{ K_{\mu} } \right) \right) }.
\end{align*}
Next, we are going to deal with the case that $ \varepsilon \in \left(1, \sqrt{ K_{\mu} }\right) $. Using Lemma \ref{maureyapplied} we get
\begin{align*}
\int_{1}^{\sqrt{ K_{\mu} }} \underset{i \in \lbrack r \rbrack}{\max} \ \sqrt{ \log \left( N \left( B \left(0,1\right),\Binorm{\cdot}, \frac{\varepsilon}{2} \right) \right)}  d\varepsilon  &\lesssim  \int_{1}^{\sqrt{ K_{\mu} }} \frac{ \sqrt{ K_{\mu}  \log L }}{\varepsilon}   d\varepsilon\\	
 &  \lesssim \sqrt{ K_{\mu}   \log  L  } \log \left(  K_{\mu} \right)   .
\end{align*}
Summing up the two integral inequalities yields
\begin{align*}
&\sqrt{r}~\underset{i \in \lbrack r \rbrack}{\max} \int_{0}^{\sqrt{ K_{\mu}  }} \sqrt{ \log \left( N \left( B (0,1) \subset \mathbb{C}^{K_i} , \Binorm{\cdot}, \frac{\varepsilon}{2} \right) \right)}  d\varepsilon\\
\lesssim &  \sqrt{r K_{\mu}  \log \left( L \right) } \log \left(  K_{\mu}  \right).
\end{align*}
This inequality together with (\ref{gammaintern5}) and (\ref{test}) shows that
	\begin{align*}
	\gamma_2 \left( B^m, \Bnorm{\cdot} \right) &\lesssim   \sqrt{r K_{\mu}  \log \left(L\right)  \log^2 \left( K_{\mu}  \right)   }	.
	\end{align*}
	The result then follows from inequality (\ref{gammaintern3}).
\end{proof}
Combining the upper bounds for the $ \gamma_2$-functionals in the last lemma with the abstract isometry result Proposition \ref{theorem3} we are able to prove the main result in this section.
\begin{proof}[Proof of Theorem \ref{localisometry}]
Fix $p \in \lbrack P \rbrack$. Using Lemma \ref{lemmagammabound} and choosing the constant $C_{\omega}$ in (\ref{intern3}) large enough we get for the quantities arising in Proposition \ref{theorem3} that $\widehat{E} \le \frac{\delta}{2\tilde{c}_1} $, $ \widehat{V} \le \frac{\delta}{\sqrt{ \tilde{c}_2 \omega \log L  }} $, and $ \widehat{U} \le \frac{\delta}{\tilde{c}_2 \omega \log L} $, where we have set $\mathcal{X} = \mathcal{W}^p $ (see (\ref{definition:Wp})) and $ \tilde{c}_i$ are the constants appearing in Proposition \ref{theorem3}. Thus inequality (\ref{ineq:RIP1}) of Proposition \ref{theorem3} for $t= \frac{\delta}{2}$ shows that (\ref{isometryproperty1}) holds with probability $1- \mathcal{O} \left(L^{-\omega}\right) $ for fixed $p$.\\  
In order to prove (\ref{isometryproperty2}) we may argue analogously (with $ \mathcal{X}= \mathcal{W}$ and $t= \frac{\delta}{2})$ and apply inequality (\ref{ineq:RIP2}) of Proposition \ref{theorem3}. Thus, (\ref{isometryproperty1}) holds with probability at least $1 - \mathcal{O} \left(L^{-\omega}\right) $. Replacing $ \omega$ by $\omega+1$ in the argument above and using a union bound argument one observes that (\ref{isometryproperty1}) and (\ref{isometryproperty2}) are satisfied for all $ p \in \lbrack P \rbrack$ with probability at least $ 1- (P+1) \mathcal{O}\left(  L^{-\omega -1} \right) = 1- \mathcal{O} \left( L^{-\omega} \right) $, which finishes the proof.


\end{proof}

\subsection{Constructing the Dual Certificate}\label{subsec:dual:certificate}
\subsubsection{The Golfing Scheme}\label{subsec:golfing}
The goal of this section is to construct $Y \in \text{Range} \left( \mathcal{A}^* \right)$ such that the conditions (\ref{dualcertificatecondition1}) and (\ref{dualcertificatecondition2}) in Lemma \ref{dualcertificate} are fulfilled with high probability. The construction itself will make use of the Golfing Scheme, an iterative method which has been introduced in \cite{gross2011recovering} for the first time. We set
\begin{align*}
Y_0 &= 0 \\
Y_p&=Y_{p-1} + \frac{L}{Q}  \left( \mathcal{A}^p \right)^* \mathcal{A}^p \mathcal{S}^p \left( \text{sgn } \left( X^0 \right) -  \mathcal{P}_{\mathcal{T}} \left( Y_{p-1} \right) \right) \quad \text{for } p \in \lbrack P \rbrack. 
\end{align*}

We will make use of the notation
\begin{align}\label{equ:wpdefinition}
W_p= \text{sgn} \left(X^0\right) - \mathcal{P}_{\mathcal{T}} \left(Y_{p}\right) \quad \text{ for } 0\le p \le P.
\end{align}
The individual components of $W_p$ will be denoted by $W_{i,p}$ for $ i \in \lbrack r \rbrack $, i.e., $W_p = \left( W_{1,p}, \ldots, W_{r,p} \right) $. Then the dual certificate will be given by 
\begin{equation*}
Y= Y_P = \sum_{p=1}^{P} \frac{L}{Q}  \left( \mathcal{A}^p \right)^* \mathcal{A}^p \mathcal{S}^p \left( W_{p-1}  \right).
\end{equation*}
Our Golfing Scheme is set up in the same way as in \cite{lingstrohmer}. In particular, they also use the operator $ \mathcal{S}^p $ as a corrector function as explained in the following remark.
\begin{remark}\label{Remark:Golfingmodified}
	The reason for the appearance of the operator $ \mathcal{S}^p$ is the following: Observe that
	\begin{equation*}
	\mathbb{E} \left[ \left( \mathcal{A}^p  \right)^* \mathcal{A}^p \left(X\right) \right] = \frac{L}{Q} \left( T_{i,p} X_1 , \ldots,    T_{r,p}  X_r    \right).
	\end{equation*}
	Recall that $T_{i,p}$ may only be approximately equal to the identity matrix (see (\ref{partitionequation})). Thus, $ \left( \mathcal{A}^p \ \right)^* \mathcal{A}^p $ is not necessarily
	an unbiased estimator. However,
	\begin{equation*}
	\mathbb{E} \left[ \frac{L}{Q} \left( \mathcal{A}^p  \right)^* \mathcal{A}^p \mathcal{S}^p \left(X\right) \right] = \frac{L}{Q} \left( T_{1,p} S_{1,p} X_1 , \ldots,    T_{r,p} S_{r,p}  X_r    \right) = \left(X_1, \ldots, X_r  \right)= X.
	\end{equation*}
	Thus, we get that $  \mathbb{E} \left[ \frac{L}{Q} \left( \mathcal{A}^p  \right)^* \mathcal{A}^p \mathcal{S}^p  \right] = \Id $. Note that $ \mathcal{S}^p \left(W_{p-1}  \right)$ is, in general, not an element of  the subspace $ \mathcal{T}$. However, due to definition of $ \mathcal{T}^p$ we observe that $ \mathcal{S}^p \left(W_{p-1}  \right) \in \mathcal{T}^p $. This is the reason why we require the operator $ \mathcal{A}^p$ to satisfy the $ \delta$-local isometry property not only on $ \mathcal{T} $,  but also on $ \mathcal{T}^p$.
\end{remark}

Let us check that $ Y \in \text{Range} \left( \mathcal{A}^* \right) $: Recall that the $\mathcal{A}^p \mathcal{S}^p  \left( W_{p-1}\right) $ is obtained by setting the vector $\mathcal{A} \mathcal{S}^p  \left( W_{p-1}\right)  $ zero in those components, which do not belong to $\Gamma_p$ (see Section \ref{section:partitioncoherence}). In particular, this implies that $ \left(\mathcal{A}^p\right)^*  \mathcal{A}^p \mathcal{S}^p  \left( W_{p-1}\right) = \mathcal{A}^*   \mathcal{A}^p \mathcal{S}^p  \left( W_{p-1}\right) $. Thus, setting
 \begin{equation}\label{hdefinition}
 z= \sum_{p=1}^{P} \mathcal{A}^p  \mathcal{S}^p \left( W_{p-1}  \right).
 \end{equation}
we get that $Y= \mathcal{A}^* z $. The vector $z$ will also be important when we prove an upper bound for the estimation error in the presence of noise. In the remaining part of the proof we will verify that $Y$ satisfies the conditions in Lemma \ref{dualcertificate} with the constants $ \alpha = \frac{1}{8 \gamma} $, $ \beta= \frac{1}{4} $, and $ \delta = \frac{1}{4} $ (cf. Remark \ref{dualcertificateremark}).

\subsubsection{Exponential Decay}
In this section we will verify condition (\ref{dualcertificatecondition1}) in Lemma \ref{dualcertificate}. In other words, we have to show that the quantity
\begin{equation*}
\Vert W_P \Vert_F =  \Vert \text{sgn} \left(X^0\right) -\mathcal{P_T} \left(Y\right) \Vert_F
\end{equation*}
is small enough. An important observation, which we will need in the proof, is that $W_0 = \text{sgn} \left(X^0 \right) $ and one has the recurrence relation
\begin{equation}\label{equ:Wpiteration}
W_p = W_{p-1} - \frac{L}{Q} \left( \mathcal{P}_{\mathcal{T}} \left( \mathcal{A}^p \right)^* \mathcal{A}^p  \mathcal{S}^p \right) \left( W_{p-1} \right)  \quad \text{for all } p \in \lbrack P \rbrack ,
\end{equation}
which is a direct of consequence of the definition of $W_p $ (see equation (\ref{equ:wpdefinition})). In Lemma \ref{thmexponentialdecay}, we will prove that $W_p$ decays exponentially fast. We will need the following rather technical inequalities.
\begin{lemma}
Let $ \nu = \frac{1}{32} $. For all $ i \in \lbrack r \rbrack $ and for all $ p \in \lbrack P \rbrack $ we have the inequalities
\begin{align}
	\Big\Vert \Id - T^{1/2}_{i,p} \Big\Vert_{2\rightarrow 2} &\le \frac{1}{32} \label{Trootinequality}\\
	\Big\Vert \left( \Id - \mathcal{S}^p \right) X \Big\Vert_F &\le \frac{1}{31} \Vert X \Vert_F \label{Sinequality1}\\
	\Big\Vert \mathcal{S}^p X  \Big\Vert_F  &\le  \frac{32}{31}  \Vert X \Vert_F  \label{Sinequality2} .
\end{align}
\end{lemma}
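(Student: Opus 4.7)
The plan is to derive all three bounds as elementary spectral consequences of the admissibility assumption $\|\Id - T_{i,p}\|_{2\to 2} \le \nu = \tfrac{1}{32}$, combined with the observations that $T_{i,p}$ is self-adjoint positive definite and that $\mathcal{S}^p$ and $\|\cdot\|_F$ act diagonally across the $r$ blocks.

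First I would argue that $T_{i,p}$ is self-adjoint (as an average of rank-one self-adjoint terms $b_{i,\ell}b_{i,\ell}^{*}$) and therefore, by the spectral theorem together with $\|\Id - T_{i,p}\|_{2\to 2} \le \tfrac{1}{32}$, all its eigenvalues lie in $[\tfrac{31}{32},\tfrac{33}{32}]$. This immediately makes $T_{i,p}^{1/2}$ and $S_{i,p}=T_{i,p}^{-1}$ well-defined self-adjoint positive operators with spectra in $[\sqrt{31/32},\sqrt{33/32}]$ and $[\tfrac{32}{33},\tfrac{32}{31}]$, respectively. In particular $\|S_{i,p}\|_{2\to 2} \le \tfrac{32}{31}$.

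For \eqref{Trootinequality}, I would use the identity $1-\sqrt{\lambda} = \frac{1-\lambda}{1+\sqrt{\lambda}}$ valid for $\lambda \ge 0$. Applied to eigenvalues $\lambda$ of $T_{i,p}$, this yields via the functional calculus
\[
\bigl\|\Id - T_{i,p}^{1/2}\bigr\|_{2\to 2} \;\le\; \frac{\|\Id - T_{i,p}\|_{2\to 2}}{1 + \sqrt{31/32}} \;\le\; \|\Id - T_{i,p}\|_{2\to 2} \;\le\; \tfrac{1}{32},
\]
which gives \eqref{Trootinequality}. For \eqref{Sinequality1} I would write $S_{i,p}-\Id = T_{i,p}^{-1}(\Id - T_{i,p})$, so $\|S_{i,p}-\Id\|_{2\to 2} \le \|S_{i,p}\|_{2\to 2}\cdot \|\Id - T_{i,p}\|_{2\to 2} \le \tfrac{32}{31}\cdot \tfrac{1}{32} = \tfrac{1}{31}$.

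Finally, for both \eqref{Sinequality1} and \eqref{Sinequality2} I would exploit the block-diagonal structure of $\mathcal{S}^p$ and the fact that $\|X\|_F^2 = \sum_{i=1}^r \|X_i\|_F^2$. Applying the operator norm bounds $\|(S_{i,p}-\Id)X_i\|_F \le \tfrac{1}{31}\|X_i\|_F$ and $\|S_{i,p}X_i\|_F \le \tfrac{32}{31}\|X_i\|_F$ componentwise and summing the squares yields $\|(\Id-\mathcal{S}^p)X\|_F \le \tfrac{1}{31}\|X\|_F$ and $\|\mathcal{S}^p X\|_F \le \tfrac{32}{31}\|X\|_F$. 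No step is genuinely hard here; the only subtlety worth stating carefully is why functional calculus is legitimate, which is covered by the self-adjointness of $T_{i,p}$ established at the outset.
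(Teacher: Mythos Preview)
Your proof is correct and follows essentially the same spectral route as the paper: both arguments use that $T_{i,p}$ is self-adjoint with spectrum in $[1-\nu,1+\nu]$ to control $\Id - T_{i,p}^{1/2}$ and $\Id - S_{i,p}$, and then pass to the block-diagonal Frobenius norm. The only cosmetic difference is that the paper bounds $\|\Id - S_{i,p}\|_{2\to 2}$ directly via $\max\{1-\sigma_{\min}(S_{i,p}),\,\sigma_{\max}(S_{i,p})-1\}$, whereas you use the equivalent factorisation $S_{i,p}-\Id = S_{i,p}(\Id - T_{i,p})$; both give the same constant $1/31$.
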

\begin{proof}
Inequality (\ref{Trootinequality}) follows directly from (\ref{partitionequation}) and the observation that the square-root shifts the eigenvalues of $T_{i,p}$ closer to one. The inequalities (\ref{Sinequality1}) and (\ref{Sinequality2}) follow from the observation that  for all $i \in \lbrack r \rbrack$, $p \in \lbrack P \rbrack $
\begin{align*}
  \Vert \Id - S_{i,p} \Vert_{2 \rightarrow 2} &= \max \left\{ 1-  \sigma_{\min} \left(S_{i,p}\right);  \sigma_{\max}\left(S_{i,p}\right) -1  \right\}\\
  &= \max \left\{ 1-  \sigma^{-1}_{\max} \left(T^{-1}_{i,p}\right);  \sigma^{-1}_{\min} \left(T^{-1}_{i,p}\right) -1 \right\}  \le \frac{1}{31}.
\end{align*}
\end{proof}
This allows us to prove the main lemma in this section.
\begin{lemma}\label{thmexponentialdecay}
Suppose that $\mathcal{A}^p$ satisfies the $\delta$-local isometry property on $ \mathcal{T}^p $ with $\delta = \frac{1}{32} $ for all $p \in \lbrack P \rbrack$. Then,  for all $p \in \lbrack P \rbrack $,
\begin{equation}\label{decay1}
	\Vert W_p \Vert_F \le 4^{-p} \sqrt{r}
\end{equation}
and, in particular, if $ P \ge  \frac{1}{2} \log \left( 8 \gamma \sqrt{r} \right) $,
\begin{equation}\label{decay2}
	\Vert \text{sgn} \left(X^0\right) - Y \Vert_F \le  \frac{1}{8 \gamma}.
\end{equation}
\end{lemma}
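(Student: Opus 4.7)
The goal is to establish the one-step contraction $\|W_p\|_F \le \tfrac{1}{4}\|W_{p-1}\|_F$, after which both assertions follow easily: since $W_0 = \sgn(X^0)$ has $\|W_0\|_F = \sqrt{\sum_{i:\sigma_i>0}\|h_im_i^*\|_F^2}\le\sqrt{r}$, induction yields (\ref{decay1}), and then $P\ge \tfrac12\log(8\gamma\sqrt r)$ (base-$2$ logarithm) is equivalent to $4^P\ge 8\gamma\sqrt r$, giving (\ref{decay2}). As a prerequisite, an easy induction on $p$ shows that $W_p\in\mathcal{T}$ for all $p$, since $\mathcal{P}_{\mathcal T}$ preserves $\mathcal{T}$ and $W_0\in\mathcal{T}$.

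To prove the contraction I would dualize: $\|W_p\|_F = \sup_{Z\in\mathcal{T},\, \|Z\|_F\le 1}|\langle Z,W_p\rangle_F|$. For such a $Z$, since $\mathcal{P}_{\mathcal{T}}$ is self-adjoint and $Z\in\mathcal{T}$, the recurrence (\ref{equ:Wpiteration}) simplifies to
\[
\langle Z,W_p\rangle_F \;=\; \langle Z,W_{p-1}\rangle_F \;-\; \tfrac{L}{Q}\,\langle \mathcal{A}^pZ,\mathcal{A}^p\mathcal{S}^pW_{p-1}\rangle_{\ell_2}.
\]
The two terms on the right should nearly cancel in expectation. Indeed, the algebraic identity $\sum_i \langle T_{i,p}^{1/2} S_{i,p}W_{i,p-1}, T_{i,p}^{1/2}Z_i\rangle_F = \langle W_{p-1},Z\rangle_F$, which uses only $T_{i,p}S_{i,p}=\Id$, self-adjointness of $S_{i,p}$ and $T_{i,p}$, and the cyclicity of the trace, rewrites the right-hand side (up to a complex conjugate) as the defect of the Hermitian sesquilinear form $Q(X,Y):=\tfrac{L}{Q}\langle\mathcal{A}^pX,\mathcal{A}^pY\rangle_{\ell_2} - \sum_i\langle T_{i,p}^{1/2}X_i,T_{i,p}^{1/2}Y_i\rangle_F$ on $\mathcal{T}^p$ evaluated at the pair $(\mathcal{S}^pW_{p-1},Z)$.

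The main technical step, and the one I expect to be the only real obstacle, is to upgrade the quadratic $\delta$-local isometry bound (\ref{isometryproperty1}), which says $|Q(Y,Y)|\le \delta\,\|T^{1/2}Y\|_F^2$ with $\|T^{1/2}Y\|_F^2:=\sum_i\|T_{i,p}^{1/2}Y_i\|_F^2$, to the sesquilinear estimate
\[
|Q(X,Y)|\;\le\;\delta\,\|T^{1/2}X\|_F\,\|T^{1/2}Y\|_F \qquad \text{for all } X,Y\in\mathcal{T}^p.
\]
This is standard: view $Q$ through its associated self-adjoint operator on $\mathcal{T}^p$ endowed with the equivalent weighted inner product $(X,Y)\mapsto\sum_i\langle T_{i,p}^{1/2}X_i,T_{i,p}^{1/2}Y_i\rangle_F$; the quadratic bound then amounts to an operator-norm bound of $\delta$, and Cauchy--Schwarz (equivalently, complex polarization) produces the bilinear bound. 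Applied with $X=\mathcal{S}^pW_{p-1}\in\mathcal{T}^p$ and $Y=Z\in\mathcal{T}\subset\mathcal{T}^p$, it gives $|\langle Z,W_p\rangle_F|\le \delta\|T^{1/2}\mathcal{S}^pW_{p-1}\|_F\|T^{1/2}Z\|_F$.

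To finish, (\ref{Trootinequality}) places the eigenvalues of $T_{i,p}$ in $[1-\nu,1+\nu]$ with $\nu=1/32$, so $\|T^{1/2}Z\|_F\le\sqrt{1+\nu}\|Z\|_F$; and since $(T^{1/2}\mathcal{S}^pW_{p-1})_i=T_{i,p}^{-1/2}W_{i,p-1}$ also $\|T^{1/2}\mathcal{S}^pW_{p-1}\|_F\le(1-\nu)^{-1/2}\|W_{p-1}\|_F$. Taking the supremum over unit-norm $Z\in\mathcal{T}$ and using $\delta=1/32$, the contraction factor $\delta\sqrt{(1+\nu)/(1-\nu)}$ sits comfortably below $1/4$, completing the induction and hence the proof.
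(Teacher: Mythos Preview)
Your argument is correct and slightly slicker than the paper's. Both proofs hinge on the same idea---upgrading the quadratic local-isometry bound to a bilinear/operator-norm bound via polarization---but the implementations differ. The paper first passes from the weighted bound $|Q(X,X)|\le\delta\|T^{1/2}X\|_F^2$ to an unweighted bound $\big|\tfrac{L}{Q}\|\mathcal A^p X\|_{\ell_2}^2-\|X\|_F^2\big|\le\tfrac18\|X\|_F^2$ on $\mathcal T^p$, interprets this as $\|\mathcal P_{\mathcal T^p}-\tfrac{L}{Q}\mathcal P_{\mathcal T^p}(\mathcal A^p)^*\mathcal A^p\mathcal P_{\mathcal T^p}\|_{F\to F}\le\tfrac18$, and then needs a triangle-inequality split $W_{p-1}=\mathcal S^pW_{p-1}+(\Id-\mathcal S^p)W_{p-1}$, invoking (\ref{Sinequality1}) and (\ref{Sinequality2}) to control the two pieces separately. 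You instead stay in the $T$-weighted inner product throughout and exploit the algebraic identity $T_{i,p}S_{i,p}=\Id$ to collapse $\langle Z,W_p\rangle_F$ into a \emph{single} sesquilinear defect $Q(\mathcal S^pW_{p-1},Z)$, so no split is needed. This buys a cleaner argument and a noticeably better contraction constant (roughly $\delta\sqrt{(1+\nu)/(1-\nu)}\approx 0.032$ versus the paper's $\tfrac18\cdot\tfrac{32}{31}+\tfrac{1}{31}\approx 0.16$), though of course either is well under $\tfrac14$.
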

\begin{proof}
First notice that by (\ref{Trootinequality}) and the triangle inequality
\begin{align*}
\left(1 - \nu \right) \Vert X_i \Vert_F \le \big\Vert T^{1/2}_{i,p} X_i \big\Vert_F \le \left(1 + \nu \right) \Vert X_i \Vert_F
\end{align*}
for all $X_i \in \mathbb{C}^{K_i \times N_i} $. Thus, by the local isometry property (\ref{isometryproperty1})
\begin{align*}
\left( 1- \nu \right)^2  \left(1 - \delta \right) \Vert X \Vert^2_F \le \frac{L}{Q} \Big\Vert \mathcal{A}^p \left( X \right) \Big\Vert^2_{\ell_2} \le  \left(1 + \delta \right) \left( 1+ \nu \right)^2   \Vert X \Vert^2_F
\end{align*}
for all $X\in \mathcal{T}^p$. Together with $ \delta = \nu = \frac{1}{32} $ this implies
\begin{equation*}
\Big\vert \frac{L}{Q}  \Vert \mathcal{A}^p \left( X \right)  \Vert^2_{\ell_2} - \Vert X \Vert^2_F     \Big\vert \le \frac{1}{8} \Vert X \Vert^2_F  
\end{equation*}
for all $ X \in \mathcal{T}^p $, which in turn is equivalent to
\begin{equation}\label{schemeoperatorbound}
\Big\Vert \mathcal{P}_{\mathcal{T}^p} - \frac{L}{Q} \mathcal{P}_{\mathcal{T}^p} \left( \mathcal{A}^p \right)^* \mathcal{A}^p  \mathcal{P}_{\mathcal{T}^p} \Big\Vert_{F \rightarrow F} \le \frac{1}{8}, 
\end{equation}
where $ \mathcal{P}_{\mathcal{T}^p} $ denotes the orthogonal projection onto $ \mathcal{T}^p $. Now note that $ \Vert W_{p-1} - \mathcal{P}_{\mathcal{T}} \left(X\right) \Vert_F \le \Vert W_{p-1} - \mathcal{P}_{\mathcal{T}^p}  \left( X \right) \Vert_F $ for all $ X \in \mathcal{M}$ due to $ W_{p-1} \in \mathcal{T} $ and $\mathcal{T} \subset \mathcal{T}^p $. This fact together with (\ref{equ:Wpiteration}) implies that
\begin{align*}
\Vert W_p \Vert_F &\le  \Big\Vert W_{p-1} -  \left(  \frac{L}{Q} \mathcal{P}_{\mathcal{T}^p}  \left( \mathcal{A}^p \right)^* \mathcal{A}^p  \mathcal{S}^p  \right) \left( W_{p-1} \right) \Big\Vert_F\\
&= \Big\Vert W_{p-1} -  \left(  \frac{L}{Q} \mathcal{P}_{\mathcal{T}^p}  \left( \mathcal{A}^p \right)^* \mathcal{A}^p \mathcal{P}_{\mathcal{T}^p} \mathcal{S}^p  \right) \left( W_{p-1} \right) \Big\Vert_F,
\end{align*}
where in the second line we use that $ \mathcal{S}^p W_{p-1} \in \mathcal{T}^p $ by the definition of $ \mathcal{T}^p $ (see (\ref{definition:Tpsubspace})) and because of $ W_{p-1} \in \mathcal{T} $. Using this computation and (\ref{Sinequality1}), (\ref{Sinequality2}), (\ref{schemeoperatorbound}) we obtain
\begin{align*}
\Vert W_p \Vert_F &\le \Big\Vert  \left( \Id - \frac{L}{Q} \mathcal{P}_{\mathcal{T}^p}  \left( \mathcal{A}^p \right)^* \mathcal{A}^p \mathcal{P}_{\mathcal{T}^p} \right) \left(  \mathcal{S}^p  W_{p-1} \right)  \Big\Vert_F + \Big\Vert \left( \Id - \mathcal{S}^p \right) W_{p-1} \Big\Vert_F \\
& \le  \frac{1}{8}  \Vert \mathcal{S}^p W_{p-1} \Vert_F +  \frac{1}{16}  \Vert W_{p-1} \Vert_F\le \frac{1}{4} \Vert W_{p-1} \Vert_F.
\end{align*}
Thus, the previous estimate yields
\begin{align*}
\Vert W_p \Vert_F \le \left( \frac{1}{4} \right)^p \Vert W_0 \Vert_F = \left( \frac{1}{4} \right)^p \sqrt{r}.
\end{align*}
This shows ($\ref{decay1} $) and, in particular, we obtain $\Vert W_P \Vert_F \le  4^{-P} \sqrt{r} $. The assumption $P \ge \frac{1}{2} \log \left( 8 \gamma \sqrt{r} \right) $ and the definition of $W_P $ imply (\ref{decay2}), which finishes the proof.

\end{proof}

\subsubsection{Bounding the Operator Norm on $ \mathcal{T}^{\perp} $}
To apply Lemma \ref{dualcertificate} we need in addition to controlling the share of $Y$ in $ \mathcal{T} $ also a bound on $ \mathcal{T}^{\perp}_i$ for all $ i \in \lbrack r \rbrack $. For that, recall from \cite{lingstrohmer} that
\begin{align*}
\Big\Vert \mathcal{P}_{\mathcal{T}^{\perp}_i}\left(Y^P_i\right)   \Big\Vert_{2 \rightarrow 2} 
&\le  \sum_{p=1}^{P} \Big\Vert \mathcal{P}_{\mathcal{T}^{\perp}_i} \left( \frac{L}{Q}  \left(  \left( \mathcal{A}^{p}  \right)^* \mathcal{A}^{p} \mathcal{S}^p \right) \left( W_{p-1} \right)   -W_{i,p-1} \right) \Big\Vert_{2 \rightarrow 2} \\
& \le  \sum_{p=1}^{P} \Big\Vert \frac{L}{Q}   \left( \left(  \mathcal{A}_i^{p}  \right)^* \mathcal{A}^{p} \mathcal{S}^p \right) \left( W_{p-1} \right)   -  W_{i,p-1}    \Big\Vert_{2 \rightarrow 2} = \sum_{p=1}^{P} \Vert W_{i,p} \Vert_{2 \rightarrow 2}  ,
\end{align*} 
where one uses the fact that $ W_{i,p-1} \in \mathcal{T}_i $. Thus to establish the bound $\Big\Vert \mathcal{P}_{\mathcal{T}^{\perp}_i}\left(Y^P_i\right)   \Big\Vert_{2 \rightarrow 2}  < \frac{1}{4}$ it remains to show that
   	\begin{equation*}
   \Big\Vert \frac{L}{Q} \left(   \left(  \mathcal{A}_i^{p}  \right)^* \mathcal{A}^{p} \mathcal{S}^p \right) \left(  W_{p-1} \right)   -  W_{i,p-1}   \Big\Vert_{2 \rightarrow 2} \le  \frac{1}{4^{p+1}}.
   \end{equation*}
To proceed, set for $ p \in \lbrace {0;1; \ldots; P-1} \rbrace $
\begin{equation}\label{mupdefinition}
\mu_p = \sqrt{L} \underset{\ell \in \Gamma_{p+1}, k \in \lbrack r \rbrack}{\max}  \Big\Vert  W_{k,p}^* S_{k,p+1} b_{k,\ell} \Big\Vert_{2 \rightarrow 2}.
\end{equation}
This allows us to state the following lemma.
\begin{lemma}\label{operatorbound}
	Fix $ i \in \lbrack r \rbrack $ and let $ \omega \ge 1 $. Assume that
	\begin{equation}\label{exponentialdecay2}
	\mu_p \le 4^{-p} \mu_h \text{ and } \Vert W_p \Vert_F \le 4^{-p} \sqrt{r}.
	\end{equation}
	If 
	\begin{align}\label{ineq:Qlowerboundoperator}
	Q \gtrsim_{\omega}  r \left(  K_{\mu} + N \mu^2_h  \right)  \left( \log L \right)^2 ,   
	\end{align}
	then with probability $1- \mathcal{O} \left(L^{-\omega} \right) $ the inequality
	\begin{equation}\label{operatornormbound}
	\Big\Vert  \frac{L}{Q}    \left(  \mathcal{A}_i^{p}  \right)^* \mathcal{A}^{p} \mathcal{S}^p W_{p-1}   - W_{i,p-1}  \Big\Vert_{2 \rightarrow 2} \le \frac{1}{4^{p+1}}
	\end{equation}
	is true for all $p \in \lbrack P \rbrack$ and for all $i \in \lbrack r \rbrack$ .
\end{lemma}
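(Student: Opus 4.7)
The plan is to reduce the claim to the matrix Bernstein inequality (Theorem \ref{matrixbernstein}) after decomposing the object of interest as a sum of independent mean-zero random matrices indexed by $\ell \in \Gamma_p$. Setting $\tilde W_{j,p-1} := S_{j,p} W_{j,p-1}$ for the components of $\mathcal{S}^p W_{p-1}$, one writes
\[
\frac{L}{Q} (\mathcal{A}_i^p)^* \mathcal{A}^p \mathcal{S}^p W_{p-1}
= \frac{L}{Q}\sum_{\ell \in \Gamma_p} s_\ell\, b_{i,\ell} c_{i,\ell}^*,
\qquad
s_\ell := \sum_{j=1}^r b_{j,\ell}^* \tilde W_{j,p-1} c_{j,\ell}.
\]
A direct calculation, using $\mathbb{E}[c_{i,\ell} c_{i,\ell}^*] = \Id_{N_i}$ and $T_{i,p} S_{i,p} = \Id$ (together with independence of the $C_j$), shows the expectation of the above sum equals $W_{i,p-1}$. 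Thus the left-hand side of \eqref{operatornormbound} decomposes as $\sum_{\ell \in \Gamma_p} Z_\ell$ with $\mathbb{E}[Z_\ell] = 0$, putting us in a position to apply Theorem \ref{matrixbernstein} with $\alpha=1$.

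To bound the Orlicz-1 parameter $R = \max_\ell \|\,\|Z_\ell\|_{2\to 2}\,\|_{\psi_1}$, I would use that $s_\ell$ is sub-Gaussian with
\[
\|s_\ell\|_{\psi_2}^2 \;\lesssim\; \sum_{j=1}^r \|\tilde W_{j,p-1}^* b_{j,\ell}\|_{\ell_2}^2 \;\le\; \frac{r\mu_{p-1}^2}{L}
\]
(invoking the hypothesis $\mu_{p-1} \le 4^{-(p-1)}\mu_h$ and the definition \eqref{mupdefinition}, valid for $\ell \in \Gamma_p$), and that $\|c_{i,\ell}\|_{\ell_2}$ is sub-Gaussian with $\|\cdot\|_{\psi_2} \lesssim \sqrt{N}$. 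Combining these with Hölder for Orlicz norms \eqref{ineq:Hoelderinequality} and $\|b_{i,\ell}\|_{\ell_2}^2 \le K_\mu/L$ gives $R \lesssim \mu_{p-1} \sqrt{r N K_\mu}/Q$; the contribution from subtracting $\mathbb{E}[M_\ell]$ is easily seen to be dominated using $\|\tilde W_{i,p-1}\|_{2\to 2} \le \tfrac{32}{31} \cdot 4^{-(p-1)}\sqrt{r}$ together with $Q \gtrsim r K_\mu$.

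The variance $\sigma^2$ in \eqref{definitionsigma} is computed by expanding $\mathbb{E}[Z_\ell Z_\ell^*]$ and $\mathbb{E}[Z_\ell^* Z_\ell]$. The main technical obstacle here is that $s_\ell$ is \emph{not} independent of $c_{i,\ell}$, since its $j=i$ summand contains that very vector. I would handle this by splitting $s_\ell = s_\ell^{(\neq i)} + b_{i,\ell}^* \tilde W_{i,p-1} c_{i,\ell}$, where $s_\ell^{(\neq i)}$ collects the $j \neq i$ contributions and is independent of $c_{i,\ell}$. Cross-terms in the expansion of $|s_\ell|^2\|c_{i,\ell}\|_{\ell_2}^2$ then vanish by $\mathbb{E}[s_\ell^{(\neq i)}] = 0$, while the self-term is evaluated via the closed-form identity $\mathbb{E}[c c^* \|c\|_{\ell_2}^2] = (N_i+1)\Id_{N_i}$ for $c \sim \mathcal{CN}(0,\Id_{N_i})$. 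A careful accounting of these terms, together with $\|b_{i,\ell}\|_{\ell_2}^2 \le K_\mu/L$, leads to $\sigma^2 \lesssim r N \mu_{p-1}^2 K_\mu/Q^2 + r N \mu_{p-1}^2/Q$ (up to lower-order corrections), with the first summand coming from the factor $b_{i,\ell}b_{i,\ell}^*$ and the second from the factor involving the $N_i$-sized identity.

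Plugging $R$ and $\sigma$ into Theorem \ref{matrixbernstein} with $t = (\omega+2)\log L$ and invoking the lower bound $Q \gtrsim_\omega r(K_\mu + N\mu_h^2)(\log L)^2$ together with $\mu_{p-1} \le 4^{-(p-1)}\mu_h$ yields $\|\sum_\ell Z_\ell\|_{2\to 2} \lesssim 4^{-(p-1)} \cdot (\text{small constant}) \le 4^{-(p+1)}$ with probability at least $1 - O(L^{-\omega-2})$; the logarithm $\log(1 + nR^2/\sigma^2)$ appearing in the Bernstein tail is absorbed into the extra $\log L$ factor in the hypothesis on $Q$. A union bound over $p \in [P]$ and $i \in [r]$, using $rP \le L^2$, converts this into the uniform conclusion \eqref{operatornormbound} with probability $1 - O(L^{-\omega})$. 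The two genuine obstacles I anticipate are (i) the dependence between $s_\ell$ and $c_{i,\ell}$ described above, and (ii) the bookkeeping needed to see that \emph{both} terms $\sigma\sqrt{\log L}$ and $R \log^{1/\alpha}(\cdot) \cdot \log L$ shrink like $4^{-(p+1)}$ under the stated assumption on $Q$ — this is what dictates the precise form of \eqref{ineq:Qlowerboundoperator}.
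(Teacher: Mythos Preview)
Your overall architecture matches the paper's: decompose
\[
\frac{L}{Q}(\mathcal{A}_i^p)^*\mathcal{A}^p\mathcal{S}^pW_{p-1}-W_{i,p-1}=\sum_{\ell\in\Gamma_p}Z_\ell
\]
with $\mathbb{E}[Z_\ell]=0$, estimate $R$ and $\sigma^2$, apply Theorem~\ref{matrixbernstein} with $t=(\omega+2)\log L$, and union-bound over $i,p$. Your identification of the dependence issue between $s_\ell$ and $c_{i,\ell}$ and your proposed cure (isolate the $j=i$ summand so that one piece becomes $b_{i,\ell}w_{i,\ell}^*(c_{i,\ell}c_{i,\ell}^*-\Id)$ while the remaining $j\ne i$ pieces are independent of $c_{i,\ell}$) are exactly what the paper does; the relevant moment identities are Lemmas~\ref{usefullemma2} and~\ref{usefullemma3}.

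There is, however, a genuine gap in your variance computation. You never invoke the second hypothesis $\|W_{p-1}\|_F\le 4^{-(p-1)}\sqrt{r}$ when bounding $\sigma^2$; you rely solely on $\mu_{p-1}$. But with $w_{k,\ell}=W_{k,p-1}^*S_{k,p}b_{k,\ell}$ one has
\[
\mathbb{E}[Z_\ell^*Z_\ell]=\frac{L^2}{Q^2}\,\|b_{i,\ell}\|_{\ell_2}^2\Bigl(\sum_{k=1}^r\|w_{k,\ell}\|_{\ell_2}^2\Bigr)\Id_{N_i},
\]
and this term carries no factor of $N$; your stated bound $\sigma^2\lesssim rN\mu_{p-1}^2/Q$ for it is incorrect. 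If you control $\sum_\ell\|b_{i,\ell}\|^2\sum_k\|w_{k,\ell}\|^2$ using only $\|w_{k,\ell}\|^2\le\mu_{p-1}^2/L$ you obtain at best $\|\sum_\ell\mathbb{E}[Z_\ell^*Z_\ell]\|\lesssim rK_\mu\mu_{p-1}^2/Q$, hence $\sigma^2\lesssim 16^{-(p-1)}rK_\mu\mu_h^2/Q$. Feeding this into Bernstein forces $Q\gtrsim rK_\mu\mu_h^2\log L$, which is \emph{not} implied by \eqref{ineq:Qlowerboundoperator} when $\mu_h^2\gg\log L$. The paper avoids this by summing the other way: pull out $\max_\ell\|b_{i,\ell}\|^2\le K_{i,\mu}/L$ and use
\[
\sum_{\ell\in\Gamma_p}\|w_{k,\ell}\|_{\ell_2}^2=\frac{Q}{L}\,\|S_{k,p}^{1/2}W_{k,p-1}\|_F^2\lesssim\frac{Q}{L}\,\|W_{k,p-1}\|_F^2,
\]
so that the Frobenius hypothesis gives $\|\sum_\ell\mathbb{E}[Z_\ell^*Z_\ell]\|\lesssim 16^{-(p-1)}rK_{i,\mu}/Q$, decoupling $K_\mu$ from $\mu_h^2$. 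This is precisely why \emph{both} assumptions in \eqref{exponentialdecay2} are needed, and it is what produces the form $K_\mu+N\mu_h^2$ (rather than $K_\mu\mu_h^2+N\mu_h^2$) in \eqref{ineq:Qlowerboundoperator}.

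A smaller remark: to absorb $\log(1+|\Gamma_p|R^2/\sigma^2)$ into $\log L$ the paper does not merely cite the extra log in the hypothesis; it uses a \emph{lower} bound $\sigma^2\ge\|\sum_\ell\mathbb{E}[Z_\ell^*Z_\ell]\|$ together with the upper bound on $R$ expressed through the same quantities $\|b_{i,\ell}\|\|w_{k,\ell}\|$, yielding $|\Gamma_p|R^2/\sigma^2\lesssim QNr\le L^3$ directly. Your sketch should make this ratio argument explicit.
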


\begin{remark}\label{remarkdualcertificate2}
	The validity of assumption (\ref{exponentialdecay2}) is assured by Lemma \ref{thmexponentialdecay} and Lemma \ref{mudecay} below.
\end{remark}

\begin{proof}
The proof follows the same strategy as \cite[Lemma 5.12]{lingstrohmer}. Fix $ p \in\lbrack P \rbrack$ and $ i \in \lbrack r \rbrack   $. First, we will decompose $W_{i,p}$ as a sum of independent random matrices such that the matrix Bernstein inequality can be applied. For that purpose, observe that for all $ y \in \mathbb{C}^L $ and for all $ \ell \in \Gamma_p $ by definition of $ \mathcal{S}^p $ (Definition \ref{def:operatorSp}) and $ \mathcal{A}^p $
\begin{equation*}
\left( \mathcal{A}^p \mathcal{S}^p W_{p-1} \right) \left(\ell\right) =  \sum_{k=1}^{r} b^{*}_{k,\ell} S_{k,p} W_{k,p-1}  c_{k,\ell}.
\end{equation*}		
(For $ \ell \in \lbrack L \rbrack \backslash \Gamma_p $ the left-hand side is equal to zero as $\mathcal{A}^p \left( X \right) = P_{\Gamma_p} \left( \mathcal{A}
\left( X \right)  \right)$.) Using (\ref{equ:Aitransposed}) one obtains
\begin{equation*}
\left( \left( \mathcal{A}^p_i \right)^* \mathcal{A}^p \mathcal{S}^p \right) W_{p-1} = \sum_{\ell \in \Gamma_p} \sum_{k=1}^{r} b_{i,\ell} b^{*}_{k,l} S_{k,p} W_{k,p-1}  c_{k,\ell} c^*_{i,\ell}.
\end{equation*}
With $ S_{i,p} = T_{i,p}^{-1}$ and the definition of $T_{i,p}$ (see equation (\ref{partitionequation})) this implies
\begin{equation*}
W_{i,p-1} = T_{i,p} S_{i,p} W_{i,p-1} = \frac{L}{Q} \sum_{\ell \in \Gamma_p} b_{i,\ell}b^*_{i,\ell} S_{i,p} W_{i,p-1}.
\end{equation*}
In order to simplify notation we introduce the vectors $w_{k,\ell} $ defined by
\begin{equation}\label{definitionwkl}
w_{k,\ell} = W^*_{k,p-1} S_{k,p} b_{k,\ell}.
\end{equation}
Using this definition we may write	(as $S_{k,p}$ is self-adjoint)
\begin{align}
W_{i,p} =&\frac{L}{Q} \left( \left( \mathcal{A}^p_i \right)^* \mathcal{A}^p \mathcal{S}^p \right) W_{p-1}-W_{i,p-1}\label{equation:wipiteration1}\\
 = &\frac{L}{Q}   \sum_{\ell \in \Gamma_p} \sum_{k=1}^{r} b_{i,\ell} w^*_{k,\ell}  c_{k,\ell} c^*_{i,\ell} -   \frac{L}{Q} \sum_{\ell \in \Gamma_p} b_{i,\ell}w^*_{i,\ell}\\
= &\frac{L}{Q} \sum_{\ell \in \Gamma_p}  b_{i,\ell} w^*_{i,\ell} \left( c_{i,\ell} c^*_{i,\ell} - \Id  \right)     +\frac{L}{Q} \sum_{\ell \in \Gamma_p} \sum_{k \ne i} b_{i,\ell} w^*_{k,\ell}  c_{k,l} c^*_{i,\ell}= \sum_{\ell \in \Gamma_p} Z_{\ell}, \label{equation:wipiteration2}
\end{align}	
where we have set
\begin{align*}
Z_{\ell}= \frac{L}{Q} \left( \sum^{L}_{k = 1}  b_{i,\ell} w^*_{k,\ell} \left( c_{k,\ell} c^*_{i,\ell} - \mathbb{E} \left[ c_{k,\ell} c^*_{i,\ell}  \right]  \right)     \right).
\end{align*}
Note that until the last step of the proof $i$ is assumed to be fixed which is why we refrain from indicating the $i$-dependence in every step for reasons of notational simplicity. Observe that each summand of $Z_{\ell}$ and hence the the cross terms in $Z_{\ell} Z^*_{\ell} $ and $ Z^*_{\ell} Z_{\ell} $ have expectation zero. Thus using basic properties of circular symmetric normal random variables, Lemma \ref{usefullemma3} and Lemma \ref{usefullemma2} we compute
\begin{align}
	\mathbb{E}  \left[   Z_{\ell} {Z_{\ell}}^* \right] &=\frac{L^2}{Q^2}   \sum_{k=1}^{r} N_k  \big\Vert w_{k,{\ell}}\big\Vert_{\ell_2}^2 b_{i,{\ell}} b^*_{i,\ell}.\label{equ:variance1}\\
	\mathbb{E} \left[ Z^*_{\ell} Z_{\ell}  \right]&= \frac{L^2}{Q^2} \Vert b_{i,\ell} \Vert^2_{\ell_2}   \sum_{k=1}^{r}  \big\Vert  w_{k,\ell} \big\Vert_{\ell_2}^2  \Id.  \label{equ:variance2}
\end{align}	
We have to find an upper bound for the spectral norms of these quantities. First, observe that
\begin{align*}
	\Big\Vert  \sum_{\ell \in \Gamma_p} \mathbb{E} \left[ Z_{\ell} Z^*_{\ell} \right]  \Big\Vert_{2\rightarrow 2} & \le \frac{L^2 N}{Q^2} \left( \underset{k \in \lbrack r \rbrack, \ell \in \Gamma_p}{\max} \Vert w_{k,\ell} \Vert_2^2  \right) \Big\Vert \sum_{k=1}^{r}  \sum_{\ell \in \Gamma_p } b_{i,\ell} b^*_{i,\ell} \Big\Vert_{2 \rightarrow 2}    \\
	& \le  \frac{rN}{Q} \mu^2_{p-1} \Vert T_{i,p} \Vert_{2 \rightarrow 2}  \overset{\eqref{exponentialdecay2}}{\lesssim} \frac{16^{-p+1 }rN\mu^2_h}{Q}.
\end{align*}
By a similar computation we obtain
\begin{align*}
	\Big\Vert \sum_{\ell \in \Gamma_p}  \mathbb{E} \left[ Z^*_{\ell} Z_{\ell}  \right] \Big\Vert_{2 \rightarrow 2} & \le  \frac{L^2}{Q^2} \left( \underset{\ell \in \Gamma_p}{\max}  \Vert b_{i,\ell} \Vert^2_{\ell_2} \right) \sum_{k=1}^{r}  \sum_{\ell \in \Gamma_p} \Vert w_{k, \ell} \Vert^2_{\ell^2}  \\
	& \lesssim \frac{L K_{i,\mu}}{Q^2} \sum_{k=1}^{r}  \sum_{\ell \in \Gamma_p} \trace \left( W^*_{k,p-1} S_{k,p} b_{k,\ell} b^*_{k,\ell} S_{k,p} W_{k,p-1}  \right) \\
	& =   \frac{K_{i,\mu}}{Q}  \sum^r_{k=1}  \Vert S^{1/2}_{k,p} W_{k,p-1}  \Vert^2_{F} \lesssim   \frac{ K_{i,\mu} }{Q} \Vert W_{p-1} \Vert^2_{F}\le 16^{-p+1}  \frac{ r  K_{i,\mu}     }{Q}.
\end{align*}
Thus, we have obtained
\begin{equation}\label{ineq:sigmaupperbound}
\sigma^2 := \max \left\{  \Big\Vert \sum_{\ell \in \Gamma_p} \mathbb{E} \left[  Z^*_{\ell} Z_{\ell} \right] \Big\Vert_{2 \rightarrow 2}, \Big\Vert \sum_{\ell \in \Gamma_p} \mathbb{E} \left[  Z_{\ell} Z^*_{\ell} \right] \Big\Vert_{2 \rightarrow 2} \right\} \lesssim 16^{-p} \frac{  r }{Q} \max \left\{  K_{i,\mu} , N \mu^2_h  \right\}.
\end{equation}	
Observe that a lower bound for $ \sigma^2 $ is given by
\begin{align}
\sigma^2 &\ge \Big\Vert  \sum_{\ell \in \Gamma_p} \mathbb{E} \left[  Z^*_{\ell} Z_{\ell} \right]  \Big\Vert_{2 \rightarrow 2} = \frac{L^2}{Q^2} \sum_{k=1}^{r} \sum_{\ell \in \Gamma_p} \Vert b_{i,\ell} \Vert^2_{\ell_2} \Vert w_{k,\ell} \Vert^2_{\ell_2}\label{ineq:sigmalowerbound2}.
\end{align}	
Next we have to estimate $R=  \underset{\ell \in \Gamma_p}{\max} \Big\Vert \Vert Z_{\ell} \Vert_{2 \rightarrow 2}  \Big\Vert_{\psi_1} $. By Lemma \ref{lemma:orliczestimates} and inequality (\ref{ineq:Hoelderinequality}) we have that
\begin{align}
\Big\Vert \Vert Z_{\ell} \Vert_{2 \rightarrow 2} \Big\Vert_{\psi_1} &\le \frac{L}{Q} \left(     \sum_{k \ne i}  \Vert b_{i,\ell} \Vert_{\ell_2}  \Big\Vert \vert   w^*_{k,\ell}  c_{k,\ell} \vert \Vert c_{i,\ell} \Vert_{\ell_2} \Big\Vert_{\psi_1}  +  \Vert b_{i,\ell} \Vert_{\ell_2} \Big\Vert \Vert \left( c_{i,\ell} c^*_{i,\ell} - \Id \right) w_{i,\ell}  \Vert_{\ell_2} \Big\Vert_{\psi_1}      \right) \notag \\
& \lesssim \frac{L  \sqrt{N_i} }{Q} \Vert b_{i,\ell} \Vert_{\ell_2} \sum^r_{k=1}  \Vert w_{k,\ell} \Vert_{\ell_2} \label{ineq:lowerbound100}\\
& \lesssim \frac{r \sqrt{K_{i,\mu} N_i} \mu_{p-1} }{Q} \lesssim 4^{-p} \frac{r \sqrt{K_{i,\mu} N_i} \mu_h}{Q}  \lesssim 4^{-p} \frac{r \left( K_{i,\mu} + N_i \mu^2_h \right)}{Q} \notag
\end{align}
and, consequently, $ R  \lesssim 4^{-p} \frac{r \left( K_{i,\mu} + N_i \mu^2_h  \right)}{Q} $. Moreover, combining (\ref{ineq:sigmalowerbound2}) and (\ref{ineq:lowerbound100}) we obtain
\begin{align}\label{ineq:RoverSigma}
\frac{  \vert \Gamma_p \vert  R^2}{\sigma^2} &\lesssim QN  \frac{\underset{\ell \in \Gamma_p}{\max} \left( \sum_{k=1}^{r} \Vert b_{i,\ell} \Vert_{\ell_2} \Vert w_{k,\ell} \Vert_{\ell_2} \right)^2 }{ \underset{\ell \in \Gamma_p}{\max} \left( \sum_{k=1}^{r} \Vert b_{i,\ell} \Vert^2_{\ell_2} \Vert w_{k,\ell} \Vert^2_{\ell_2} \right) } \le QNr  .
\end{align}	
As $Q \le L $ by definition (\ref{ineq:Qlowerboundoperator}) implies that $\log \left( 1+ \frac{\vert \Gamma_p \vert R^2 }{\sigma^2}   \right)\lesssim \log L$. Thus, setting $t=  \left( \omega +2 \right) \log L $ we obtain from Theorem \ref{matrixbernstein} applied with $\alpha=1$ and combined with (\ref{ineq:sigmaupperbound}) that with probability $1- \mathcal{O} \left( L^{-\omega-2} \right) $
	\begin{align*}
	\Big\Vert  \sum_{\ell \in \Gamma_p} Z_{\ell} \Big\Vert_{2 \rightarrow 2} &\lesssim_{\omega} 4^{-p} \max \left\{ \sqrt{ \frac{ r \left(  K_{i,\mu}  +N \mu^2_h \right)} { Q}  \log L }   , \frac{r \left( K_{i,\mu}  + N \mu^2_h \right) }{Q}  \left( \log L \right)^2  \right\}.
	\end{align*}
	Thus, by choosing the constant in (\ref{ineq:Qlowerboundoperator}) large enough it holds that $ \Big\Vert  \sum_{\ell \in \Gamma_p} Z_{\ell} \Big\Vert_{2 \rightarrow 2} \le 4^{-p-1}  $ with probability $1- \mathcal{O} \left( L^{-\omega-2} \right) $ for fixed $p \in \left[ P \right] $ and for fixed $i \in \left[r\right]$. By taking the union bound over all $i \in \left[ r  \right] $ and over all $ p \in \left[ P \right]  $ we obtain that with probability $1- rP  \mathcal{O}  \left( L^{-\omega-2} \right) = 1 - \mathcal{O} \left( L^{-\omega} \right) $ equation (\ref{operatornormbound}) is true for all $p \in  \left[ P \right] $ and for all $ i \in \left[ r \right] $. This finishes the proof.
	
\end{proof}

\subsubsection{Proof that $\mu_p \le \frac{1}{4} \mu_{p-1} $}
Lemma \ref{operatorbound} additionaly required that $ \mu_{p} \le \frac{1}{4} \mu_{p-1}$ for all $ p \in \lbrack P-1 \rbrack$. In this section we will verify that this property holds with high probability.

\begin{lemma}\label{mudecay}
Let $ \omega \ge 1$. If
\begin{equation}\label{Qbound}
Q \gtrsim_{\omega} r \left(  K_{\mu}  + N \mu^2_h  \right) \log^2 L \ ,
\end{equation}
then with probability at least $1- \mathcal{O} \left( L^{-\omega} \right)$ it holds that $ \mu_p \le \frac{1}{4} \mu_{p-1}$ for all $ p \in  \lbrack P-1 \rbrack  $.
\end{lemma}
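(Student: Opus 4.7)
\textbf{Proof plan for Lemma~\ref{mudecay}.} The plan is to proceed by induction on $p$ (with the base case $\mu_0 \le \mu_h$ following directly from the definition of $\mu_h$ since $W_{i,0}=\operatorname{sgn}(X^0_i)=h_i m^*_i$ when $\sigma_i>0$). Fix $p \in [P-1]$ and condition on $W_{p-1}$, which satisfies the inductive hypothesis $\mu_{p-1} \le 4^{-(p-1)}\mu_h$ and also $\Vert W_{p-1}\Vert_F \le 4^{-(p-1)}\sqrt r$ by Lemma~\ref{thmexponentialdecay}. For every $k \in [r]$ and $\ell \in \Gamma_{p+1}$ we aim to show
\[
\sqrt{L}\,\bigl\Vert W_{k,p}^* S_{k,p+1} b_{k,\ell}\bigr\Vert_{\ell_2}\ \le\ 4^{-p}\mu_h
\]
with probability at least $1 - \mathcal{O}(L^{-\omega - 2})$. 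A union bound over the at most $PrL \le L^{3}$ choices of $(k,\ell,p)$ then yields the lemma with the required probability $1-\mathcal{O}(L^{-\omega})$.

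The key observation, extracted from the derivation (\ref{equation:wipiteration1})--(\ref{equation:wipiteration2}) in the proof of Lemma~\ref{operatorbound}, is that $W_{k,p-1}\in\mathcal{T}_k$ implies $W_{k,p}=-\mathcal{P}_{\mathcal{T}_k}\!\bigl(\sum_{\ell'\in\Gamma_p} Z_{\ell'}^{(k)}\bigr)$, where the independent mean-zero matrices
\[
Z_{\ell'}^{(k)}\ =\ \frac{L}{Q}\sum_{k'=1}^{r} b_{k,\ell'}\, w_{k',\ell'}^{*}\,\bigl(c_{k',\ell'}c_{k,\ell'}^{*}-\mathbb{E}[c_{k',\ell'}c_{k,\ell'}^{*}]\bigr),\qquad w_{k',\ell'}=W_{k',p-1}^{*}S_{k',p}b_{k',\ell'},
\]
satisfy $\Vert w_{k',\ell'}\Vert_{\ell_2}\le \mu_{p-1}/\sqrt L$ for $\ell'\in\Gamma_p$ by definition of $\mu_{p-1}$. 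Setting $v := S_{k,p+1}b_{k,\ell}$, we then have $W_{k,p}^{*}v = -\sum_{\ell'\in\Gamma_p} X_{\ell'}$ with $X_{\ell'}:=(\mathcal{P}_{\mathcal{T}_k}Z_{\ell'}^{(k)})^{*}v$ conditionally independent mean-zero vectors in $\mathbb{C}^{N_k}$. Two properties of $v$ are crucial: $|h_k^{*}v|\le \mu_h/\sqrt L$ (by the second term in the definition (\ref{definition:muh}) of $\mu_h$) and $\Vert v\Vert_{\ell_2}\le \tfrac{32}{31}\sqrt{K_\mu/L}$.

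The plan is now to apply the Matrix Bernstein inequality (Theorem~\ref{matrixbernstein}) with $\alpha=1$ to $\sum_{\ell'}X_{\ell'}$. Using the explicit identity
\[
(\mathcal{P}_{\mathcal{T}_k}Z)^{*}v\ =\ (h_k^{*}v)\,Z^{*}h_k\ +\ (v^{*}Z m_k)\,m_k\ -\ (h_k^{*}v)(h_k^{*}Z m_k)\,m_k,
\]
each summand is split into three pieces. In the first piece, the scalar $|h_k^{*}v|\le \mu_h/\sqrt L$ multiplies $(Z_{\ell'}^{(k)})^{*}h_k$, which by Gaussianity computations analogous to \eqref{equ:variance1}--\eqref{equ:variance2} and the bound $|h_k^{*}b_{k,\ell'}|\le\mu_h/\sqrt L$ contributes a variance term of order $16^{-p}\mu_h^{4}\,rN/(QL)$ and a max-norm term of order $4^{-p}\mu_h^{2}\sqrt{rN}/(Q\sqrt L)$. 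In the remaining pieces $Z_{\ell'}^{(k)}m_k$ is analysed directly: the diagonal $k'=k$ contribution reduces to $(c_{k,\ell'}^{*}m_k)c_{k,\ell'}-m_k$, the off-diagonal contribution is $(c_{k',\ell'}^{*}m_k)c_{k,\ell'}$, so the corresponding variance contributions scale as $16^{-p}\mu_h^{2}K_\mu/(QL)$ while the max-norm is of order $4^{-p}\mu_h\sqrt{K_\mu N}/(Q\sqrt L)$, using $\Vert w_{k,\ell'}\Vert\le\mu_{p-1}/\sqrt L$ and $\Vert b_{k,\ell'}\Vert^2\le K_\mu/L$. Combining the pieces gives
\[
\sigma^{2}\ \lesssim\ \frac{16^{-p}\,\mu_h^{2}\,r\bigl(K_\mu+N\mu_h^{2}\bigr)}{QL},\qquad R\ \lesssim\ \frac{4^{-p}\,\mu_h\,r\bigl(K_\mu+N\mu_h^{2}\bigr)}{Q\sqrt L},
\]
and $\log(1+|\Gamma_p|R^{2}/\sigma^{2})\lesssim \log L$ as in \eqref{ineq:RoverSigma}. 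Choosing $t=(\omega+2)\log L$, Theorem~\ref{matrixbernstein} together with the assumption (\ref{Qbound}) on $Q$ (with a sufficiently large implicit constant) yields $\Vert W_{k,p}^{*}v\Vert_{\ell_2}\le 4^{-p}\mu_h/\sqrt L$ with probability at least $1-\mathcal{O}(L^{-\omega-2})$, completing the induction step.

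The main technical obstacle is the bookkeeping for the variance computation: one must keep the factor $|h_k^{*}v|\le\mu_h/\sqrt L$ attached to the only piece of $(\mathcal{P}_{\mathcal{T}_k}Z)^{*}v$ that is aligned with the $h_k$-direction (where no additional incoherence is available), while for the other pieces one must exploit the rank-one structure of $Z_{\ell'}^{(k)}$ and the independence of the columns of $C_{k'}$ to avoid picking up an undesirable factor like $|v^{*}b_{k,\ell'}|$, which is only controlled by $K_\mu/L$ rather than $\mu_h/\sqrt L$.
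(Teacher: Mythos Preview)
Your plan is sound and follows the same overall architecture as the paper: fix $p$, condition on $W_{p-1}$, expand $W_{k,p}$ via the recurrence \eqref{equ:Wpiteration} and the projection formula for $\mathcal{P}_{\mathcal{T}_k}$, and then control $W_{k,p}^{*}S_{k,p+1}b_{k,\ell}$ by a Bernstein-type concentration for each fixed $(k,\ell)$ before taking a union bound. You correctly identify the crucial point, namely that the factor $|h_k^{*}v|\le\mu_h/\sqrt L$ must stay attached to the $h_k$-aligned piece (where no other incoherence is available), while for the $m_k$-aligned pieces the factor $v^{*}b_{k,\ell'}$ is harmless once summed over $\ell'$ because $\sum_{\ell'\in\Gamma_p}|v^{*}b_{k,\ell'}|^{2}\lesssim\|v\|_{\ell_2}^{2}\lesssim K_\mu/L$.

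The execution, however, differs from the paper in a noteworthy way. The paper splits each piece further according to $k'=k$ versus $k'\ne k$. For the $k'=k$ contributions (Steps~1--2) it uses the sub-exponential Bernstein inequalities (Theorems~\ref{matrixbernstein} and~\ref{bernsteinonedimensional}), exactly as you propose. For the cross terms $k'\ne k$ (Steps~3--4) the paper does \emph{not} use sub-exponential Bernstein; instead it conditions on $\{c_{k,j}\}_{j\in\Gamma_p}$, after which the remaining randomness in $c_{k',j}$ for $k'\ne k$ is Gaussian, and the cleaner Gaussian series bound (Corollary~\ref{gaussianconcentration}) applies --- at the price of an extra step invoking Lemma~\ref{conditioninglemma} to control $\max_j\|c_{k,j}\|_{\ell_2}$ and $\max_j|c_{k,j}^{*}m_k|$. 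Your route avoids this conditioning entirely: by grouping all $k'$ into a single $X_{\ell'}$, the summands are independent across $\ell'\in\Gamma_p$ and Theorem~\ref{matrixbernstein} with $\alpha=1$ handles them in one shot. The trick that makes this efficient is to write $\sum_{k'\ne k}c_{k,\ell'}c_{k',\ell'}^{*}w_{k',\ell'}=c_{k,\ell'}\bigl(\sum_{k'\ne k}c_{k',\ell'}^{*}w_{k',\ell'}\bigr)$ and use that the inner sum is scalar Gaussian with variance $\sum_{k'}\|w_{k',\ell'}\|_{\ell_2}^{2}$, which gives the $\sqrt{r}$ (rather than $r$) dependence in the $\psi_1$-norm of $X_{\ell'}$. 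This is a legitimate and slightly more streamlined alternative; the paper's conditioning buys a cleaner variance term (no sub-exponential $R$-term to track), while your approach buys fewer case distinctions.

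Two minor remarks. First, as stated you prove $\mu_p\le 4^{-p}\mu_h$ rather than the lemma's literal conclusion $\mu_p\le\tfrac14\mu_{p-1}$; since your concentration bounds are naturally proportional to $\mu_{p-1}$ (via $\|w_{k',\ell'}\|_{\ell_2}\le\mu_{p-1}/\sqrt L$), you can just as well conclude $\le\tfrac14\mu_{p-1}$ directly without substituting the inductive bound --- either way suffices for the application in Lemma~\ref{operatorbound}. Second, some of your displayed orders (e.g.\ $4^{-p}\mu_h^{2}\sqrt{rN}/(Q\sqrt L)$ for the piece-1 max-norm) appear to drop a factor of $\mu_h$ coming from $|h_k^{*}v|\cdot|b_{k,\ell'}^{*}h_k|\cdot\mu_{p-1}$; the corrected bounds are still absorbed by the hypothesis $Q\gtrsim_\omega r(K_\mu+N\mu_h^{2})\log^{2}L$, so this does not affect the argument.
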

A similar lemma was established in \cite{lingstrohmer}. However, it was required that $L$ scales quadratically with $r$. Thus, we need to refine the argument in order to achieve a linear scaling in $r$. 
\begin{proof}[Proof of Lemma \ref{mudecay}]
First, we will show the claim for fixed $ p \in \left\{ 0; 1; \ldots; P-1  \right\} $. Observe that it is enough to show that for all $\ell \in \Gamma_{p+1} $ and all $ i \in \left[ r \right] $
  \begin{equation}\label{verylastinequality}
\sqrt{L} \Vert   w_{i,\ell}\Vert_{\ell_2} \le \frac{1}{4} \mu_{p-1}
  \end{equation}
with $w_{i,\ell} := W_{i,p} S_{i,p+1} b_{i,\ell} $ as in (\ref{definitionwkl}).
Furthermore, observe that from the recurrence relation (\ref{equ:Wpiteration}) we obtain
\begin{equation*}
W_{i,p} = W_{i,p-1} - \frac{L}{Q} \left( \mathcal{P}_{\mathcal{T}_i} \left( \mathcal{A}_i^p \right)^* \mathcal{A}^p  \mathcal{S}^p \right) \left( W_{p-1} \right).
\end{equation*}
Due to the definition of $ \mathcal{T}_i $ and  $ \Vert h_i \Vert_{\ell_2} =  \Vert m_i \Vert_{\ell_2}  =1$ we may write for all $ Z \in \mathbb{C}^{K_i \times N_i} $
\begin{equation*}
    \mathcal{P}_{\mathcal{T}_i} Z = h_ih_i^* Z + \left( \Id - h_i h_i^*\right) Z m_im_i^*.
\end{equation*}
Together with (\ref{equation:wipiteration1}, \ref{equation:wipiteration2}) this implies
  \begin{align*}
  W_{i,p}= &\frac{L}{Q} \sum_{j \in \Gamma_p}  \Big\lbrack h_i h^*_i b_{i,j} w^*_{i,j} \left( \Id - c_{i,j}c^*_{i,j}  \right) + \left( \Id - h_i h^*_i\right) b_{i,j} w^*_{i,j} \left( \Id - c_{i,j} c^*_{i,j}  \right) m_im_i^* \Big\rbrack - \\
 &\frac{L}{Q} \sum_{k \ne i} \sum_{j \in \Gamma_p}  \Big\lbrack h_i^* h_i b_{i,j} w^*_{k,j} c_{k,j} c^*_{i,j}+ (\Id - h_ih_i^*)b_{i,j} w^*_{k,j} c_{k,j} c^*_{i,j} m_im_i^* \Big\rbrack.
   \end{align*}
   We define for all $ j \in \Gamma_p $
   \begin{align*}
     \textbf{z}_{i,j} &=  \frac{L}{Q}   \left( \Id - c_{i,j} c^*_{i,j}
                        \right)  w_{i,j}  b^*_{i,j} h_ih_i^* S_{i,p+1}  b_{i,\ell}, \\
     z_{i,j} &=  \frac{L}{Q}  m^*_i  \left( \Id - c_{i,j} c^*_{i,j}  \right) w_{i,j} b^*_{i,j} \left( \Id - h_ih_i^* \right) S_{i,p+1}  b_{i,\ell} 
   \end{align*}
   and for all $k\neq i$ and for all $ j \in \Gamma_p $
   \begin{align*}
     \textbf{z}_{k,j}&= \frac{L}{Q}  c_{i,j} c^*_{k,j}   w_{k,j}  b^*_{i,j} h_ih_i^* S_{i,p+1} b_{i,\ell},\\
     z_{k,j}&= \frac{L}{Q} m^*_i c_{i,j} c^*_{k,j} w_{k,j} b^*_{i,j}   (\Id - h_ih_i^*) S_{i,p+1} b_{i,\ell}. 
   \end{align*}
Hence, to establish (\ref{verylastinequality}) by the triangle inequality it is sufficient to prove that with high probability
   \begin{align}
     \Big\Vert  \sum_{j \in \Gamma_p}  \textbf{z}_{i,j} \Big\Vert_{\ell_2} &\le \frac{1}{16 \sqrt{L} } \mu_{p-1}, \label{muestimate1}  \\
     \Big\vert  \sum_{j \in \Gamma_p} z_{i,j} \Big\vert &\le \frac{1}{16 \sqrt{L} } \mu_{p-1}, \label{muestimate2} \\
     \Big\Vert   \sum_{k \ne i} \sum_{j \in \Gamma_p}  \textbf{z}_{k,j} \Big\Vert_{\ell_2} &\le  \frac{1}{16 \sqrt{L} } \mu_{p-1},  \label{muestimate3} \\
     \Big\vert  \sum_{k \ne i} \sum_{j \in \Gamma_p}  z_{k,j}  \Big\vert &\le  \frac{1}{16 \sqrt{L} } \mu_{p-1}  . \label{muestimate4} 
   \end{align}   
   \textbf{Step 1: Proof of (\ref{muestimate1})} In order to apply Theorem \ref{matrixbernstein} we compute using Lemma \ref{usefullemma2}
   \begin{align*}
     \Big\Vert \mathbb{E} \left[  \sum_{j \in \Gamma_p} \textbf{z}_{i,j} \textbf{z}^*_{i,j} \right] \Big\Vert_{2 \rightarrow 2} = & \frac{L^2}{Q^2}  \vert h_i^*  S_{i,p+1} b_{i,\ell} \vert^2 \sum_{j \in \Gamma_p}  \vert  b^*_{i,j} h_i  \vert^2 \Vert w_{i,j} \Vert^2_{\ell_2}\\
    \le & \frac{1}{QL} \mu^2_h \mu^2_{p-1} \Vert T^{1/2}_{i,p} h_i \Vert_{\ell_2}^2 \lesssim \frac{1}{QL} \mu^2_h \mu^2_{p-1}.                             
   \end{align*}
	Analogously, using Lemma \ref{usefullemma3}
   \begin{align*}
     \mathbb{E} \left[ \sum_{j \in \Gamma_p}  \textbf{z}^*_{i,j}
     \textbf{z}_{i,j} \right]  &= \frac{L^2 N_i}{Q^2}  \sum_{j \in \Gamma_p} \Vert w_{i,j}\Vert_{\ell_2}^2 \vert b^*_{i,j} h_i \vert^2 \vert b^*_{i,\ell} S^*_{i,p+1} h_i \vert^2 \lesssim \frac{N_i}{QL}  \mu^2_{p-1} \mu_h^2 .
   \end{align*}
	Next, we estimate $R = \underset{j \in \Gamma_p}{\max} \Big\Vert \Vert \textbf{z}_{i,j} \Vert_{\ell_2 } \Big\Vert_{\psi_1} $. For that purpose we apply Lemma \ref{lemma:orliczestimates} to observe that
   \begin{align}
    R= \underset{j \in \Gamma_p}{\max}~\Big\Vert \Vert \textbf{z}_{i,j} \Vert_{\ell_2}
     \Big\Vert_{\psi_1} 
     &= \frac{L}{Q} \   \underset{j \in \Gamma_p}{\max} \ \left( \vert b^*_{i,j} h_i \vert \vert h^*_i S_{i,p+1} b_{i,\ell} \vert \Vert  \left( \Id - c_{i,j} c^*_{i,j} \right) w_{i,j}  \Vert_{\psi_1} \right) \notag \\
     & \lesssim  \frac{L \sqrt{N_i}}{Q} \   \underset{j \in \Gamma_p}{\max} \ \left( \vert h^*_i S_{i,p+1} b_{i,\ell} \vert  \vert b^*_{i,j} h_i \vert \Vert w_{i,j} \Vert_{\ell_2} \right) \label{ineq:intern9999} \\
     & \lesssim \frac{\sqrt{N_i} \mu^2_h }{Q \sqrt{L}}  \mu_{p-1}. \notag
   \end{align}
   Furthermore, (\ref{ineq:intern9999}) yields, analogously to the derivation of (\ref{ineq:RoverSigma}), that
   \begin{align}\label{ineq:fraction1}
     \frac{ \vert \Gamma_p \vert R^2}{\sigma^2} \le \vert \Gamma_p \vert \frac{ \underset{j \in \Gamma_p}{\max}  \vert h^*_i S_{i,p+1} b_{i,\ell} \vert^2  \vert b^*_{i,j} h_i \vert^2\Vert w_{i,j} \Vert_{\ell_2}^2 }{ \sum_{j \in \Gamma_p} \Vert w_{i,j} \Vert^2_{\ell_2} \vert b^*_{i,j} h_i \vert^2  \vert b^*_{i,\ell} S^*_{i,p+1} h_i \vert^2} \lesssim Q \le L.
   \end{align}
   Applying Theorem \ref{matrixbernstein} with $t= \left( \omega +2 \right)   \log L $ and $ \alpha =1 $ we obtain that with probability $1 - \mathcal{O} \left( L^{-\omega-2}\right)$ 
   \begin{equation*}
     \Big\Vert   \sum_{j \in \Gamma_p} \textbf{z}_{i,j} \Big\Vert_{\ell_2} \lesssim_{\omega}  \frac{ \mu_{p-1}}{\sqrt{L}} \max \left\{ \sqrt{ \frac{ N_i \mu^2_h  }{Q}    \log L };  \frac{ \sqrt{N_i} \mu^2_h  }{Q}   \left( \log L \right)^2   \right\},
   \end{equation*}
   which implies (\ref{muestimate1}), if the numerical constant in (\ref{Qbound}) is chosen large enough.\\
 \textbf{Step 2: Proof of (\ref{muestimate2})} By Lemma \ref{lemma:orliczestimates} we obtain that
\begin{align*}
\Big\Vert \vert z_{i,j} \vert \Big\Vert_{\psi_1} & \lesssim \frac{L}{Q} \vert b^*_{i,j} \left( \Id - h_i h^*_i \right) S_{i,p+1} b_{i,\ell} \vert  \Vert w_{i,j} \Vert_{\ell_2}\\
& \le \frac{L}{Q} \Vert b_{i,j} \Vert_{\ell_2} \Vert \Id - h_i h^*_i \Vert_{2 \rightarrow 2} \Vert S_{i,p+1} \Vert_{2 \rightarrow 2} \Vert b_{i,\ell} \Vert_{\ell_2} \Vert w_{i,j} \Vert_{\ell_2}\\
& \lesssim \frac{L}{Q} \Vert b_{i,j} \Vert_{\ell_2} \Vert b_{i,\ell} \Vert_{\ell_2} \Vert w_{i,j} \Vert_{\ell_2}  \lesssim \frac{ K_{i,\mu} }{Q\sqrt{L}} \mu_{p-1}
\end{align*}
and
\begin{align*}
\sum_{j \in \Gamma_p} \Big\Vert \vert z_{i,j} \vert \Big\Vert^2_{\psi_1} &\lesssim \frac{L^2}{Q^2} \left( \underset{j \in \Gamma_p}{\max} \ \Vert w_{i,j} \Vert^2_{\ell_2} \right) \sum_{j \in \Gamma_p} \vert b^*_{i,j} \left( \Id - h_i h^*_i \right) S_{i,p+1} b_{i,\ell} \vert^2\\
&= \frac{L}{Q}  \left( \underset{j \in \Gamma_p}{\max} \ \Vert w_{i,j} \Vert^2_{\ell_2} \right) \Vert T^{\frac{1}{2}}_{i,p}  \left( \Id - h_i h^*_i \right) S_{i,p+1} b_{i,\ell} \Vert^2_{\ell_2}\\
& \lesssim \frac{L}{Q}\Vert b_{i,\ell} \Vert^2_{\ell_2} \Vert w_{i,j} \Vert^2_{\ell_2}  \lesssim \frac{ K_{i,\mu} }{Q L} \mu^2_{p-1}.
\end{align*}
Consequently, Theorem \ref{bernsteinonedimensional} applied with $ t= \left( \omega +2 \right)   \log L $ yields that
\begin{align*}
\Big\vert \sum_{j \in \Gamma_p} z_{i,j} \Big\vert \lesssim_{\omega} \frac{\mu_{p-1}}{\sqrt{L}} \max \left\{ \sqrt{ \frac{   K_{i, \mu} \log L   }{Q}};\  \frac{ K_{i, \mu}  }{Q }  \log L   \right\}
\end{align*}
   with probability $1 - \mathcal{O} \left( L^{-\omega-2} \right) $, which shows  (\ref{muestimate2}).\\
	\textbf{Step 3: Proof of (\ref{muestimate3})} 
	As for $k_1 \ne i, k_2 \ne i$ the vectors $\textbf{z}_{k_1,j}$ and $\textbf{z}_{k_2,j} $ are not independent, 
	we will condition on the random variables $ \left\{ c_{i,j} \right\}_{j \in \Gamma_p}  $ and then apply Corollary~\ref{gaussianconcentration}. For that, we bound
	\begin{align}
	\Big\vert \sum_{k \ne i} \sum_{j \in \Gamma_p} \mathbb{E}\left[  \textbf{z}^*_{k,j}  \textbf{z}_{k,j} \Big\vert \left\{ c_{i,j} \right\}_{j \in \Gamma_p}   \right]  \Big\vert  & = \frac{L^2}{Q^2} \sum_{k \ne i} \sum_{j \in \Gamma_p} \Vert w_{k,j} \Vert^2_{\ell_2}  \Vert c_{i,j}  \Vert_{\ell_2}^2 \vert h^*_i b_{i,j} \vert^2 \vert h^*_i  S_{i,p+1} b_{i,\ell} \vert^2 \notag \\
	&  \le \mu^2_{p-1} \frac{ \mu^2_h }{Q^2} \left( \underset{j \in \Gamma_p}{\max} \ \Vert c_{i,j} \Vert_{\ell_2}^2 \right)  \sum_{k \ne i} \sum_{j \in \Gamma_p}   \vert h^*_i b_{i,j} \vert^2 \label{internreferenzierung1}  \\
		&  \le \mu^2_{p-1} \frac{ \mu^2_h }{LQ} \left( \underset{j \in \Gamma_p}{\max} \ \Vert c_{i,j} \Vert_{\ell_2}^2 \right)  \sum_{k \ne i}  \Vert T^{1/2}_{i,p} h_i \Vert^2_{\ell_2} \notag \\
	&\lesssim \mu^2_{p-1} \frac{ r  \mu^2_h }{Q L} \left( \underset{j \in \Gamma_p}{\max} \  \Vert c_{i,j} \Vert_{\ell_2}^2 \right)    . \notag
	\end{align}
	Analogously, using the triangle inequality,
	\begin{align*}
	&\Big\Vert \sum_{k \ne i} \sum_{j \in \Gamma_p} \mathbb{E} \left[ \textbf{z}_{k,j} \textbf{z}^*_{k,j} \Big\vert \left\{ c_{i,j} \right\}_{j \in \Gamma_p}  \right]   \Big\Vert_{2 \rightarrow 2}\\
	 =&  \frac{L^2}{Q^2} \Big\Vert  \sum_{k \ne i} \sum_{j \in \Gamma_p}  c_{i,j} c^*_{i,j}  \mathbb{E} \left[ \vert c_{k,j}^* w_{k,j}  \vert^2 \right]  \vert h^*_i b_{i,j} \vert^2 \vert h^*_i S_{i,p+1} b_{i,\ell} \vert^2  \Big\Vert_{2\rightarrow 2}  \\
	 \le & \frac{L^2}{Q^2} \sum_{k \ne i} \sum_{j \in \Gamma_p} \Vert c_{i,j} \Vert_{\ell_2}^2  \Vert w_{k,j} \Vert_{\ell_2}^2   \vert  h^*_i b_{i,j} \vert^2 \vert h^*_i  S_{i,p+1} b_{i, \ell} \vert^2 \\
	 \overset{(\ref{internreferenzierung1})}{\lesssim} &  \mu^2_{p-1} \frac{ r  \mu_h^2}{QL} \left( \underset{j \in \Gamma_p}{\max} \ \Vert c_{i,j} \Vert_{\ell_2}^2 \right).
	\end{align*}
	Conditionally on $ \left\{ c_{i,j} \right\}_{j \in \Gamma_p} $, we can now apply Corollary \ref{gaussianconcentration} with $ t = \left( \omega +2 \right)  \log L $. Together with the last two estimates this yields that with probability $ 1 - \mathcal{O} \left( L^{-\omega-2}  \right) $
	\begin{align*}
	\Big\Vert  \sum_{k \ne i} \sum_{j \in \Gamma_p}  \textbf{z}_{k,j}   \Big\Vert_{\ell_2} \lesssim_{\omega} \mu_{p-1}  \sqrt{ \frac{  r \mu^2_h \left( \underset{j \in \Gamma_p}{\max} \Vert c_{i,j} \Vert_{\ell_2}^2 \right)        \log L }{Q L} }.
	\end{align*}
	Then, by Lemma \ref{conditioninglemma} we obtain that inequality (\ref{muestimate3}) holds with probability $ 1 - \mathcal{O} \left(  L^{-\omega-2} \right) $, if the constant in (\ref{Qbound}) is chosen large enough.\\
\textbf{Step 4: Proof of (\ref{muestimate4})}
Note that conditionally on $ \left\{ c_{i,j} \right\}_{j \in \Gamma_p} $   $ \sum_{k \ne i} \sum_{j \in \Gamma_p} z_{k,j}$ is a circular symmetric random variable with variance
	\begin{align*}
	\mathbb{E} \left[ \sum_{k \ne i} \sum_{j \in \Gamma_p}  \vert z_{k,j} \vert^2 \Big\vert \left\{ c_{i,j} \right\}_{j \in \Gamma_p}  \right] & = \frac{L^2}{Q^2}  \sum_{k \ne i} \sum_{j \in \Gamma_p}  \vert b^*_{i,\ell} S_{i,p+1} \left( \Id - h_i h_i^*\right) b_{i,j} \vert^2    \Vert w_{k,j} \Vert_{\ell_2}^2   \vert c^*_{i,j} m_i \vert^2  \\
	&  \le \mu^2_{p-1} \frac{1 }{Q} \left(  \underset{j \in \Gamma_p}{\max}   \vert c^*_{i,j} m_i \vert^2 \right)  \sum_{k \ne i} \Vert T^{1/2}_{i,p} \left( \Id - h_i h^*_i \right) S_{i,p+1} b_{i,\ell}  \Vert^2_{\ell_2} \\
	& \lesssim \mu^2_{p-1}  \frac{r  K_{i, \mu}  }{Q L}.
	\end{align*}
	Consequently, one obtains that with probability at least $ 1- \mathcal{O} \left( L^{-\omega-2} \right) $
	\begin{align*}
	 \Big\vert   \sum_{k \ne i} \sum_{j \in \Gamma_p} z_{k,j} \Big\vert \lesssim_{\omega} \mu_{p-1}  \sqrt{\frac{ \left(  \underset{j \in \Gamma_p}{\max} \vert c^*_{i,j} m_i \vert^2 \right) r   K_{i,\mu}   \log L }{Q L}}.  
	\end{align*}
	Thus, by Lemma \ref{conditioninglemma} inequality (\ref{muestimate4}) holds with probability at least $1- \mathcal{O} \left( L^{-\omega-2}  \right) $, if the constant in (\ref{Qbound}) is chosen large enough.\\
\textbf{Union bound: }
   By the previous four steps we see that for fixed $ p \in \lbrack P \rbrack $, $ \ell \in \Gamma_{p+1} $, and $i \in \lbrack r \rbrack$ the inequalities (\ref{muestimate1}), (\ref{muestimate2}), (\ref{muestimate3}), (\ref{muestimate4}) hold with probability $ 1- \mathcal{O} \left( L^{-\omega-2} \right) $. Thus, by (\ref{verylastinequality}) and a union bound we have $ \mu_{p-1} \le \frac{1}{4} \mu_p$ with probability $ 1- rQ\ \mathcal{O} \left( L^{-\omega-2} \right) $ for fixed $p \in \lbrack P-1 \rbrack$. Thus, with probability at most $ 1- rPQ\ \mathcal{O} \left( L^{-\omega-2} \right) $ we obtain $  \mu_{p-1} \le \frac{1}{4} \mu_p $ for all $ p\in  \lbrack P-1 \rbrack $. We obtain the desired result as we find $ r \lesssim Q\le  L$ and $ PQ = L $.
        
	
\end{proof}

\subsubsection{An upper bound for $ \Vert z \Vert_{\ell_2} $}
In the case of noise, the error bound given by Lemma \ref{dualcertificate}  is proportional to $ \Vert z \Vert_{\ell_2} $, where $z$ is the dual certificate as constructed in (\ref{hdefinition}). Thus, one needs an upper bound for $ \Vert z \Vert_{\ell_2}$. This will be accomplished by the following lemma.
\begin{lemma}\label{lemma:zupperbound}
Let $z \in \mathbb{C}^L$ be given by (\ref{hdefinition}) and assume that $\Vert W_p \Vert_F \le 4^{-p} \sqrt{r}  $. Furthermore, suppose that $\mathcal{A}^p $ satisfies the $\delta$-local isometry property (\ref{isometryproperty1}) with $ \delta \le \frac{1}{4} $ on $\mathcal{T}^p$ for all $p \in \left[P \right]$. Then
\begin{equation*}
\Vert z \Vert_{\ell_2} \lesssim \sqrt{r}.
\end{equation*}
\end{lemma}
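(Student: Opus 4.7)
The plan is to exploit that the measurement operators $\mathcal{A}^p$ are restrictions to the disjoint sets $\Gamma_p$, which gives an exact orthogonal decomposition of $\|z\|_{\ell_2}^2$, and then apply the local isometry property on $\mathcal{T}^p$ term-by-term together with the exponential decay $\|W_p\|_F \le 4^{-p}\sqrt{r}$.

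First I would observe that since $\{\Gamma_p\}_{p=1}^{P}$ is a partition of $[L]$, the vectors $\mathcal{A}^p \mathcal{S}^p(W_{p-1}) \in \mathbb{C}^L$ are supported on disjoint coordinate sets. Hence
\begin{equation*}
\Vert z \Vert_{\ell_2}^2 \;=\; \sum_{p=1}^{P} \Vert \mathcal{A}^p \mathcal{S}^p(W_{p-1}) \Vert_{\ell_2}^2.
\end{equation*}

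Next, for each $p \in [P]$, note that $\mathcal{S}^p(W_{p-1}) \in \mathcal{T}^p$ (as $W_{p-1}\in\mathcal{T}$ and by the definition $\mathcal{T}^p = \mathcal{T}+\mathcal{S}^p\mathcal{T}$). Applying the $\delta$-local isometry property (\ref{isometryproperty1}) with $\delta\le 1/4$ on $\mathcal{T}^p$ yields
\begin{equation*}
\Vert \mathcal{A}^p \mathcal{S}^p(W_{p-1}) \Vert_{\ell_2}^2 \;\le\; \frac{Q}{L}\,(1+\delta)\sum_{i=1}^{r} \Big\Vert T_{i,p}^{1/2} S_{i,p} W_{i,p-1} \Big\Vert_F^2.
\end{equation*}
Using $S_{i,p}=T_{i,p}^{-1}$ we have $T_{i,p}^{1/2}S_{i,p}=T_{i,p}^{-1/2}$, and since $\Vert \Id - T_{i,p}\Vert_{2\to 2}\le \tfrac{1}{32}$ (by $\omega$-admissibility) an argument analogous to the proof of (\ref{Trootinequality}) gives $\Vert T_{i,p}^{-1/2}\Vert_{2\to 2}^2 \le \tfrac{32}{31}$. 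Therefore
\begin{equation*}
\sum_{i=1}^{r} \Vert T_{i,p}^{-1/2} W_{i,p-1} \Vert_F^2 \;\le\; \tfrac{32}{31}\, \Vert W_{p-1} \Vert_F^2.
\end{equation*}

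Combining the last two displays with the hypothesis $\Vert W_{p-1}\Vert_F \le 4^{-(p-1)}\sqrt{r}$ yields
\begin{equation*}
\Vert \mathcal{A}^p \mathcal{S}^p(W_{p-1}) \Vert_{\ell_2}^2 \;\lesssim\; \frac{Q}{L}\cdot 16^{-(p-1)}\,r.
\end{equation*}
Summing the geometric series over $p\ge 1$ and using $Q\le L$, I obtain
\begin{equation*}
\Vert z \Vert_{\ell_2}^2 \;\lesssim\; \frac{Q}{L}\, r \sum_{p=1}^{\infty} 16^{-(p-1)} \;\lesssim\; r,
\end{equation*}
which gives the claim. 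No genuine obstacle arises here; the only subtlety is remembering the $Q/L$ prefactor coming from the normalization in (\ref{isometryproperty1}), which in fact saves a further factor of $\sqrt{P}$ (not needed for the stated bound).
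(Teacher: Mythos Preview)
Your proof is correct. The paper's argument is essentially the same in spirit---it also bounds each $\Vert \mathcal{A}^p \mathcal{S}^p(W_{p-1})\Vert_{\ell_2}$ via the local isometry property on $\mathcal{T}^p$ and then sums the geometric series in $4^{-p}\sqrt{r}$---but it uses the triangle inequality on $\Vert z\Vert_{\ell_2}$ rather than your orthogonal decomposition of $\Vert z\Vert_{\ell_2}^2$ coming from the disjoint supports of the $\mathcal{A}^p$. Your route is slightly sharper (it retains the $Q/L$ prefactor, as you observe), while the paper's is marginally quicker to write down; for the stated bound $\Vert z\Vert_{\ell_2}\lesssim \sqrt{r}$ the two are interchangeable.
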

\begin{proof}
Observe that
\begin{align*}
\Vert z \Vert_{\ell_2} \le \sum_{p=1}^{P} \Vert \mathcal{A}^p \mathcal{S}^p \left( W_{p-1} \right) \Vert_{\ell_2} \lesssim  \sum_{p=1}^{P} \Vert  W_{p-1} \Vert_F  \lesssim  \sum_{p=0}^{P-1} 4^{-p} \sqrt{r} \lesssim \sqrt{r},
\end{align*}
where the first equality follows from the definition of $z$ (\ref{hdefinition}) and the triangle inequality. The second inequality is due to the local isometry property (\ref{isometryproperty1}) and (\ref{Sinequality2}). We derive by (\ref{decay1}) the desired bound. 
\end{proof}

\subsection{Proof of Theorem \ref{theorem:mainwithnoise}}\label{subsec:proofmaintheorem}
First of all, recall that by Lemma \ref{lemma:operatornormbound} with probability at least  $ 1- 2 \exp \left( -t\right) $ it holds that
\begin{equation}\label{ineq:operatornormboundintern}
	\gamma = \Vert \mathcal{A} \Vert_{F \rightarrow 2} \le  2 \sqrt{  \omega \max  \left\{ 1; \frac{r  K_{\mu} N }{L}   \right\}  \log \left( L +r KN \right)}.
\end{equation}
In the following, let $ \left\{ \Gamma_p \right\}^P_{p=1} $ be an $ \omega$-admissible partition of $ \lbrack L \rbrack $ (see Definition \ref{definition:admissiblepartition}), which is a minimizer of (\ref{definition:muhomega}). From Definition \ref{definition:admissiblepartition} combined with the assumptions on $L$ (see (\ref{Lbound})) we infer that
    \begin{align}
    Q = \frac{L}{P} &\gtrsim r \left(  K_{\mu}  \log \left(  K_{\mu}  \right) + N \mu^2_h \right) \left( \log L \right)^2 \label{ineq:Qboundinternlast} \\
    P &\ge  \frac{1}{2}  \log \left( 8 \gamma  \sqrt{r} \right).  \label{ineq:Pboundinternlast}
    \end{align}
    Note that due to Theorem \ref{localisometry} and our assumptions
    on $L$ and $Q$ (and also $\log K_{\mu} \le \log L $) we may assume that the inequalities
    (\ref{isometryproperty2}) and (\ref{isometryproperty1}) hold with
    probability $ 1- \mathcal{O} \left( L^{-\omega}  \right)$ and
    constant $ \delta= \frac{1}{32} $. Thus, by Lemma
    \ref{dualcertificate} applied with $  \alpha= \frac{1}{8 \gamma}
    $, $ \beta = \frac{1}{4}$, and $ \delta= \frac{1}{4}$ it is enough
    to construct $ Y \in \text{Range} \left( \mathcal{A}^* \right)$
    which satisfies (\ref{dualcertificatecondition1}) and
    (\ref{dualcertificatecondition2}). This is achieved by the Golfing
    Scheme as explained in Section \ref{subsec:golfing}: Note that the assumption of Lemma \ref{thmexponentialdecay} is given by (\ref{ineq:Pboundinternlast}) and (\ref{isometryproperty1}). Thus, it holds that $\Vert W_p \Vert_F \le 4^{-p}
    \sqrt{r} $ for all $ p \le P $ and, by (\ref{equ:wpdefinition}), $Y=Y_P$ satisfies Condition
    (\ref{dualcertificatecondition1}). Furthermore, observe that Lemma \ref{mudecay} implies that with
    probability $  1- \mathcal{O} \left( L^{-\omega}  \right) $ one has $ \mu_{p} \le \frac{1}{4} \mu_{p-1} $ for all $ p \in \lbrack P-1 \rbrack$. Using this fact and $
    \Vert W_p \Vert_F \le 4^{-p} \sqrt{r}$ it follows from Lemma
    \ref{operatorbound} that Condition
    (\ref{dualcertificatecondition2}) is fulfilled. Using a union bound we conclude that with
    probability   $  1- \mathcal{O} \left( L^{-\omega}  \right) $ the
    approximate dual certificate $Y=Y_P$ satisfies the assumptions in
    Lemma \ref{dualcertificate}. Thus, if $\hat{X}$ is a minimizer of \eqref{eq:nucbpdn} it satisfies the estimation error (\ref{recoveryerror}) .  \\
It remains to prove the upper bound for the estimation error in order to obtain inequality (\ref{ineq:estimationerror}). Note that by Lemma \ref{lemma:zupperbound} we have that $\Vert z \Vert_{\ell_2} \lesssim \sqrt{r} $. Thus, in combination with (\ref{ineq:operatornormboundintern}) we derive
\begin{align*}
\Vert \hat{X} - X^{0} \Vert_F &\lesssim \left(1 + \gamma \right) \left(1 + \Vert z \Vert_{\ell_2} \right) \tau \\
	& \lesssim_{\omega} \tau \sqrt{ r  \max  \left\{ 1; \frac{r  K_{\mu} N }{L}   \right\}  \log L }  .
\end{align*}
This finishes the proof. \qed
\section{Outlook}
\label{sec:numerics}

\newcommand{\deriv}[2]{\ensuremath{\frac{\partial #1}{\partial #2}}}
\newcommand{\bigtimes}{\times}

Although the convex formulation in \eqref{eq:nucbpdn} is important for
theoretical investigations it is also obvious that for many real-word
applications nuclear minimization is not feasible due to its
computional complexity as lifting considerably increases the number of
optimization variables.  For the case $r=1$ a nonconvex approach has
been proposed by \cite{Li2016} which has been demonstrated not only to
be considerably more efficient but also to achieve a better empirical
performance. Shortly before the completion of our work this line of
research has been extended to $r\geq 1$ with explicit guarantees
\cite{Ling:2017}, but again for a number of measurements depending
quadratically on $r$. As in \cite{lingstrohmer}, the dependence observed in numerical experiments is linear.  We expect that
the mathematical analysis conducted in this paper will also  be important for establishing near-optimal performance guarantees for more efficient
algorithms. For this reason we include such a nonconvex approach
similar to the one analysed in \cite{Ling:2017}
in our numerical experiments, comparing it to nuclear norm minimization as analyzed in this paper.

More precisely, we consider a gradient-based (Wirtinger flow)
recovery algorithm minimizing the residual
\begin{equation}
  F( h,x) := \|\mathcal{A}( h_1  x_1^*, \ldots,  h_r  x_r^*) - y
  \|_{\ell_2}^2
  \label{eq:residual}
\end{equation}
where $h:=(h_1, \ldots, h_r)$ and $x:=(x_1, \ldots, x_r)$.  Observe
that in the noiseless case one has $F( h, x)=0$ for the ground truth.
Note that, while minimizing $F$ has been shown empirically in
\cite{Ling:2017} to have good recovery properties, where
guarantees only apply to a regularized variant.
As $F$ is highly non-convex in $(h,x)$ and possesses many local
minima, it is essential to find a good initial guess to start the
minimization process (cf.~\cite{Li2016,Ling:2017}). 
Eq. \eqref{equ:operatornormboundinline1} motivates the
initialization given in the following algorithm.
\begin{algorithm}[H]
	\caption{Initialization}\small
	\label{algo:init}
	\begin{algorithmic}
		\State \textbf{Input:} Observation $ y$.
		\State $\left(  Z_1, \ldots,  Z_r \right) \gets \mathcal{A}^*  y$.
		\For{$k = 1, \ldots, r$}
		\State $d_k\gets$  largest singular value of $ Z_k$.
		\State Let $ v^{(0)}_k$ and $ u^{(0)}_k$ be the corresponding left and right singular vectors, respectively.
		\State $ v^{(0)}_k \gets \sqrt{d_k}  v^{(0)}_k$ and  $ u^{(0)}_k \gets \sqrt{d_k}  u^{(0)}_k$
		\EndFor 
		\State \textbf{Output:} Initial guesses $ v^{(0)},  u^{(0)}$.
	\end{algorithmic}
\end{algorithm}
To minimize $F$ a gradient descent approach is used. 
Here the gradient of a function $f : \C^n \to \C$ at $z_0\in \C^n$ is given
by $\nabla_{ z} f( z_0) = \left(\deriv{f}{ z}( z_0)\right)^* \in \C^n$
where for
$z=u+iv\in\C$ the Wirtinger derivatives are
$\deriv{}{z} = \frac{1}{2} \left( \deriv{}{u} - i \deriv{}{v}\right)$
and
$\deriv{}{\overline{z}} = \frac{1}{2} \left( \deriv{}{u} + i
  \deriv{}{v} \right)$.
Since for real-valued complex
functions $f : \C^n \to \R$ one has $\deriv{f}{\overline{z}} = \overline{\deriv{f}{z}}$, we do not need to consider $\deriv{f}{\overline{z}}$ here. Consequently, we obtain
\begin{align*}
&\nabla_{ h_i} F( h,  x)
= \left( \diag \left( C_i  \overline{x_i} \right)  B \right)^* {\left( \mathcal{A}( h  x^*) -  y \right)};
\\
&\nabla_{ x_i} F( h,  x)
= \left( \diag( B_i  h_i) C_i \right)^T \overline{\left( \mathcal{A}( h  x^*) -  y \right)}
\end{align*} 
To estimate a suitable stepsize $\eta$ for each iteration
we use the backtracking line search.
\begin{algorithm}[H]
	\caption{Wirtingers gradient descent with backtracking}
	\label{algo:gradient}
	\begin{algorithmic}\small
		\State \textbf{Input:} Initial values $ v^{(0)},  u^{(0)}$.
		\For{$i = 1, \ldots$}
		\State $\eta \gets \textsc{line-search} \left( v^{(i-1)},  u^{(i-1)} \right)$
		\State $ v^{(i)} \gets  v^{(i-1)} - \eta \nabla_{ h} F \left( v^{(i-1)},  u^{(i-1)} \right)$
		\State $ u^{(i)} \gets  u^{(i-1)} - \eta \nabla_{ x} F \left( v^{(i-1)},  u^{(i-1)} \right)$
		\If{$ \| \nabla F \left( v^{(i)},  u^{(i)} \right)\|_{\ell_2} < \varepsilon$} 
		\State \Return $ v^{(i)},  u^{(i)}$
		\EndIf
		\EndFor
		\State \textbf{Output:} Approximate solutions $ v^{(i)},  u^{(i)}$.
	\end{algorithmic}
\end{algorithm}

\if0
\paragraph{Numerical Results:}
\begin{figure*}[h]
	\centering
	\subfloat[][]{
		\includegraphics[width=.5\linewidth]{fig/20170307-success_8_8} 
	}
	\subfloat[][]{
		\includegraphics[width=.5\linewidth]{fig/20170307-iterations_wirtinger_8_8} 
	}\\
	\subfloat[][]{
		\includegraphics[width=.5\linewidth]{fig/20170307-success_wirtinger_32_32} 
	}
	\subfloat[][]{
		\includegraphics[width=.5\linewidth]{fig/20170307-iterations_wirtinger_32_32} 
	}
	
	\caption{Results for $r=4$ signal contributions (devices). We compare
		the success rates in (a) the convex approach \eqref{eq:nucbpdn} with the Wirtinger
		approach (mean number of iterations in (b)) for $K=N=8$
		where $\rho=L/\sum_{i=1}^r(K_i+N_i)$. In (c) and (d) the Wirtinger
		results are shown for $K=N=32$.
	} 
	\label{fig:experiments}
\end{figure*}
We have investigated both, nuclear norm minimization \eqref{eq:nucbpdn}
and Algorithms \ref{algo:init} and \ref{algo:gradient},
in the noiseless case for of $r=4$
devices with equal channel dimensions $K=K_1=\dots=K_r$ and signal
dimensions $N=N_1=\dots=N_r$. For each experiment the matrices
$C_i\in\C^{L\times N}$, the signal vectors $x^0_i\in\C^N$ and the
channel coefficients $h^0_i\in\C^K$ are generated with iid.  complex
normal distributed entries.  Success for a signal pair $(h_i,x_i)$ for
$i=1\dots r$ is defined if
$\lVert h_ix_i^*-h^0_ix^{0*}_i\rVert_F/\lVert
h^0_ix^{0*}_i\rVert_F\leq 1\%$.
The success rates for all $r$ devices are cummulated in a single
histogram. Furthermore, for the Wirtinger approach also the mean number
of iterations is recorded (using $\epsilon=10^{-4}$ in the algorithms above).
The maximal number of iterations are limited
to $1000$. 

In a small test scenario for $K=N=8$ we have compared the results of
\eqref{eq:nucbpdn} (solved using the CVX toolbox for matlab) with the
Wirtinger approach above for different $L=\rho\cdot\sum_{i=1}^r(K_i+N_i)$.
Obviously, the performance of the Wirtinger
approach is considerably faster then the convex approach. As
seen in (a) of Figure \ref{fig:experiments}, even in terms of 
success rate the performance is much better which matches the
observations in \cite{Li2016}. For practical applications also the
number of required iterations is important and it can been seen in (b) 
that the predefined limit of $1000$ is achieved for $\rho\rightarrow 1$.

In the next scenario we take $K=N=32$. Solving \eqref{eq:nucbpdn} for
this increased dimensions can not be achieved for a sufficient
statistics using CVX. We only show the results for the Wirtinger
approach. The success rates in (c) of Figure \ref{fig:experiments}
show that again a phase transition occurs at approximately $\rho=1.2\dots
1.4$. However, note that the limit of $1000$ iterations is already
achieved at that point, see (d). 
Finally in Figure \ref{fig:experiments:phasetrans} we vary the number
of device $r$. The results indeed show that almost independent of $r$ the phase
transition for \eqref{eq:nucbpdn} occurs at $\rho\approx 2.75$ where
the Wirtinger approach performs considerable better and the phase
transition seems to be happen for larger $r$ at $\rho\approx 1.17$.
\fi

\paragraph{Numerical Results:}
We have investigated both nuclear norm minimization
\eqref{eq:nucbpdn} and Algorithms \ref{algo:init} and
\ref{algo:gradient} in the noiseless case for different values of $r$
and $L$
with equal channel dimensions $K=K_1=\ldots=K_r=8$ and signal dimensions
$N=N_1=\ldots=N_r=8$. The success rates per device are estimated
numerically and plotted as a function of $\rho=L/\sum_{i=1}^r(K_i+N_i)$.  The convex program
\eqref{eq:nucbpdn} is solved using the Matlab CVX toolbox.  For each
experiment the matrices $C_i\in\C^{L\times N}$, the signal vectors
$x^0_i\in\C^N$, and the channel coefficients $h^0_i\in\C^K$ are
generated with i.i.d.  complex normal distributed entries.
Recovery is considered successful for a device
if the corresponding signal pair $(h_i,x_i)$ for $i \in \lbrack r \rbrack $ 
fullfils 
$\lVert h_ix_i^*-h^0_ix^{0*}_i\rVert_F/\lVert
h^0_ix^{0*}_i\rVert_F\leq 1\%$.  Furthermore, the stopping
criterion for the Wirtinger approach is chosen to be $\epsilon=10^{-4}$
and the maximal number of iterations is limited to $1000$.\\

Our experiments confirm the findings of \cite{lingstrohmer} and
\cite{Ling:2017} that for both the convex and the non-convex approach
the scaling is linear.  The results in Figure
\ref{fig:experiments:phasetrans} show that -- almost independently of
$r$ -- the phase transition for \eqref{eq:nucbpdn} occurs at
$\rho\approx 2.75$ while the Wirtinger flow approach performs
considerably better with a phase transition
(for larger $r$) at $\rho\approx 1.17$.

\begin{figure*}[h]
	\centering
	\subfloat[][]{
		\includegraphics[width=.5\linewidth]{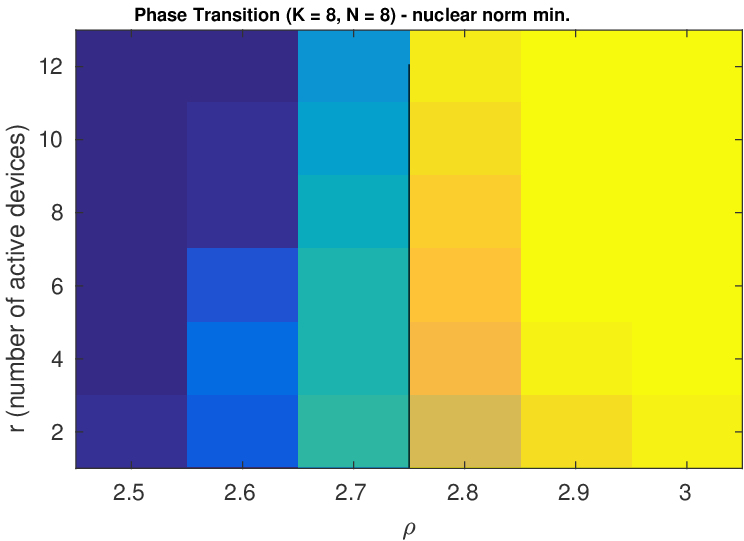} 
	}
	\subfloat[][]{
		\includegraphics[width=.5\linewidth]{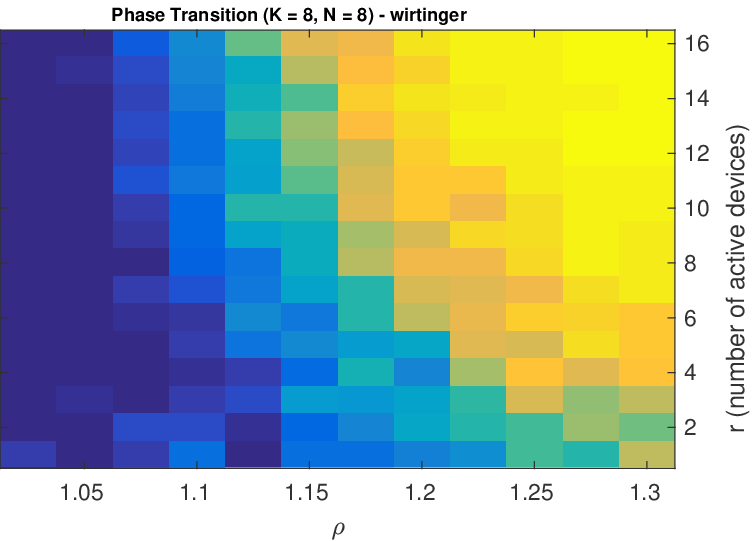} 
	}
	
	\caption{Phase transition of the success rates per device for (a)
          the convex approach \eqref{eq:nucbpdn} and (b) the Wirtinger
          approach for $K=N=8$ where $\rho=L/\sum_{i=1}^r(K_i+N_i)$.
        }
	\label{fig:experiments:phasetrans}
\end{figure*}

\section*{Acknowledgements}

The three authors acknowledge support by the Hausdorff Institute for
Mathematics (HIM), where part of this work was completed in the
context of the HIM Trimester Program {\em Mathematics of Signal
Processing}. This work has been supported by German Science Foundation
(DFG) in the context of the joint project {\em Bilinear
Compressed Sensing} (JU2795/3-1, KR 4512/2-1) as part of the Priority Program 1798. Furthermore, the
authors want to thank David Gross, Richard Kueng, Kiryung Lee, Shuyang Ling, and Tom Szollmann for fruitful discussions.

\bibliographystyle{amsalpha}
\bibliography{notes}

\newcommand{\etalchar}[1]{$^{#1}$}
\providecommand{\bysame}{\leavevmode\hbox to3em{\hrulefill}\thinspace}
\providecommand{\MR}{\relax\ifhmode\unskip\space\fi MR }
\providecommand{\MRhref}[2]{%
  \href{http://www.ams.org/mathscinet-getitem?mr=#1}{#2}
}
\providecommand{\href}[2]{#2}
\begin{thebibliography}{WGMM13}

\bibitem[AF13]{almeida2013blind}
M.~S.~C. Almeida and M.~A.~T. Figueiredo, \emph{{Blind image deblurring with
  unknown boundaries using the alternating direction method of multipliers}},
  20th IEEE International Conference on Image Processing, 2013, pp.~586--590.

\bibitem[ALMT14]{Amelunxen:2013}
D.~Amelunxen, M.~Lotz, M.~B. McCoy, and J.~A. Tropp, \emph{{Living on the edge:
  Phase transitions in convex programs with random data}}, Inf. Inference
  (2014).

\bibitem[AMS04]{artsteinmilman}
S.~{Artstein}, V.~{Milman}, and S.~J. {Szarek}, \emph{{Duality of metric
  entropy.}}, {Ann. of Math. (2)} \textbf{159} (2004), no.~3, 1313--1328.

\bibitem[ARR14]{ARR2012}
A.~Ahmed, B.~Recht, and J.~Romberg, \emph{{Blind Deconvolution using Convex
  Programming}}, IEEE Trans. Inform. Theory \textbf{60} (2014), no.~3,
  1711--1732.

\bibitem[AW02]{ahlswede2002strong}
R.~Ahlswede and A.~Winter, \emph{{Strong converse for identification via
  quantum channels}}, IEEE Trans. Inform. Theory \textbf{48} (2002), no.~3,
  569--579.

\bibitem[BCEB08]{bodmannliftingtrick}
B.~G. Bodmann, P.~G. Casazza, D.~Edidin, and R.~Balan, \emph{{Frames for linear
  reconstruction without phase}}, 42nd Annual Conference on Information
  Sciences and Systems, 2008, IEEE, 2008, pp.~721--726.

\bibitem[{Bha}96]{bhatia2013matrix}
R.~{Bhatia}, \emph{{Matrix analysis.}}, New York, NY: Springer, 1996.

\bibitem[BN07]{balzano2007blind}
L.~Balzano and R.~Nowak, \emph{{Blind calibration of sensor networks}},
  Proceedings of the 6th international conference on Information processing in
  sensor networks, ACM, 2007, pp.~79--88.

\bibitem[Car85]{Carl1985}
B.~Carl, \emph{{Inequalities of Bernstein-Jackson-type and the degree of
  compactness of operators in Banach spaces}}, Ann. Inst. Fourier (Grenoble)
  \textbf{35} (1985), no.~3, 79--118.

\bibitem[CG84]{chistov1984complexity}
A.~L. Chistov and D.~Y. Grigor'ev, \emph{{Complexity of quantifier elimination
  in the theory of algebraically closed fields}}, International Symposium on
  Mathematical Foundations of Computer Science, Springer, 1984, pp.~17--31.

\bibitem[CM14a]{Choudhary2014a}
S.~Choudhary and U.~Mitra, \emph{{Fundamental limits of blind deconvolution
  Part I: Ambiguity kernel}}, arXiv preprint arXiv:1411.3810 (2014).

\bibitem[CM14b]{Choudhary2014}
\bysame, \emph{{Identifiability scaling laws in bilinear inverse problems}},
  arXiv preprint arXiv:1402.2637 (2014).

\bibitem[CM15]{Choudhary2015}
\bysame, \emph{{Fundamental Limits of Blind Deconvolution Part II:
  Sparsity-Ambiguity Trade-offs}}, arXiv preprint arXiv:1503.03184 (2015).

\bibitem[CP11]{candes2011tight}
E.~J. Cand{\`e}s and Y.~Plan, \emph{{Tight oracle inequalities for low-rank
  matrix recovery from a minimal number of noisy random measurements}}, IEEE
  Trans. Inform. Theory \textbf{57} (2011), no.~4, 2342--2359.

\bibitem[CR07]{candes2007sparsity}
E.~J. Cand{\`e}s and J.~Romberg, \emph{{Sparsity and incoherence in compressive
  sampling}}, Inverse problems \textbf{23} (2007), no.~3, 969.

\bibitem[CRT06]{candes2006robust}
E.~J. Cand{\`e}s, J.~Romberg, and T.~Tao, \emph{{Robust uncertainty principles:
  Exact signal reconstruction from highly incomplete frequency information}},
  IEEE Trans. Inform. Theory \textbf{52} (2006), no.~2, 489--509.

\bibitem[CS11]{Chi:sensitivity}
Y.~Chi and L.L. Scharf, \emph{{Sensitivity to basis mismatch in compressed
  sensing}}, IEEE Trans. Signal Process. \textbf{59} (2011), no.~5.

\bibitem[CSPW09]{Chandrasekaran09}
V.~Chandrasekaran, S.~Sanghavi, P.~A. Parrilo, and A.~S. Willsky, \emph{{Sparse
  and low-rank matrix decompositions}}, IFAC Proceedings Volumes
  (IFAC-PapersOnline), vol.~15, 2009, pp.~1493--1498.

\bibitem[CW00]{chan2000convergence}
T.~F. Chan and C.~K. Wong, \emph{{Convergence of the alternating minimization
  algorithm for blind deconvolution}}, Linear Algebra Appl. \textbf{316}
  (2000), no.~1-3, 259--285.

\bibitem[DH01]{Donoho2001}
D.~L. Donoho and X.~Huo, \emph{{Uncertainty principles and ideal atomic
  decomposition}}, IEEE Trans. Inform. Theory \textbf{47} (2001), no.~7,
  2845--2862.

\bibitem[Don06]{donoho2006compressed}
D.~L. Donoho, \emph{{Compressed sensing}}, IEEE Trans. Inform. Theory
  \textbf{52} (2006), no.~4, 1289--1306.

\bibitem[Dud67]{DUDLEY1967290}
R.~M. Dudley, \emph{The sizes of compact subsets of hilbert space and
  continuity of gaussian processes}, J. Funct. Anal. \textbf{1} (1967), no.~3,
  290 -- 330.

\bibitem[Fliar]{Flinth2016}
A.~Flinth, \emph{Sparse blind deconvolution and demixing through
  $\ell_{1,2}$-minimization}, Adv. Comput. Math. (to appear).

\bibitem[FR13]{FR2013}
S.~{Foucart} and H.~{Rauhut}, \emph{{A Mathematical Introduction to Compressive
  Sensing}}, New York, NY: Birkh\"auser/Springer, 2013.

\bibitem[GE11]{Gleichman2011}
S.~Gleichman and Y.C. Eldar, \emph{{Blind compressed sensing}}, IEEE Trans.
  Inform. Theory \textbf{57} (2011), no.~10, 6958--6975.

\bibitem[God80]{Godard1980}
G.~H. Godard, \emph{{Self-recovering equalization and carrier tracking in two
  dimensional data communication systems}}, IEEE Trans. Commun. \textbf{28}
  (1980), no.~11, 1867--1875.

\bibitem[Gro11]{gross2011recovering}
D.~Gross, \emph{{Recovering low-rank matrices from few coefficients in any
  basis}}, IEEE Trans. Inform. Theory \textbf{57} (2011), no.~3, 1548--1566.

\bibitem[Hay94]{Haykin1994}
S.~Haykin, \emph{{Blind Deconvolution}}, Prentice Hall, New Jersey, 1994.

\bibitem[HS10]{Herman:generaldeviants}
M.A. Herman and T.~Strohmer, \emph{{General deviants: An analysis of
  perturbations in compressed sensing}}, IEEE J. Sel. Topics Signal Process.
  \textbf{4} (2010), no.~2.

\bibitem[JW15]{Jung2014}
P.~Jung and P.~Walk, \emph{{Sparse Model Uncertainties in Compressed Sensing
  with Application to Convolutions and Sporadic Communication}}, Compressed
  Sensing and its Applications (Holger Boche, Robert Calderbank, Gitta
  Kutyniok, and Jan Vybiral, eds.), Springer, 2015, pp.~1--29.

\bibitem[KJ16]{kueng:itw16}
R.~Kueng and P.~Jung, \emph{{Robust Nonnegative Sparse Recovery and
  0/1-Bernoulli Measurements}}, IEEE Inf. Theory Workshop (ITW), 2016.

\bibitem[KK17]{Kech2016}
M.~Kech and F.~Krahmer, \emph{Optimal injectivity conditions for bilinear
  inverse problems with applications to identifiability of deconvolution
  problems}, SIAM J. Appl. Algebra Geom. \textbf{1} (2017), no.~1, 20--37.

\bibitem[KMR14]{KMF2015}
F.~{Krahmer}, S.~{Mendelson}, and H.~{Rauhut}, \emph{{Suprema of chaos
  processes and the restricted isometry property.}}, {Comm. Pure Appl. Math.}
  \textbf{67} (2014), no.~11, 1877--1904.

\bibitem[Kol13]{koltchinskii_generalizedbernstein}
V.~Koltchinskii, \emph{A remark on low rank matrix recovery and noncommutative
  bernstein type inequalities}, Collections, vol. Volume 9, pp.~213--226,
  Institute of Mathematical Statistics, Beachwood, Ohio, USA, 2013.

\bibitem[KR61]{rutickii1961convex}
M.~A. Krasnosel’skii and Y.B. Rutickii, \emph{{Convex functions and Orlicz
  spaces}}, Noordhoff, Gr{\"o}ningen (1961).

\bibitem[KW14]{krahmer2014stable}
F.~Krahmer and R.~Ward, \emph{{Stable and robust sampling strategies for
  compressive imaging}}, {IEEE Trans. Image Process.}, vol.~23, 2014,
  pp.~612--622.

\bibitem[LJ15]{riplikeproperties}
K.~Lee and M.~Junge, \emph{{RIP-like Properties in Subsampled Blind
  Deconvolution}}, arXiv preprint arXiv:1511.06146 (2015).

\bibitem[LLB15]{li2015unified}
Y.~Li, K.~Lee, and Y.~Bresler, \emph{{A Unified Framework for Identifiability
  Analysis in Bilinear Inverse Problems with Applications to Subspace and
  Sparsity Models}}, arXiv preprint arXiv:1501.06120 (2015).

\bibitem[LLB17]{li2017identifiability}
\bysame, \emph{{Identifiability and stability in blind deconvolution under
  minimal assumptions}}, IEEE Trans. Inform. Theory (2017).

\bibitem[LLJB17]{lee2017blind}
K.~Lee, Y.~Li, M.~Junge, and Y.~Bresler, \emph{{Blind recovery of sparse
  signals from subsampled convolution}}, IEEE Trans. Inform. Theory \textbf{63}
  (2017), no.~2, 802--821.

\bibitem[LLSW16]{Li2016}
X.~Li, S.~Ling, T.~Strohmer, and K.~Wei, \emph{{Rapid, Robust, and Reliable
  Blind Deconvolution via Nonconvex Optimization}}, arXiv 1606.04933 (2016),
  1--49.

\bibitem[LS15a]{lingstrohmer}
S.~Ling and T.~Strohmer, \emph{{Blind Deconvolution Meets Blind Demixing:
  Algorithms and Performance Bounds}}, arXiv:1512.07730 (2015).

\bibitem[LS15b]{Ling2015}
S.~Ling and T.~Strohmer, \emph{{Self-calibration and biconvex compressive
  sensing}}, Inverse Problems \textbf{31} (2015), no.~11.

\bibitem[LS17]{Ling:2017}
\bysame, \emph{{Regularized Gradient Descent: A Nonconvex Recipe for Fast Joint
  Blind Deconvolution and Demixing}}, arXiv preprint arXiv:1703.08642 (2017).

\bibitem[LWB13]{Lee2013}
K.~Lee, Y.~Wu, and Y.~Bresler, \emph{{Near optimal compressed sensing of a
  class of sparse low-rank matrices via sparse power factorization}}, arXiv
  preprint arXiv:1312.0525 (2013).

\bibitem[LWDF]{levin2009understandingdeconvolution}
A.~Levin, Y.~Weiss, F.~Durand, and W.~T. Freeman, \emph{Understanding and
  evaluating blind deconvolution algorithms}, IEEE Conference on Computer
  Vision and Pattern Recognition, 2009, pp.~1964--1971.

\bibitem[MT14]{mccoy2014sharp}
M.~B. McCoy and J.~A. Tropp, \emph{{Sharp recovery bounds for convex demixing,
  with applications}}, Found. Comput. Math. \textbf{14} (2014), no.~3,
  503--567.

\bibitem[MT17]{McCoy:cdemix13}
\bysame, \emph{{The achievable performance of convex demixing}}, {ACM Technical
  Report 2017-02}, California Institute of Technology, 2017.

\bibitem[OJF{\etalchar{+}}15]{Oymak2015}
S.~Oymak, A.~Jalali, M.~Fazel, Y.~C. Eldar, and B.~Hassibi,
  \emph{{Simultaneously Structured Models with Application to Sparse and
  Low-rank Matrices}}, IEEE Trans. Inform. Theory \textbf{61} (2015), no.~5.

\bibitem[Pie72]{pietschconjecture}
A.~Pietsch, \emph{{Theorie der Operatorenideale}}, Wissenschaftliche Beitr\"age
  der Friedrich-Schiller-Universit\"at Jena, Friedrich-Schiller-Universit\"at
  Jena, Jena, 1972.

\bibitem[RFP10]{rechtfazel}
B.~{Recht}, M.~{Fazel}, and P.~A. {Parrilo}, \emph{{Guaranteed minimum-rank
  solutions of linear matrix equations via nuclear norm minimization.}}, {SIAM
  Rev.} \textbf{52} (2010), no.~3, 471--501.

\bibitem[ROV14]{Richard2014}
E.~Richard, G.~Obozinski, and J.P. Vert, \emph{{Tight convex relaxations for
  sparse matrix factorization}}, arXiv preprint arXiv:1407.5158 (2014), 1--52.

\bibitem[RSS17]{rauhut2017low}
H.~Rauhut, R.~Schneider, and {\v{Z}}.~Stojanac, \emph{{Low rank tensor recovery
  via iterative hard thresholding}}, Linear Algebra Appl. \textbf{523} (2017),
  220--262.

\bibitem[{Rud}99]{rudelsonisotropic}
M.~{Rudelson}, \emph{{Random vectors in the isotropic position.}}, {J. Funct.
  Anal.} \textbf{164} (1999), no.~1, 60--72.

\bibitem[SCI75]{Stockham1975a}
T.~G. Stockham, T.~M. Cannon, and R.~B. Ingebretsen, \emph{{Blind deconvolution
  through digital signal processing}}, Proc. IEEE \textbf{63} (1975), no.~4,
  678--692.

\bibitem[SJK16]{Stoeger:cosera16}
D.~St\"oger, P.~Jung, and F.~Krahmer, \emph{{Blind deconvolution and Compressed
  Sensing}}, Cosera 2016, 2016.

\bibitem[SJK17]{Stoeger:wsa17}
\bysame, \emph{{Blind Demixing and Deconvolution with Noisy Data: Near-optimal
  Rate}}, 21st International ITG Workshop on Smart Antenna, 2017.

\bibitem[Tal96]{talagrandconcentration}
M.~Talagrand, \emph{{New concentration inequalities in product spaces}},
  Invent. Math. \textbf{126} (1996), no.~3, 505--563.

\bibitem[Tal14]{talagrand2014upper}
\bysame, \emph{{Upper and lower bounds for stochastic processes: modern methods
  and classical problems}}, 2014.

\bibitem[Tro12]{tropp2012user}
J.~A. Tropp, \emph{{User-friendly tail bounds for sums of random matrices}},
  Found. Comput. Math. \textbf{12} (2012), no.~4, 389--434.

\bibitem[{Tro}15a]{tropplecture}
J.~A. {Tropp}, \emph{{An introduction to matrix concentration inequalities.}},
  {Found. Trends Mach. Learn.} \textbf{8} (2015), no.~1-2, 1--230.

\bibitem[Tro15b]{Tropp2015}
J.~A. Tropp, \emph{Sampling theory, a renaissance: Compressive sensing and
  other developments}, ch.~Convex Recovery of a Structured Signal from
  Independent Random Linear Measurements, pp.~67--101, Springer International
  Publishing, 2015.

\bibitem[TV05]{Tse:FundamentalsWirelessCommunication}
D.~Tse and P.~Viswanath, \emph{{Fundamentals of Wireless Communication}},
  Cambridge University Press, New York, NY, USA, 2005.

\bibitem[Ver12]{vershyninlecturenotes}
R.~Vershynin, \emph{{Introduction to the non-asymptotic analysis of random
  matrices}}, Compressed Sensing (Y.~C. Eldar and G.~Kutyniok, eds.), Cambridge
  University Press, 2012, Cambridge Books Online, pp.~210--268.

\bibitem[WBSJ15]{Wunder2015:sparse5G}
G.~Wunder, H.~Boche, T.~Strohmer, and P.~Jung, \emph{{Sparse Signal Processing
  Concepts for Efficient 5G System Design}}, IEEE Access \textbf{3} (2015),
  195--208.

\bibitem[WGMM13]{Wright:cPCA13}
J.~Wright, A.~Ganesh, K.~Min, and Y.~Ma, \emph{{Compressive principal component
  pursuit}}, Inf. Inference \textbf{2} (2013), no.~1, 32--68.

\bibitem[WJPH16]{walk:asilomar16}
P.~Walk, P.~Jung, G.~E. Pfander, and B.~Hassibi, \emph{{Ambiguities of
  Convolutions with Application to Phase Retrieval Problems}}, Asilomar 2016,
  invited paper, 2016.

\bibitem[WP98]{wang1998blind}
X.~Wang and H.~V. Poor, \emph{{Blind equalization and multiuser detection in
  dispersive CDMA channels}}, IEEE Trans. Commun. \textbf{46} (1998), no.~1,
  91--103.

\end{thebibliography}

\appendix

\begin{appendices}
  \section{Construction of the partition
    $ \left\{ \Gamma_p \right\}_{p \in
      \left[P\right]}$}
  \label{constructionpartition}

\subsection{Proof of Lemma \ref{partitionlemma}}

The goal of this section is to prove Lemma \ref{partitionlemma}. Our proof will rely on the following lemma.
\begin{lemma}\label{candeslemma}
Fix $i \in \left[ r \right] $ and let $Q \in \left(0,L \right) $, $ \delta> 0 $ and $ \nu \in \left(0,1\right) $. Assume that
\begin{equation}\label{onepartition}
Q \ge C \frac{ K_{i,\mu}   }{\nu^2}  \log \frac{K_i}{\delta},
\end{equation}
where $ C>0 $ is an absolute constant and let $\hat{\delta}_1, \ldots, \hat{ \delta}_L$ be independent, identically distributed random variables such that
\begin{equation*}
\mathbb{P} \left( \hat{\delta}_1 = 1 \right) = \frac{Q}{L}  \quad \text{and}  \quad  \mathbb{P} \left( \hat{ \delta}_1= 0 \right) = 1-\frac{Q}{L} .
\end{equation*} Then with probability exceeding $1- \delta $ we have that
\begin{equation*}
\Big\Vert \frac{L}{Q}  \sum_{\ell=1}^{L} \hat{ \delta}_{\ell} b_{i,\ell} b^*_{i, \ell} - \Id  \Big\Vert_{2\rightarrow 2} \le \nu.
\end{equation*}
\end{lemma}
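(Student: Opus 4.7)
The plan is to apply the matrix Bernstein inequality (Theorem \ref{matrixbernstein}) to the centered random matrices
\[
X_\ell := \frac{L}{Q}\hat{\delta}_\ell\, b_{i,\ell} b_{i,\ell}^* - b_{i,\ell} b_{i,\ell}^*,\qquad \ell\in[L],
\]
which are independent and have mean zero because $\mathbb{E}\hat{\delta}_\ell = Q/L$. Since $B_i^*B_i=\Id_{K_i}$, i.e.\ $\sum_{\ell=1}^L b_{i,\ell}b_{i,\ell}^* = \Id_{K_i}$, we have $\sum_\ell X_\ell = \tfrac{L}{Q}\sum_\ell \hat{\delta}_\ell b_{i,\ell} b_{i,\ell}^* - \Id_{K_i}$, so bounding its operator norm by $\nu$ is exactly the content of the lemma.

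First, I would estimate the two parameters entering Theorem \ref{matrixbernstein}. For the uniform bound, the triangle inequality and the coherence estimate $\|b_{i,\ell}\|_{\ell_2}^2 \le K_{i,\mu}/L$ (cf.\ the definition \eqref{def:Kmu}) give
\[
\|X_\ell\|_{2\to 2} \le \tfrac{L}{Q}\|b_{i,\ell}\|_{\ell_2}^2 \le \tfrac{K_{i,\mu}}{Q}
\]
almost surely; thus one may take $R \lesssim K_{i,\mu}/Q$ (the boundedness makes the $\psi_\alpha$-Orlicz factor constant, so matrix Chernoff works equally well). For the variance, using $(b_{i,\ell}b_{i,\ell}^*)^2 = \|b_{i,\ell}\|_{\ell_2}^2\, b_{i,\ell}b_{i,\ell}^*$ and $\mathbb{E}\bigl[(\tfrac{L}{Q}\hat{\delta}_\ell - 1)^2\bigr] = \tfrac{L}{Q} - 1$,
\[
\sum_{\ell=1}^L \mathbb{E}[X_\ell X_\ell^*] = \Bigl(\tfrac{L}{Q}-1\Bigr)\sum_{\ell=1}^L \|b_{i,\ell}\|_{\ell_2}^2\, b_{i,\ell}b_{i,\ell}^* \preceq \Bigl(\tfrac{L}{Q}-1\Bigr)\tfrac{K_{i,\mu}}{L}\,\Id_{K_i},
\]
where the last step uses $\|b_{i,\ell}\|_{\ell_2}^2\le K_{i,\mu}/L$ together with $\sum_\ell b_{i,\ell}b_{i,\ell}^* = \Id_{K_i}$. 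Hence $\sigma^2 \lesssim K_{i,\mu}/Q$ (self-adjointness makes the two terms in \eqref{definitionsigma} equal).

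Plugging these bounds into Theorem \ref{matrixbernstein} with $d_1=d_2=K_i$ and choosing the deviation level equal to $\nu$, the subgaussian regime dominates as long as $\nu\lesssim 1$, yielding a failure probability of the form $\exp\bigl(-c\,Q\nu^2/K_{i,\mu} + \log K_i\bigr)$. Forcing this to be at most $\delta$ produces exactly the threshold $Q \gtrsim K_{i,\mu}\,\nu^{-2}\log(K_i/\delta)$ from hypothesis \eqref{onepartition}, completing the argument. There is no substantial obstacle here; the only care needed is to verify that the logarithmic factor in the Bernstein tail produces $\log K_i$ (dimension) rather than $\log L$, which is why the sharper matrix Chernoff for self-adjoint sums — or equivalently Theorem \ref{matrixbernstein} applied with the ambient dimension $K_i$ — is the right tool rather than a crude union bound over $\ell$.
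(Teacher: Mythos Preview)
Your proposal is correct and follows essentially the same route as the paper: the paper decomposes the deviation as $\sum_\ell Y_\ell$ with $Y_\ell=(\hat\delta_\ell-\tfrac{Q}{L})\tfrac{L}{Q}\,b_{i,\ell}b_{i,\ell}^*$ (which is algebraically your $X_\ell$), obtains the same bounds $\sigma^2\le K_{i,\mu}/Q$ and $\|Y_\ell\|_{2\to2}\le K_{i,\mu}/Q$, and applies the bounded matrix Bernstein inequality \cite[Theorem~6.6.1]{tropplecture}. The only cosmetic difference is that the paper invokes the bounded version of matrix Bernstein directly rather than Theorem~\ref{matrixbernstein}, exactly as you anticipate in your parenthetical remark.
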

A proof of this lemma can be obtained using arguments contained in the proof of Theorem 1.2 in \cite{candes2007sparsity}. For the sake of completeness we will give a proof below (relying on different techniques). Our proof of Lemma \ref{partitionlemma}  will use essentially the same ideas as in $\cite{ARR2012} $, but has been slightly refined.
\begin{proof}[Proof of Lemma \ref{partitionlemma}] 
Let $ \hat{\delta}_1, \ldots, \hat{\delta}_k$ be independent, uniformly distributed random variables which take values in $ \lbrack P \rbrack $. For $ p \in \lbrack P \rbrack $ we define
\begin{equation*}
 \Gamma_p = \left\{ \ell  \in \lbrack L \rbrack: \ \hat{\delta}_{\ell}= p   \right\}.
\end{equation*}
Thus, $  \left\{ \Gamma_p \right\}_{p \in \lbrack P \rbrack} $ is a partition of $ \lbrack L \rbrack$. To finish the proof it is enough to show that with positive probability  the partition $ \left\{ \Gamma_p \right\}_{p \in \lbrack P \rbrack} $ has the required properties, i.e., for all $ p \in \left[ P \right]  $, (\ref{partitionequation}) holds and $ \frac{1}{2} Q \le \vert \Gamma_p \vert \le  \frac{3}{2} Q $ . For $ i \in \lbrack r \rbrack $ and $ p \in \lbrack P \rbrack $ we define the event
\begin{equation*}
A_{i,p} = \left\{ (\ref{partitionequation}) \text{ fails}  \right\}  = \left\{ \Big\Vert  \frac{L}{Q} \sum_{ \ell \in \Gamma_p}  b_{i,\ell} b^*_{i,\ell}  - \Id   \Big\Vert_{2 \rightarrow 2} > \nu  \right\} .
\end{equation*}
Set $ \delta=  \frac{1}{3rP } $ and note that $ \log (\frac{K}{\delta})  = \log \left(3 rP K \right) \lesssim  \log \left( \max \left\{ r; P; K \right\} \right) $. Thus, by Lemma \ref{candeslemma} we get that $ \mathbb{P} \left(A_{i,p}\right) \le \frac{1}{3rP} $, if the constant in inequality (\ref{Qminimalsize}) is chosen large enough. By a union bound over all choices of $i$ and $p$, (\ref{partitionequation}) follows with probability at least $\frac{1}{3}$.
It remains to control the size of the sets $ \lbrace  \Gamma_p  \rbrace_{p \in \lbrack P \rbrack }$. By the Bernstein inequality for bounded random variables (e.g., \cite[Corollary 7.31]{FR2013}) we obtain that for fixed $ p \in \lbrack P \rbrack $ one has $ \frac{Q}{2} \le \vert \Gamma_p \vert \le \frac{3Q}{2} $ with probability at least $ 1- 2 \exp \left( \frac{-Q}{10} \right) \ge 1 - \frac{1}{2P} $, where the last inequality follows from (\ref{Qminimalsize}), if the constant $C$ is chosen large enough. Thus, by a another union bound we observe
\begin{equation*}
\mathbb{P} \left(  \frac{Q}{2} \le \vert \Gamma_p \vert \le \frac{3Q}{2} \text{ for all } p \in \lbrack P \rbrack \right)  >  \frac{1}{2}.
\end{equation*}
Thus with positive probability the partition $ \left\{ \Gamma_p \right\}_{p \in \lbrack P \rbrack} $ has the required properties. In particular, this implies the existence of a partition $ \left\{ \Gamma_p \right\}_{p \in \lbrack P \rbrack} $ with the properties stated in Lemma \ref{partitionlemma}.

\end{proof}

\subsection{Proof of Lemma \ref{candeslemma}}
As already mentioned before this lemma can be proven using arguments from the proof Theorem 1.2 in \cite{candes2007sparsity}.  The arguments in this article are based on Talagrand's inequality \cite{talagrandconcentration} and Rudelson's Lemma \cite{rudelsonisotropic}. Recent technical advances (see \cite{tropplecture}) allow us to give a simplified proof.
\begin{proof}
The goal is to use the matrix Bernstein inequality to estimate the spectral norm of
\begin{equation*}
Y = \frac{L}{Q} \sum_{\ell=1}^{L} \hat{\delta}_{\ell} b_{i, \ell} b^*_{i,\ell} - \Id. 
\end{equation*}
We will decompose $Y$ into a sum of independent random matrices with mean zero. Thus, by setting
\begin{equation*}
Y_{\ell} = \left( \hat{\delta}_{\ell} - \frac{Q}{L}   \right) \frac{L}{Q}  b_{i,\ell} b^*_{i,\ell }
\end{equation*}
we obtain $  Y = \sum_{\ell=1}^{L} Y_{\ell} $ and $ \mathbb{E} Y_{\ell} = 0 $ for all $ \ell \in \lbrack L \rbrack $ due to $ \Id =\sum_{\ell=1}^{L} b_{i,\ell} b^*_{i,\ell}  $. To apply the matrix Bernstein inequality we need first to obtain an upper bound for $ \Vert \mathbb{E} Y^2 \Vert_{2 \rightarrow 2} $. For that purpose note that
\begin{align*}
\mathbb{E} Y^2 = \sum_{\ell=1}^{L} \mathbb{E} Y^2_{\ell} = \sum_{\ell=1}^{L} \mathbb{E} \left[ \left( \hat{\delta}_{\ell} - \frac{Q}{L}   \right)^2 \right] \frac{L^2}{Q^2} \Vert b_{i,\ell} \Vert^2_{\ell_2}    b_{i,\ell} b^*_{i,\ell } 
\end{align*}
Observe that $ \mathbb{E} \left[ \left( \hat{\delta}_{\ell} - \frac{Q}{L}   \right)^2 \right] = \frac{Q \left(L - Q\right)}{L^2}  $, which implies
\begin{equation*}
\mathbb{E} Y^2 =  \frac{L-Q}{L} \sum_{\ell=1}^{L}  \frac{ L \Vert b_{i, \ell} \Vert^2_{\ell_2}}{Q}   b_{i,\ell}  b^*_{i,\ell} 
\end{equation*}
Thus, by $ \sum_{\ell=1}^{L} b_{i,\ell} b^*_{i,\ell} = \Id $ and the definition of $K_{i,\mu} $ we get
\begin{align*}
\Vert  \mathbb{E} Y^2 \Vert_{2 \rightarrow 2} \le \frac{L-Q}{L} \left(  \underset{\ell \in \lbrack L \rbrack}{\max}~ L\Vert b_{i, \ell} \Vert^2_{\ell_2}  \right)  \Big\Vert   \sum_{\ell=1}^{L}    b_{i,\ell}  b^*_{i,\ell}  \Big\Vert_{2 \rightarrow 2} \le  \frac{ K_{i,\mu} }{Q}.
\end{align*}
Furthermore, for all $ \ell \in \lbrack L \rbrack $ we have
\begin{align*}
\Vert Y_{\ell} \Vert_{2 \rightarrow 2} \le \max \left\{  \frac{Q}{L}; \frac{L-Q}{L}  \right\}  \frac{L}{Q}  \Vert b_{i, \ell} \Vert^2_{\ell_2}  \le  \frac{L}{Q} \Vert b_{i, \ell} \Vert^2_{\ell_2}   \le \frac{ K_{i,\mu} }{Q} \quad \text{almost surely.}
\end{align*}
Thus, we can apply the matrix Bernstein inequality in the version of \cite[Theorem 6.6.1]{tropplecture} to obtain
\begin{align*}
\mathbb{P} \left(  \Vert Y \Vert_{2 \rightarrow 2} \ge \nu \right) &\le K \exp \left(  \frac{ - \nu^2 /2}{ \left(1 + \frac{\nu}{3}\right)  K_{i,\mu}  / Q  }\right) \overset{(\ref{onepartition})}{\le} K \exp \left( \frac{-C  \log \left( K/\delta\right)}{2 \left(1 + \frac{\nu}{3}\right) }  \right).
\end{align*}
As we have $ 0 < \nu < 1 $ this yields the claim if the constant $C>0$ in (\ref{onepartition}) is chosen large enough.

\end{proof}

\section{Circular-symmetric Complex Normal Random Variables}\label{sec:complexgaussian}
In this section we will recall some useful facts concerning random variables which have a circular--symmetric complex normal distribution $\mathcal{CN}(0,\sigma^2)$ with zero mean and variance $\sigma^2$. This means that their real and imaginary parts are uncorrelated jointly Gaussian with
zero mean and variance $\sigma^2/2$ (and are therefore independent). For more details concerning this probability distribution we refer to \cite[Section A.1.3]{Tse:FundamentalsWirelessCommunication}.
The following two well-known lemmas are concerned with two useful identities. A proof of them can be found for example in \cite[Lemma 11 and 12]{ARR2012}.
\begin{lemma}\label{usefullemma3}
	Assume that $c \in \mathbb{C}^n$ is a random vector with independent
	entries
	$c_i\sim\mathcal{CN}(0,1)$. Then we have
	\begin{equation*}
	\mathbb{E} \left[ \left( \Id - c c^* \right)^2 \right]= n \Id.
	\end{equation*}
\end{lemma}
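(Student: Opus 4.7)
The plan is to expand the square $(\Id - cc^*)^2$, compute the expectation of each term individually, and check that the off-diagonal contributions cancel while the diagonal contributions give $n$ on each diagonal entry. The only fact about the distribution that really matters is the fourth moment $\mathbb{E}[|c_i|^4] = 2$, which holds for a circular-symmetric complex Gaussian of unit variance (since $|c_i|^2$ is $\mathrm{Exp}(1)$-distributed).

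First I would use the identity $(cc^*)(cc^*) = c(c^*c)c^* = \|c\|_{\ell_2}^2\, cc^*$ to write
\[
(\Id - cc^*)^2 = \Id - 2cc^* + \|c\|_{\ell_2}^2\, cc^*.
\]
Taking expectations and using $\mathbb{E}[cc^*] = \Id$ (which follows from independence of the entries and $\mathbb{E}[|c_i|^2]=1$), it remains to compute $\mathbb{E}[\|c\|_{\ell_2}^2\, cc^*]$.

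The main step is an entry-wise computation of $M := \mathbb{E}[\|c\|_{\ell_2}^2\, cc^*]$. For $i\neq j$, the $(i,j)$ entry is $\sum_k \mathbb{E}[|c_k|^2 c_i \overline{c_j}]$; each summand factors (using independence) into a product that contains $\mathbb{E}[c_i]=0$ or $\mathbb{E}[\overline{c_j}]=0$, since at least one of $c_i$ or $\overline{c_j}$ appears unpaired (this uses that $\mathbb{E}[c_i^2]=0$ and $\mathbb{E}[c_i |c_i|^2]=0$ for a circular-symmetric complex Gaussian). For $i=j$, the diagonal entry is $\sum_k \mathbb{E}[|c_k|^2 |c_i|^2]$, which equals $(n-1)\cdot 1 + \mathbb{E}[|c_i|^4] = (n-1) + 2 = n+1$ by independence together with the fourth-moment identity. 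Hence $M = (n+1)\Id$.

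Combining these pieces yields
\[
\mathbb{E}\bigl[(\Id - cc^*)^2\bigr] = \Id - 2\Id + (n+1)\Id = n\Id,
\]
as claimed. The only subtle point is the vanishing of the off-diagonal contributions, for which the circular symmetry of the complex Gaussian (not merely mean-zero and unit-variance) is essential; this is where one would pause to verify the identities $\mathbb{E}[c_i^2]=0$ and $\mathbb{E}[c_i|c_i|^2]=0$ if one were writing out full details.
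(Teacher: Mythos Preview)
Your proof is correct and is the standard direct computation; the paper itself does not give a proof of this lemma but merely cites \cite[Lemma 11 and 12]{ARR2012}, so there is no alternative argument to compare against. One small remark: your closing comment slightly misplaces the role of circular symmetry---for the off-diagonal entries of $\mathbb{E}[\|c\|_{\ell_2}^2\,cc^*]$, mean-zero and independence alone already force vanishing (as your own case analysis shows, one of $\mathbb{E}[c_i]$ or $\mathbb{E}[\overline{c_j}]$ always factors out), whereas the place where the specific distribution enters is the diagonal computation via $\mathbb{E}[|c_i|^4]=2$.
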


\begin{lemma}\label{usefullemma2}
	Let $q \in \mathbb{C}^n$ be any deterministic vector. Furthermore,
	assume that $c \in \mathbb{C}^n$ is a random vector with independent
	entries
	$c_i\sim\mathcal{CN}(0,1)$.
	Then we have
	\begin{align*}
	\mathbb{E} \left[ \left( cc^* - \Id \right)    q q^* \left( cc^* - \Id \right)  \right] &= \Vert q \Vert^2_{\ell_2} \Id.
	\end{align*}  
\end{lemma}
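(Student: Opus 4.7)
My plan is to expand the product and compute the four resulting expectations separately, with the only nontrivial piece being the fourth-order moment $\mathbb{E}\left[cc^* q q^* cc^*\right]$.

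First I would write
\begin{equation*}
\left(cc^* - \Id\right) qq^* \left(cc^* - \Id\right) = cc^* qq^* cc^* - cc^* qq^* - qq^* cc^* + qq^*.
\end{equation*}
Since $c$ is circular-symmetric standard complex normal we have $\mathbb{E}[cc^*] = \Id$, so the three lower-order terms in expectation collectively contribute $qq^* - qq^* - qq^* + qq^* = -qq^*$. Thus the identity reduces to showing
\begin{equation*}
\mathbb{E}\left[cc^* qq^* cc^*\right] = \|q\|_{\ell_2}^2 \Id + qq^*.
\end{equation*}

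The key step is computing the latter expectation entrywise. Observing $(cc^* qq^* cc^*)_{ij} = c_i \bar c_j \, |q^* c|^2$ and expanding $|q^* c|^2 = \sum_{k,l} \bar q_k q_l \, c_k \bar c_l$, the whole computation reduces to evaluating $\mathbb{E}[c_i \bar c_j c_k \bar c_l]$. For circular-symmetric centered complex Gaussians, all second moments of the form $\mathbb{E}[c_a c_b]$ and $\mathbb{E}[\bar c_a \bar c_b]$ vanish, so Isserlis'/Wick's theorem yields the clean formula
\begin{equation*}
\mathbb{E}\left[c_i \bar c_j c_k \bar c_l\right] = \delta_{ij}\delta_{kl} + \delta_{il}\delta_{kj}.
\end{equation*}
Substituting back and summing over $k,l$ gives $\mathbb{E}[c_i \bar c_j |q^*c|^2] = \delta_{ij}\|q\|_{\ell_2}^2 + q_i \bar q_j$, which is exactly the $(i,j)$-entry of $\|q\|_{\ell_2}^2 \Id + qq^*$.

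There is no real obstacle here; the only point that deserves care is invoking the complex (rather than real) Wick formula, since the vanishing of $\mathbb{E}[c_a c_b]$ is precisely what eliminates the third Gaussian pairing that would otherwise appear. Combining this with the preliminary simplification immediately yields $\|q\|_{\ell_2}^2 \Id + qq^* - qq^* = \|q\|_{\ell_2}^2 \Id$, as claimed. Alternatively, one can avoid Wick explicitly by decomposing $q = q_\parallel + q_\perp$ relative to any orthonormal basis and using the fact that $c$ and $Uc$ have the same distribution for any unitary $U$, reducing to a direct calculation in the basis where $q = \|q\|_{\ell_2} e_1$; I would present the Wick-based version because it is shorter.
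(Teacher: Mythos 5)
Your proof is correct; the paper itself does not prove this lemma but refers to \cite{ARR2012} (their Lemmas 11 and 12), where the argument likewise reduces to computing the fourth moments $\mathbb{E}[c_i\bar c_j c_k \bar c_l]$ of a circular-symmetric complex Gaussian vector. Your Wick-calculus derivation is a clean instance of that standard computation, and the preliminary reduction via $\mathbb{E}[cc^*]=\Id$ is exactly the right way to isolate the one nontrivial term.
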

The following lemma summarizes well-known facts regarding the tail decay of certain quantities which involve circular-symmetric normal random variables. For the sake of completeness we include a proof.
\begin{lemma}\label{lemma:orliczestimates}
	Suppose that $c\in \mathbb{C}^N $ is a random vector 
	with independent entries $c_i\sim\mathcal{CN}(0,1)$.
	Let $p,q \in \mathbb{C}^N $ be arbitrary. Then we have the following
	inequalities:
	\begin{align}
	\Big\Vert \Vert c \Vert_{\ell_2} \Big\Vert_{\psi_2} & \lesssim \sqrt{N} \label{ineq:complexgaussian1}\\
	\Big\Vert \vert c^* q \vert \Big\Vert_{\psi_2} & \lesssim  \Vert q \Vert_{\ell_2}  \label{ineq:complexgaussian2}\\
	\Big\Vert \Vert \left( c c^* - \Id \right) q \Vert_{\ell_2} \Big\Vert_{\psi_1} &\lesssim  \sqrt{N} \Vert q \Vert_{\ell_2}  \label{ineq:complexgaussian3}\\
	\Big\Vert p^* \left(  cc^* - \Id \right) q \Big\Vert_{\psi_1} &\lesssim \Vert p \Vert_{\ell_2} \Vert q \Vert_{\ell_2} \label{ineq:complexgaussian4}
	\end{align}
\end{lemma}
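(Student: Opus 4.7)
The plan is to verify the four inequalities by reducing to elementary concentration properties of complex Gaussian random variables and then combining via the Hölder inequality \eqref{ineq:Hoelderinequality} for Orlicz norms. Inequalities (\ref{ineq:complexgaussian1}) and (\ref{ineq:complexgaussian2}) are the base cases, and (\ref{ineq:complexgaussian3}) and (\ref{ineq:complexgaussian4}) follow formally from them.

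For (\ref{ineq:complexgaussian1}), I would exploit the fact that for $c_i\sim\mathcal{CN}(0,1)$ the random variable $|c_i|^2$ is the sum of two squared independent $\mathcal{N}(0,1/2)$'s and hence exponentially distributed with mean $1$. Therefore, for $t^2>1$,
\[
\mathbb{E}\bigl[\exp(\|c\|_{\ell_2}^2/t^2)\bigr]=\prod_{i=1}^N\mathbb{E}\bigl[\exp(|c_i|^2/t^2)\bigr]=(1-1/t^2)^{-N}.
\]
Choosing $t^2$ of order $N$ makes this $\le 2$, which by definition of $\psi_2$ norm gives $\|\|c\|_{\ell_2}\|_{\psi_2}\lesssim\sqrt{N}$. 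For (\ref{ineq:complexgaussian2}), I would use that the rotation invariance of $\mathcal{CN}(0,\mathrm{Id})$ implies $c^*q\sim\mathcal{CN}(0,\|q\|_{\ell_2}^2)$, so $|c^*q|\overset{d}{=}\|q\|_{\ell_2}|Z|$ with $Z\sim\mathcal{CN}(0,1)$. Since $|Z|^2$ is standard exponential, $\mathbb{E}[\exp(|Z|^2/t^2)]=(1-1/t^2)^{-1}\le 2$ for any sufficiently large universal constant $t$, giving $\||c^*q|\|_{\psi_2}\lesssim\|q\|_{\ell_2}$.

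For (\ref{ineq:complexgaussian3}), I would decompose $(cc^*-\mathrm{Id})q=(c^*q)c-q$. By the triangle inequality for $\|\cdot\|_{\ell_2}$ and the Orlicz triangle inequality,
\[
\bigl\|\,\|(cc^*-\mathrm{Id})q\|_{\ell_2}\bigr\|_{\psi_1}\le\bigl\|\,|c^*q|\cdot\|c\|_{\ell_2}\bigr\|_{\psi_1}+\|q\|_{\ell_2},
\]
where the deterministic $\|q\|_{\ell_2}$ has finite $\psi_1$ norm up to an absolute constant. Applying the Hölder-type inequality \eqref{ineq:Hoelderinequality} together with (\ref{ineq:complexgaussian1}) and (\ref{ineq:complexgaussian2}) bounds the first term by $\lesssim\sqrt{N}\|q\|_{\ell_2}$, which dominates the second term (as $\sqrt{N}\ge 1$). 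For (\ref{ineq:complexgaussian4}), I would similarly write $p^*(cc^*-\mathrm{Id})q=(p^*c)(c^*q)-p^*q$, and use (\ref{ineq:complexgaussian2}) twice together with \eqref{ineq:Hoelderinequality} to obtain $\|(p^*c)(c^*q)\|_{\psi_1}\lesssim\|p\|_{\ell_2}\|q\|_{\ell_2}$, with the deterministic correction $|p^*q|\le\|p\|_{\ell_2}\|q\|_{\ell_2}$ absorbed into the constant.

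All steps are essentially routine once the two base estimates are established. The only mildly delicate point is the MGF computation in (\ref{ineq:complexgaussian1}): one must be careful that the product of moment generating functions remains bounded by $2$ after the right choice of $t$, which is why the factor $\sqrt{N}$ (and not a constant) appears. No genuine obstacle is anticipated.
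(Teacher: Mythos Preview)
Your proposal is correct and essentially identical to the paper's proof: the same decomposition $(cc^*-\Id)q=(c^*q)c-q$ for \eqref{ineq:complexgaussian3}, the same use of H\"older's inequality \eqref{ineq:Hoelderinequality} combined with \eqref{ineq:complexgaussian1} and \eqref{ineq:complexgaussian2}, and the same observation that $c^*q\sim\mathcal{CN}(0,\|q\|_{\ell_2}^2)$ for \eqref{ineq:complexgaussian2}. The only cosmetic difference is in \eqref{ineq:complexgaussian1}: you compute the MGF of $\|c\|_{\ell_2}^2$ directly, whereas the paper invokes the general relation $\|X\|_{\psi_2}^2\lesssim\|X^2\|_{\psi_1}$ and then the triangle inequality $\|\|c\|_{\ell_2}^2\|_{\psi_1}\le\sum_i\||c_i|^2\|_{\psi_1}$; both arrive at the same bound.
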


\begin{proof}
	In order to prove (\ref{ineq:complexgaussian1}) note that
	\begin{align*}
	\Big\Vert \Vert c \Vert_{\ell_2} \Big\Vert^2_{\psi_2}\lesssim \Big\Vert \Vert c \Vert^2_{\ell_2} \Big\Vert_{\psi_1} \le \sum_{i=1}^{N} \Big\Vert \vert c_i \vert^2 \Big\Vert_{\psi_1} \lesssim N.
	\end{align*}
	The first inequality follows from \cite[Lemma 5.14]{vershyninlecturenotes} and for the second one we used the triangle inequality. In order to prove (\ref{ineq:complexgaussian2}) it is enough to note that $ c^*q \sim \mathcal{CN} \left( 0, \Vert q \Vert^2_{\ell_2}   \right)$. (\ref{ineq:complexgaussian3}) follows from the inequality chain
	\begin{align*}
	\Big\Vert \Vert \left( c c^* - \Id \right) q \Vert_{\ell_2} \Big\Vert_{\psi_1} &\le \Big\Vert \Vert c \Vert_{\ell_2} \vert c^* q \vert + \Vert q \Vert_{\ell_2} \Big\Vert_{\psi_1} \le  \Big\Vert \Vert c \Vert_{\ell_2}  \Big\Vert_{\psi_2} \Big\Vert \vert c^* q \vert  \Big\Vert_{\psi_2} + \Big\Vert  \Vert q \Vert_{\ell_2} \Big\Vert_{\psi_1}\\
	& \lesssim \sqrt{N} \Vert q \Vert_{\ell_2} + \Vert q \Vert_{\ell_2} \lesssim \sqrt{N} \Vert q \Vert_{\ell_2}.
	\end{align*}
	In the second inequality we have used the Hoelder inequality (\ref{ineq:Hoelderinequality}) and the second line follows directly from (\ref{ineq:complexgaussian1}) and (\ref{ineq:complexgaussian2}). In a similar way one proves (\ref{ineq:complexgaussian4}).
\end{proof}
We will also need the following standard fact, which follows from a union bound.
\begin{lemma}\label{conditioninglemma}
	Let $\omega, L \ge 1$ and $ \Gamma $ a finite set. For all $ i \in \lbrack r \rbrack$ let $ m_i \in \mathbb{C}^{N_i}  $ such that $\Vert m_i \Vert_{\ell_2}=1 $. Furthermore, assume that $c_{i,j} \in \mathbb{C}^{N_i}$,  $ i \in \lbrack r \rbrack $, $ j \in \Gamma$, are independent random vectors with i.i.d.~entries distributed according to $ \mathcal{CN} \left(0,1\right) $. Then with probability at least $1 - \mathcal{O} \left( L^{-\omega}\right) $ one has
	\begin{align*}
	\underset{i \in \lbrack r \rbrack, j \in \Gamma}{\max} \Vert c_{i,j} \Vert_{\ell_2} &\lesssim_{\omega} \max \left\{\sqrt{N  \log \left(r \vert \Gamma \vert \right) }; \ \sqrt{N  \log L}    \right\}    \\
	\underset{i \in \lbrack r \rbrack, j \in \Gamma}{\max} \vert c^*_{i,j} m_i \vert &\lesssim_{\omega} \max \left\{  \sqrt{ \log \left(r \vert \Gamma \vert \right)};  \sqrt{\log L} \right\} .
	\end{align*}
\end{lemma}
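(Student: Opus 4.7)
The proof proceeds by combining the Orlicz-norm tail estimates of Lemma \ref{lemma:orliczestimates} with a union bound over the $r|\Gamma|$ pairs $(i,j)$.

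For the first bound, fix $i \in [r]$ and $j \in \Gamma$. By \eqref{ineq:complexgaussian1} we have $\bigl\Vert \Vert c_{i,j}\Vert_{\ell_2}\bigr\Vert_{\psi_2}\lesssim \sqrt{N_i}\le\sqrt{N}$. The standard sub-Gaussian tail bound implied by the $\psi_2$-norm (see, e.g., \cite[Lemma 5.5]{vershyninlecturenotes}) yields
\[
\mathbb{P}\bigl(\Vert c_{i,j}\Vert_{\ell_2} > t\bigr) \le 2\exp\bigl(-c\, t^2/N\bigr)
\]
for a universal constant $c>0$. A union bound over all $r|\Gamma|$ pairs then gives
\[
\mathbb{P}\Bigl(\max_{i,j}\Vert c_{i,j}\Vert_{\ell_2} > t\Bigr) \le 2\,r|\Gamma|\,\exp\bigl(-c\,t^2/N\bigr).
\]
Choosing $t^2 = C_\omega\, N\cdot\max\{\log(r|\Gamma|),\log L\}$ with $C_\omega$ large enough (depending only on $\omega$) the right-hand side is bounded by $\mathcal{O}(L^{-\omega})$, which is the claim.

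For the second bound the argument is identical, using instead \eqref{ineq:complexgaussian2}: since $\Vert m_i\Vert_{\ell_2}=1$, one has $\bigl\Vert |c_{i,j}^*m_i|\bigr\Vert_{\psi_2}\lesssim 1$, so
\[
\mathbb{P}\bigl(|c_{i,j}^* m_i| > t\bigr) \le 2\exp(-c\,t^2),
\]
and the union bound over $r|\Gamma|$ pairs together with the choice $t^2 = C_\omega \max\{\log(r|\Gamma|),\log L\}$ yields the desired estimate with probability at least $1-\mathcal{O}(L^{-\omega})$. No step here presents any real obstacle; the only thing to keep track of is that the single sub-Gaussian deviation threshold must absorb both the $\log(r|\Gamma|)$ from the union bound and the $\omega\log L$ required for the final failure probability, which is exactly the reason a maximum of the two logarithms appears in the statement.
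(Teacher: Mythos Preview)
Your proof is correct and is precisely the argument the paper has in mind: the paper does not spell out a proof but merely states that the lemma ``follows from a union bound,'' which is exactly what you carry out using the $\psi_2$-norm estimates of Lemma~\ref{lemma:orliczestimates}.
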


We conclude this section with a proof of Corollary \ref{gaussianconcentration}.
\begin{proof}[Proof of Corollary \ref{gaussianconcentration}]
	Observe that 
	\begin{equation*}
	\big\Vert Z \big\Vert_{2\rightarrow 2} \le \big\Vert  \sum_{i=1}^{n} \text{Re} \left( \gamma_i \right) X_i  \big\Vert_{2 \rightarrow 2} + \big\Vert  \sum_{i=1}^{n} \text{Im} \left( \gamma_i \right) X_i   \big\Vert_{2 \rightarrow 2} .   
	\end{equation*}
	By Theorem \cite[Theorem 4.1.1]{Tropp2015} we obtain that with probability at least $ 1- \exp \left(-t \right) $
	\begin{equation*}
	\big\Vert  \sum_{i=1}^{n} \text{Re} \left( \gamma_i \right) X_i  \big\Vert_{2 \rightarrow 2} \le \frac{1}{\sqrt{2}} \sigma \sqrt{ t+ \log \left( d_1 + d_2 \right) }
	\end{equation*}
	and with probability at least $  1- \exp \left(-t \right) $
	\begin{equation*}
	\big\Vert  \sum_{i=1}^{n} \text{Im} \left( \gamma_i \right) X_i  \big\Vert_{2 \rightarrow 2} \le \frac{1}{\sqrt{2}} \sigma \sqrt{ t+ \log \left( d_1 + d_2 \right) }.
	\end{equation*}
	Combining these facts yields the result.
\end{proof}

\section{Proof of Lemma \ref{splittinglemma}}\label{appendixsplittinglemma}
	For $ i \in \lbrack r \rbrack $ let $\mathcal{N}_i$ be an $\frac{\varepsilon}{2} $-cover of $B \left(0,1\right) \subset \mathbb{C}^{K_i} $ with respect to the $\Binorm{\cdot}$-norm. Furthermore, let $\mathcal{O} $ be an $\frac{\varepsilon}{2 \sqrt{K_{\mu} }}$-cover of $ B \left(0,1\right) \subset \mathbb{R}^r $ with respect to the $ \Vert \cdot \Vert_{\ell_2}$-norm. We will show that any $Z =  \left( u_1 m_1^*, \ldots, u_r m_r^* \right) \in B^m $ can be approximated by $ Y = \left( \sigma_1 y_1 m_1^*, \ldots, \sigma_r y_r m^*_r  \right) $, where $ \sigma= \left( \sigma_1, \ldots, \sigma_r \right) \in \mathcal{O} $ and $ y_i \in \mathcal{N}_i $. This proves the claim, as the number of such $Y$'s is bounded by the right-hand side. For that choose $\sigma= \left( \sigma_1, \ldots, \sigma_r \right) \in \mathcal{O}  $ such that 
	\begin{equation}\label{ineq:inline501}
	\sqrt{\sum_{i=1}^{r} \left( \Vert u_i \Vert_{\ell_2} - \sigma_i \right)^2 }  \le \frac{\varepsilon}{2 \sqrt{ K_{\mu} }} 
	\end{equation}
	and  $y_i \in \mathcal{N}_i$ such that
	\begin{equation}\label{ineq:inline500}
	\Binormbig{ \frac{1}{\Vert u_i \Vert_{\ell_2}} u_i - y_i } \le \frac{\varepsilon}{2}.
	\end{equation}
	Then one has for $\hat{Y}= \left( \Vert u_1  \Vert_{\ell_2} y_1 m^*_1, \ldots, \Vert u_r \Vert_{\ell_2} y_r m^*_r \right) $	
	\begin{align*}
	\Big\Vert Z- \hat{Y} \Big\Vert^2_{B} & \le \sum_{i=1}^{r} \Big\Vert u_i m_i^*- \Vert u_i \Vert_{\ell_2} y_i m^*_i \Big\Vert^2_{B_{i}} = \sum_{i=1}^{r} \Big\Vert u_i- \Vert u_i \Vert_{\ell_2} y_i \Big\Vert^2_{B_{i}}\\ 
	&\le \frac{\varepsilon^2}{4} \sum_{i=1}^{r} \Vert u_i \Vert_{\ell_2}^2 = \frac{\varepsilon^2}{4} \Vert Z \Vert^2_F  \le \frac{\varepsilon^2}{4}.
	\end{align*}
	The first inequality follows from (\ref{Bnorminequality3}) and the next equality follows from 
	\begin{equation*}
	\Vert m_i \left( u_i - \Vert u_i \Vert_{\ell_2} y_i \right)^* b_{i,\ell} \Vert_{\ell_2} = \vert \left( u_i - \Vert u_i \Vert_{\ell_2} y_i \right)^* b_{i, \ell}  \vert 
	\end{equation*}
	which is due to $ \Vert m_i \Vert_{\ell_2}=1$. The subsequent inequality is a consequence of (\ref{ineq:inline500}). The second equality again follows from $ \Vert m_i \Vert_{\ell_2}=1 $ for all $ i \in \lbrack r \rbrack $. Similarly,
	\begin{align*}
	\Vert \hat{Y} - Y  \Vert_{B} & \le \sqrt{ \sum_{i=1}^{r} \Big\Vert \left( \Vert u_i \Vert_{\ell_2} - \sigma_i  \right) y_i m^*_i \Big\Vert^2_{B_{i}}  }   = \sqrt{ \sum_{i=1}^{r} \left( \Vert u_i \Vert_{\ell_2} - \sigma_i \right)^2 \Vert y_i \Vert_{B_{i}}^2 }\\
	& \le   \sqrt{  K_{\mu}  \sum_{i=1}^{r} \left(  \Vert u_i \Vert_{\ell_2} - \sigma_i \right)^2 }
	\le \frac{\varepsilon}{2}.
	\end{align*}
	Here the second inequality follows from 
	\begin{equation*}
	\Binorm{y_i} = \sqrt{L} \ \underset{\ell \in \lbrack L \rbrack}{\max} \ \vert y^*_i b_{i, \ell} \vert \le \sqrt{L} \Vert y_i \Vert_{\ell_2} \underset{\ell \in \lbrack L \rbrack}{\max} \Vert b_{i,\ell} \Vert_{\ell_2} \le \sqrt{  K_{\mu}  }
	\end{equation*}
	and the last inequality is a consequence of (\ref{ineq:inline501}). Combining the two inequalities gives $\Vert Z-Y \Vert_{B} \le \varepsilon $ which finishes the proof. \qed

\end{appendices}
\end{document}